


 \documentclass[preprint,1p,times,numbers,sort&compress]{elsarticle}


\usepackage{amssymb}
 \usepackage{amsthm}
 \usepackage{amsmath}
 \usepackage{natbib}
 \usepackage{algorithmic}
 \usepackage{algorithm}
 \usepackage{makecell}
 \usepackage{multirow}
 \usepackage{graphicx}
 \usepackage{float}
 \usepackage{subfig}
 \usepackage{booktabs}
 \usepackage{diagbox} 
 \usepackage{hyperref}
 \usepackage{url}
 \urlstyle{same}
\hypersetup{hidelinks,
	colorlinks=false,
	allcolors=black,
	pdfstartview=Fit,
	breaklinks=true
}

\usepackage{caption}
\captionsetup[figure]{name={Fig.},labelsep=period}
\usepackage[font=small,labelfont=bf,labelsep=none]{caption}
\captionsetup[table]{labelformat=simple,labelsep=newline,singlelinecheck=off}
\usepackage{color,xcolor}



\begin{document}

\begin{frontmatter}



\title{Signed Graph Learning with Hidden Nodes}


\author[1]{Rong Ye}
\ead{1212037@mail.dhu.edu.cn}

\author[1]{Xue-Qin Jiang \corref{cor1}}
\ead{xqjiang@dhu.edu.cn}

\author[2,3]{Hui Feng \corref{cor1}}
\ead{hfeng@fudan.edu.cn}
\author[4]{Jian Wang}
\ead{jian\_wang@fudan.edu.cn}
\author[1]{Runhe Qiu}
\ead{qiurh@dhu.edu.cn}
\cortext[cor1]{Corresponding author}

\address[1]{College of Information Science and Technology, Donghua University, Shanghai 201620, China}
\address[2]{School of Information Science and Technology, Fudan University, Shanghai 200433, China}
\address[3]{State Key Laboratory of Integrated Chips and Systems, Fudan University, Shanghai 200433, China}
\address[4]{School of Data Science, Fudan University, Shanghai 200433, China}
%
%

\begin{abstract}
Signed graphs, which are characterized by both positive and negative edge weights, have recently attracted significant attention in the field of graph signal processing (GSP). Existing works on signed graph learning typically assume that all graph nodes are available.
However, in some specific applications, only a subset of nodes can be observed while the remaining nodes stay hidden. To address this challenge, we propose a novel method for identifying signed graph that accounts for hidden nodes, termed \textit{signed graph learning with hidden nodes under column-sparsity regularization} (SGL-HNCS). 
Our method is based on the assumption that graph signals are smooth over signed graphs, i.e., signal values of two nodes connected by positive (negative) edges are similar (dissimilar). Rooted in this prior assumption, the topology inference of a signed graph is formulated as a constrained optimization problem with column-sparsity regularization, where the goal is to reconstruct the signed graph Laplacian matrix without disregarding the influence of hidden nodes. We solve the constrained optimization problem using a tailored block coordinate descent (BCD) approach. Experimental results using synthetic data and real-world data demonstrate the efficiency of the proposed SGL-HNCS method.

\end{abstract}

\begin{keyword}
Graph signal processing \sep Signed graphs \sep Graph learning \sep Hidden nodes \sep Column-sparsity
\end{keyword}

\end{frontmatter}

\section{Introduction}
{I}{n} numerous scientific disciplines, graphs serve as crucial mathematical tools for modeling complex systems encountered in real-world applications, including social networks \cite{cam1}, sensor networks \cite{sand2}, brain networks \cite{lie3}, and financial networks \cite{An4}.
A graph structured model consists of sets of nodes and edges, where nodes act as entities of the graph and edges model pairwise relationships between the nodes. 
When attributes are assigned to such nodes and modeled as signals residing on the graph, this graph-based data representation gives rise to the emerging filed of graph signal processing (GSP) \cite{Shuman5, sand6, Marques7, leus2023, yan2022, liu2024}. Indeed, GSP is used in several applications such as computer vision \cite{gir2021, dinesh}, chemistry\cite{song} and ecology \cite{jin, li2025}.

An associated challenge in GSP is graph learning (GL) \cite{Pavez8, gian9, Dong10, Mate11,song2022}, which aims to infer the structure of underlying graph from nodal observations. So far, a significant amount of literature has been developed for this task, which can be mainly classified into stochastic and deterministic approaches. Stochastic approaches rely on statistical inference, where graph signals are modeled using a Laplacian constrained Gaussian-Markov random filed (GMRF) \cite{egi12, kum13,jav}. In this model, the problem of interest corresponds to estimating the precision matrix. In addition, non-probabilistic graph learning approaches have also been explored, typically imposing specific assumptions for valid graph estimation. A common assumption is the smoothness \cite{Vk14, 15, baghe}, which is widely used for its simplicity. The stationary assumption \cite{seg16} is another GSP-based property that restricts the spectral characteristics of the estimated graph. However, these approaches are limited to learning unsigned graphs, whose edge weights are strictly positive. 

In contrast, signed graphs, i.e., the graphs whose edge weights can be either positive or negative, have attracted a lot of attention in recent years \cite{dit, matz2020}. A typical example is found in social networks \cite{gir} (see Fig. \ref{f1.1}). The relationships  between users can be represented by positive edges for friendships and negative edges for non-friendships.
Similarly, the model and method can also be applied in recommendation systems to determine whether different users like or dislike a product. 
The $\text{scSGL}$ method proposed in \cite{kara18} demonstrates that a signed graph can be decomposed into two unsigned graphs based on the signs of the edges. Subsequently, the signed graph is learned under the assumption that graph signals admit low-frequency representation over positive edges, while admitting high-frequency representation over negative edges.

Note that the aforementioned graph learning works, whether for unsigned or signed graphs, typically assume that observations of all nodes are fully available. However, in practical scenarios, the graph signals are often observed from only a limited subset of the original nodes, while the rest remain hidden. 
Intuitively, ignoring the existence of hidden nodes can significantly degrade the performance of existing graph learning methods. To address this limitation, several studies have considered learning graph structures with hidden nodes, including works on Gaussian graphical model selection \cite{Chan19, Chang20, Yang21}, linear Bayesian model \cite{Ana22}, and nonlinear regression \cite{Mei23}. Furthermore, some effective graph learning methods based on the context of GSP have also accounted for hidden nodes. For instance, two related works \cite{Bu26} and \cite{Bu27} propose leveraging, respectively, smoothness prior \cite{Shuman5, giraldo2022} or stationary prior \cite{Ag24, Per25} to infer graph topology from incomplete data. The authors in \cite{Rey28} address the problem of learning multi-layer graphs with hidden nodes, assuming that the observed graph signals followed a GMRF model. In our previous work \cite{ye29}, we propose a framework for learning time-varying graphs with hidden nodes based on graph smoothness and stationarity assumptions. However, these methods are primarily designed for learning unsigned graphs with hidden nodes. It remains unclear how existing methods can be extended to signed graph structures with hidden nodes. A real-world example of such structures arises in recommendation networks, where users may decline to disclose whether they like or dislike certain products, leading to hidden relationships. 
Consequently, modeling the influence of hidden nodes in the context of signed graph learning becomes crucial. For clarity, Table \ref{tab1} summarizes the proposed method and related graph learning methods based on GSP framework. 
\begin{table*}[t]
	\caption{ List of proposed method and alternative.  }\label{tab1}
	\centering
	\setlength
	\tabcolsep{5pt}
	\vspace{10pt}
	\renewcommand
	\arraystretch{2.2}
	\begin{tabular}{   c | c| c | c }
		\Xhline{1px}
		\textbf{\makecell[c]{Graph Type}} & 
		\textbf{Method}
		
		& \textbf{Approach} &
		\textbf{\makecell[c]{Hidden Nodes}} \\
		\Xhline{1px}
		
		\multirow{4}{*}{\makecell[c]{Static \\ unsigned graph}}&\text{GL\cite{15}}&\text{Signal smoothness}& \text{-} \\ \cline{2-4}
		
		\multirow{4}{*}{}& GS-Rw \cite{Bu26}  
		& \text{Signal stationarity}	
		& \checkmark  \\ \cline{2-4}
		\multirow{4}{*}{ }& GSm-GL \cite{Bu27}  
		& \text{Signal smoothness}	
		& \checkmark \\ \cline{2-4}
		\multirow{4}{*}{ }& GSm-St-GL \cite{Bu27}  
		& \makecell[c]{Signal smoothness \\ and stationarity}	
		& \checkmark 
		\\ \Xhline{1px}
		
		\makecell[c]{Multi-layer \\ unsigned graphs} & \text{PGL \cite{Rey28} }  &  Signal stationarity
		& \checkmark\\ \Xhline{1px}
		
		\makecell[c]{Time-varying \\ unsigned graphs} & TGSm-St-GL \cite{ye29}   &  \makecell[c]{Signal smoothness \\ and stationarity}
		& \checkmark\\ \Xhline{1px}
		
		\multirow{3}{*}{\makecell[c]{Static \\ signed graph}} 
		\multirow{3}{*}{}&\text{scSGL\cite{kara18}}&\text{Signal smoothness}&\text{-}\\ \cline{2-4}
		\multirow{3}{*}{}& SGL-HNCS(proposed) &\makecell[c]{Signal smoothness with\\ column-sparsity regularization}&\checkmark\\ \Xhline{1px}
	\end{tabular}
\end{table*}

\begin{table}[!t]
	\caption{ List of notations and their description.  }\label{tab2}
	\centering
	\setlength
	\tabcolsep{6pt}
	\vspace{6pt}
	\renewcommand
	\arraystretch{1.4}
	\begin{tabular}{  c | c  }
		\hline\hline
		\textbf{Notation}&\textbf{Description}\\
		\hline\hline
		$\Vert \mathbf{A}\Vert_F^2$& {sum of squared values of all elements of matrix $\mathbf{A}$ } \\
		\hline
		$\Vert\mathbf{A}\Vert_*$ & nuclear norm of matrix $\mathbf{A}$\\
		\hline
		$\Vert \mathbf{A}\Vert_{2,1}$ & sum of the $\mathnormal{l}_2$ \text{norms of all  columns of matrix $\mathbf{A}$} \\
		\hline
		$\mathrm{tr}(\mathbf{A})$ & \text{trace operator of matrix $\mathbf{A}$}\\
		\hline
		$\mathrm{diag}(\mathbf{A})$ & {a vector formed by diagonal elements of matrix $\mathbf{A}$}\\ \hline
		$\mathrm{upper}(\mathbf{A})$ & {a vector formed by upper triangular part of matrix $\mathbf{A}$}\\ \hline
		$\mathrm{vec}(\mathbf{A})$ & {a vector formed by all elements of matrix $\mathbf{A}$}\\ \hline
		$\text{Im}(\mathbf{A})$ $|$ $\mathbf{I}$ & \text{ image of matrix $\mathbf{A}$ $|$ identity matrix}\\ \hline
		$ \mathbf{0\ | \ 1}$ & \text{  column vector of zeros $|$ column vector of ones } \\
		\hline
		$\mathcal{B}$ & {a set of $B$ observable nodes  } \\
		\hline
		$\mathcal{H}$& {a set of $H$ hidden nodes}\\
		\hline
		$\mathcal{V}$& {a set of $N=B+H$ full nodes}\\
		\hline
		$\mathbf{X}$ & \text{ a matrix of graph signals }\\ \hline
		$\hat{\mathbf{C}} $ & {a sample covariance matrix of graph signals $\mathbf{X}$}\\ \hline
		$\mathcal{G}^{+} \ | \ \mathcal{G}^{-}$ & {a graph with positive edge weights \ $|$ \ a graph with negative edge weights}\\ \hline
		$\mathbf{L}^{+} \ | \ \mathbf{L}^{-}$ &{the Laplacian matrix of graph $\mathcal{G}^{+}$ \ $|$ \ the Laplacian matrix of graph $\mathcal{G}^{-}$}\\ \hline
		$\mathbf{W}\ | \ \mathbf{D}$ & {the adjacency matrix \ $|$ \ the degree matrix}\\ \hline
		$\mathcal{L}$ & {the set of combination Laplacian matrix}\\ \hline
		$\bar{\mathcal{L}}$ & {the set of non-combination Laplacian matrix}\\ \hline
	\end{tabular}
\end{table}

To this end, we propose a \textit{signed graph learning with hidden nodes under column-sparsity regularization} (SGL-HNCS) method for learning a signed graph while taking into account the presence of hidden nodes. Based on the smoothness assumption, we formalize the relationship between the observable nodes
and the unknown signed graph under the influence of the hidden
nodes.
The primary contributions can be summarized as follows:
\begin{itemize}
	\item{We formulate the problem of learning signed graph with hidden nodes by leveraging the concepts of node similarity and dissimilarity from \cite{kara18}. Under this setting, the problem is cast as a constrained optimization problem with column-sparsity regularization.}
	\item{Our learning algorithm is based on the well-known block coordinate descent (BCD) approach. The algorithm decomposes the problem into three subproblems, allowing for joint estimation of our model parameters. Specifically, we introduce the alternating direction method of multipliers (ADMM) algorithm to efficiently address the first non-convex subproblem. Furthermore, we provide rigorous proofs for the convergence of the proposed algorithm. }
	\item{Experimental results demonstrate that the proposed method exhibits higher accuracy in signed graph learning compared to those of other state-of-the-art methods.}	
\end{itemize}

The remainder of this paper is organized as follows. Section \ref{sec2} provides a comprehensive review of fundamental concepts related to GSP and an overview of the associated graph learning methods. Section \ref{sec3} formally introduces the problem of learning signed graph with hidden nodes and formulates the problem. Section \ref{sec4} discusses the algorithm development. Section \ref{sec5} analyzes the associated convergence guarantees of the proposed algorithm. Experimental results are presented in Section \ref{sec6}. Finally, we conclude the paper in Section \ref{sec7}.

\textbf{Notations}: Throughout the paper, scalars are denoted by normal lower-case letters (e.g., $a$), vectors by boldface lower-case letters (e.g., $\mathbf{a}$) and matrices by boldface upper-case letters (e.g., $\mathbf{A}$). Calligraphic font capital letters are used to denote sets (e.g., $\mathcal{S}$). The notation $\mathbb{R}^{N\times N}$ represents the set of matrices of size $N\times N$. The rest of notations are summarized in Table \ref{tab2}.

\section{Preliminaries}\label{sec2}
This section serves as a foundational introduction. We first present fundamental definitions related to graph models and GSP. Furthermore, we introduce the frameworks for learning unsigned and signed graphs based on smooth graph signal models, respectively. 
\subsection{Graph and Graph Signals}
A weighted undirected graph with $N$ nodes can be represented by $\mathcal{G}=(\mathcal{V},\mathcal{E},\mathbf{W})$, with $\mathcal{V}=\{1,\cdots,N\}$ being the set of graph nodes, and $\mathcal{E}\subseteq\{\{i,j\}|i,j\in\mathcal{V}\}$ being the set of graph edges, $i$ and $j$ representing two different nodes. The matrix $\mathbf{W}$ is a weighted adjacency matrix whose elements represent the weight of the edges. For an undirected graph, $\mathbf{W}$ is symmetric with zero diagonal entries, and the $(i,j)$-th entry $W_{ij}$ is assigned a non-negative value if $(i,j)\in\mathcal{E}$. Define the Laplacian matrix as $\mathbf{L: =D-W}$, where $\mathbf{D}$ is the diagonal degree matrix with entries $D_{ii}=\sum_{j=1}^N W_{ij}$ being the degree of node $i$ and $D_{ij}=0$ for $i\ne j$. The Laplacian matrix $\mathbf{L}$ can be decomposed into $\mathbf{L}=\mathbf{U\Lambda U}^\top$ due to its symmetry, where superscript $^\top$ is a transpose operator, $\mathbf{U}=[\mathbf{u}_{1},\cdots,\mathbf{u}_{N}]\in \mathbb{C}^{N\times N}$ is a matrix consisting of the eigenvectors of $\mathbf{L}$, and $\mathbf{\Lambda}\in\mathbb{C}^{N\times N}$ is a diagonal matrix containing the corresponding eigenvalues arranged in increasing order. Let a vector $\mathbf{x}=[x_{1},\cdots, x_{N}]^\top\in\mathbb{R}^{N}$ be a graph signal defined on the $\mathcal{G}$, where $x_{i}$ denotes the signal value at node $i$.

\begin{figure*}[!t]
	\centering
	\subfloat[Signed graph $\mathcal{G}$]{\includegraphics[width=0.32\columnwidth]{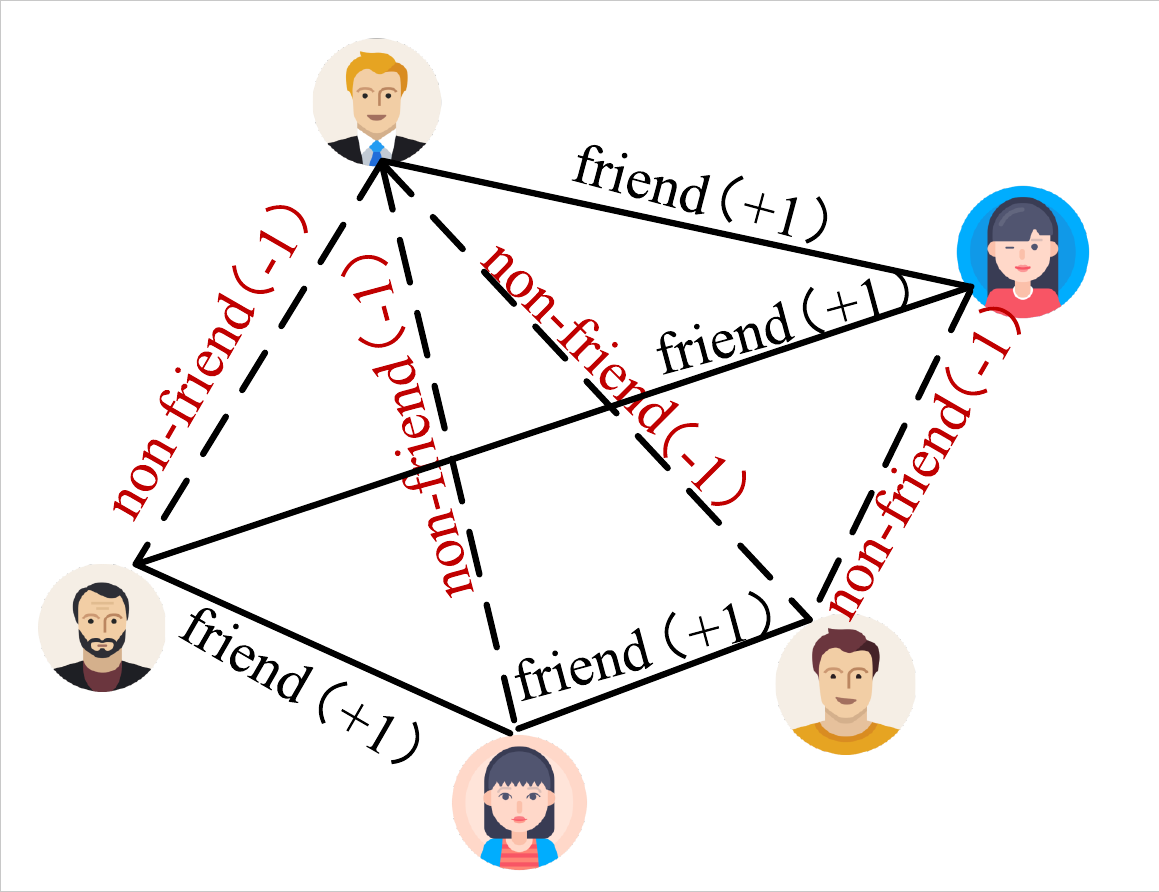}
		\label{f1.1}}\hfil
	\subfloat[Unsigned graph $\mathcal{G}^+$]{\includegraphics[width=0.32\columnwidth]{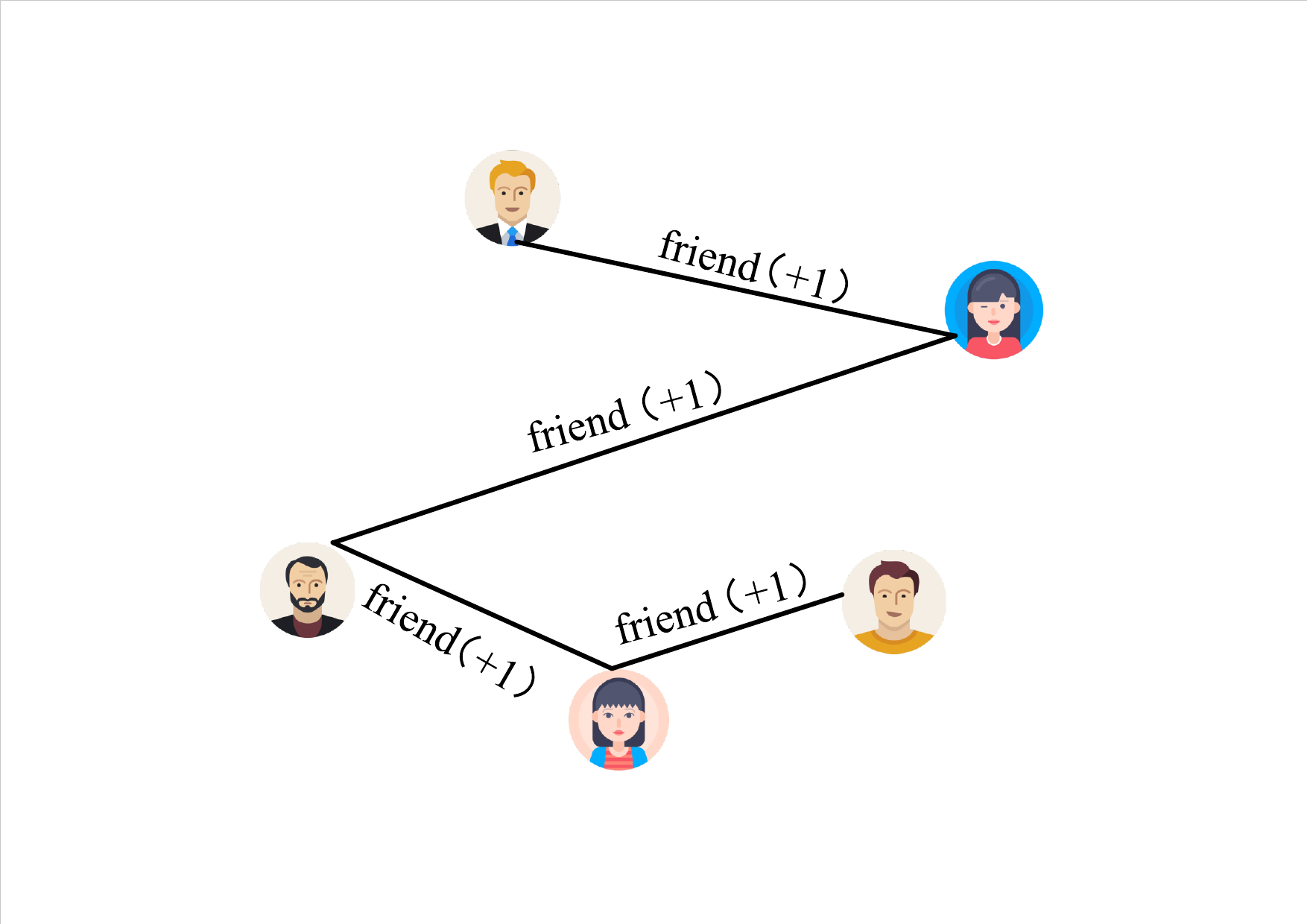}
		\label{f1.2}}\hfil
	\subfloat[Unsigned graph $\mathcal{G}^-$]{\includegraphics[width=0.32\columnwidth]{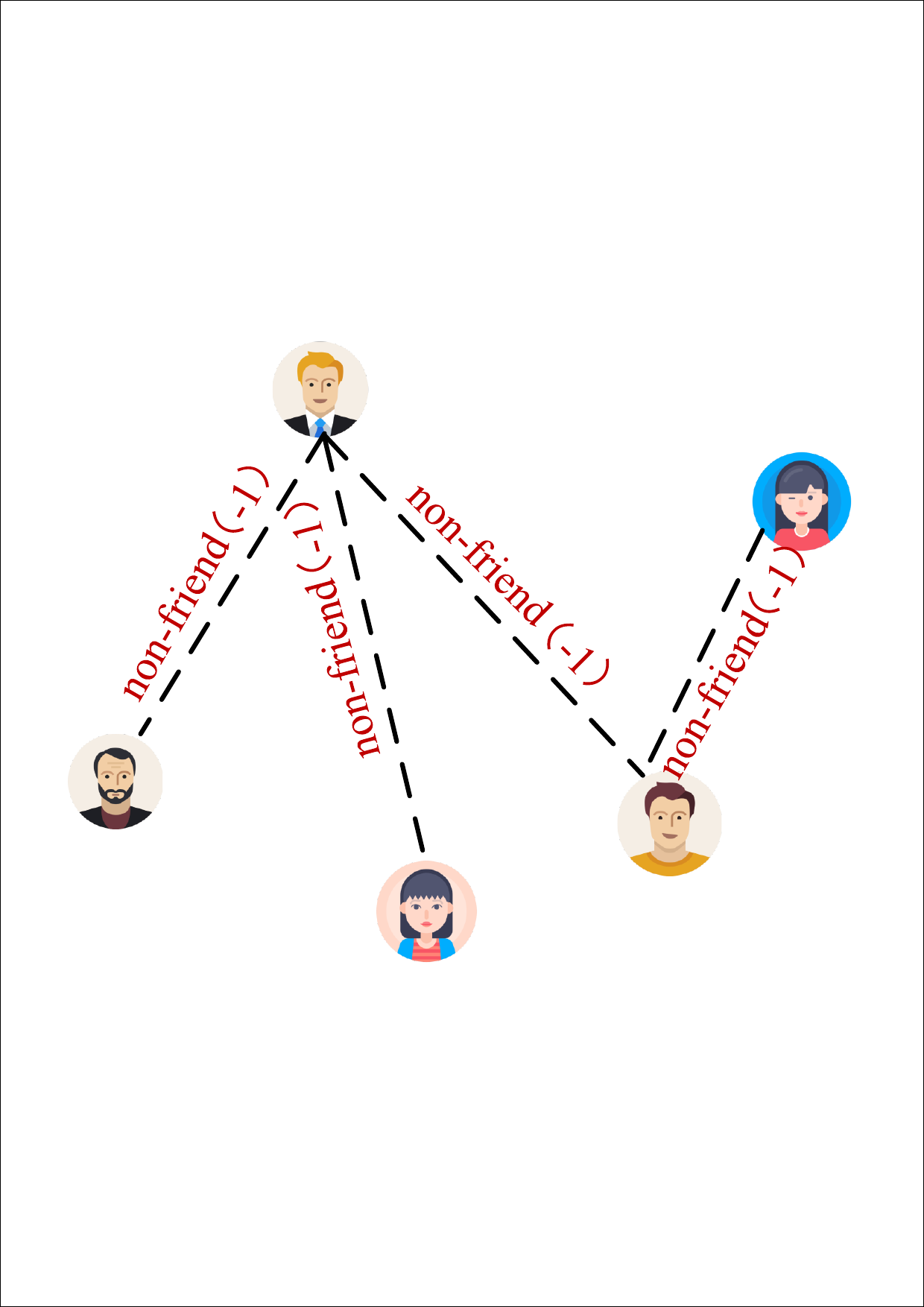}
		\label{f1.3}}\\
	\subfloat[Adjacency matrix of $\mathcal{G}$ ]{\includegraphics[width=0.33\columnwidth]{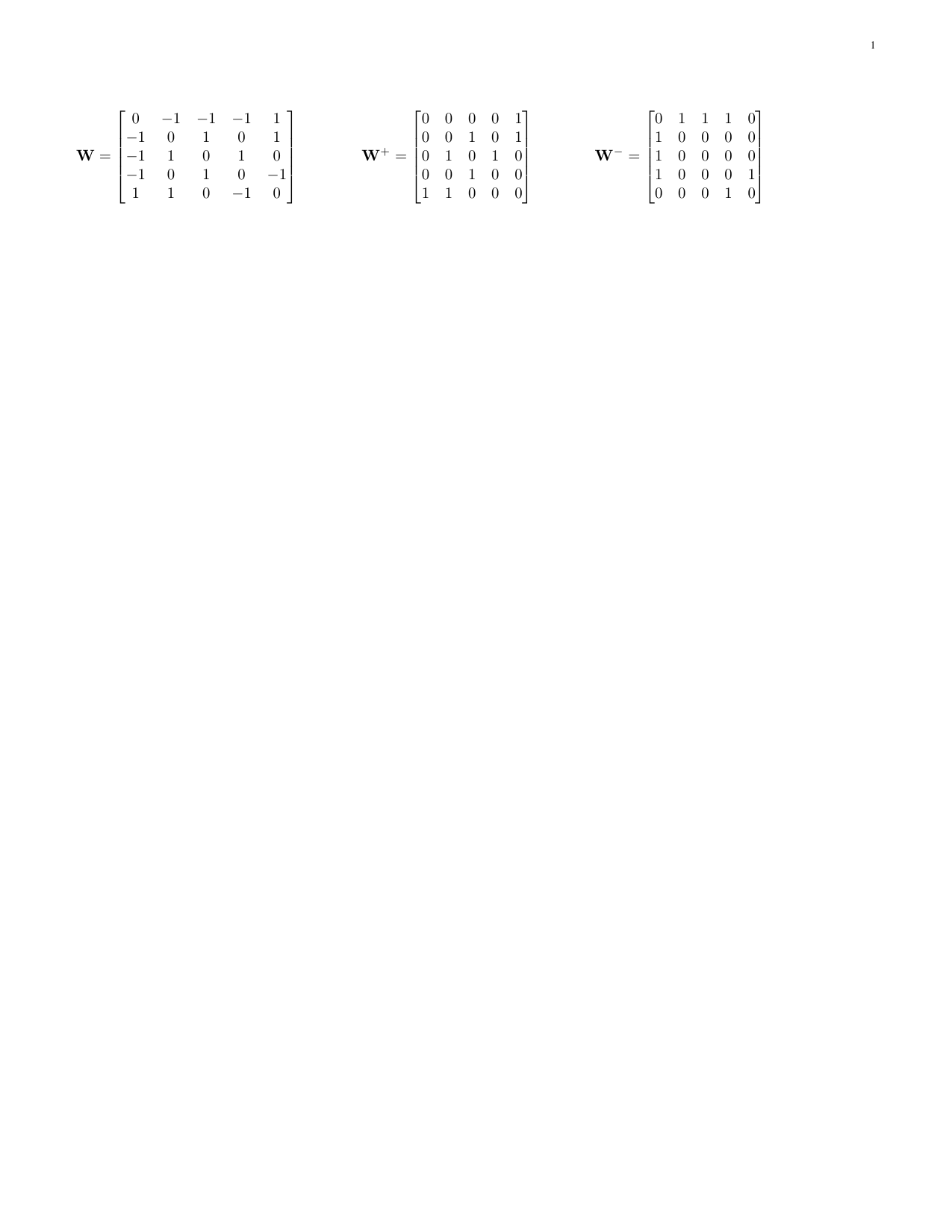}
		\label{f1.4}}\hfil
	\subfloat[Adjacency matrix of $\mathcal{G}^+$ ]{\includegraphics[width=0.3\columnwidth]{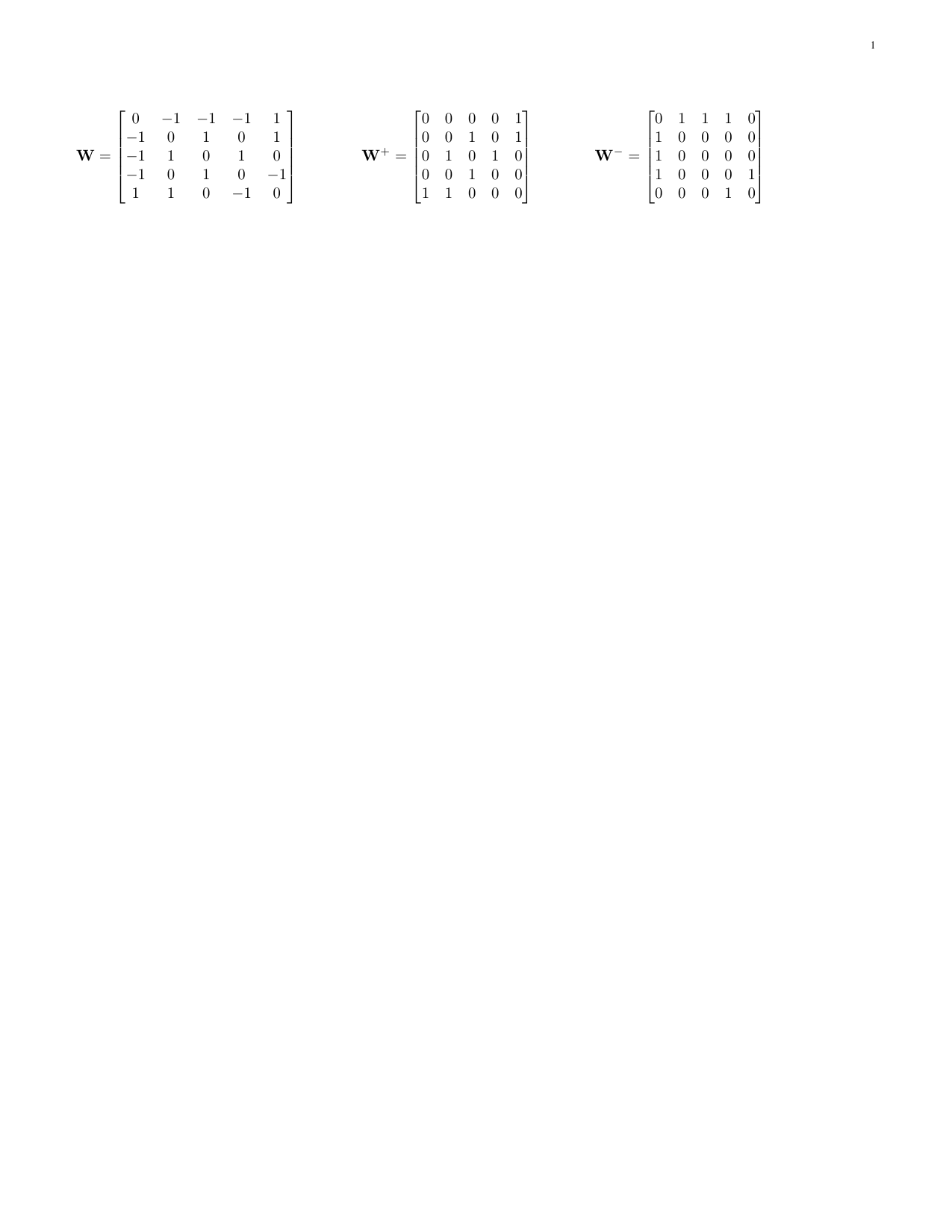}
		\label{f1.5}}\hfil
	\subfloat[Adjacency matrix of $\mathcal{G}^-$]{\includegraphics[width=0.3\columnwidth]{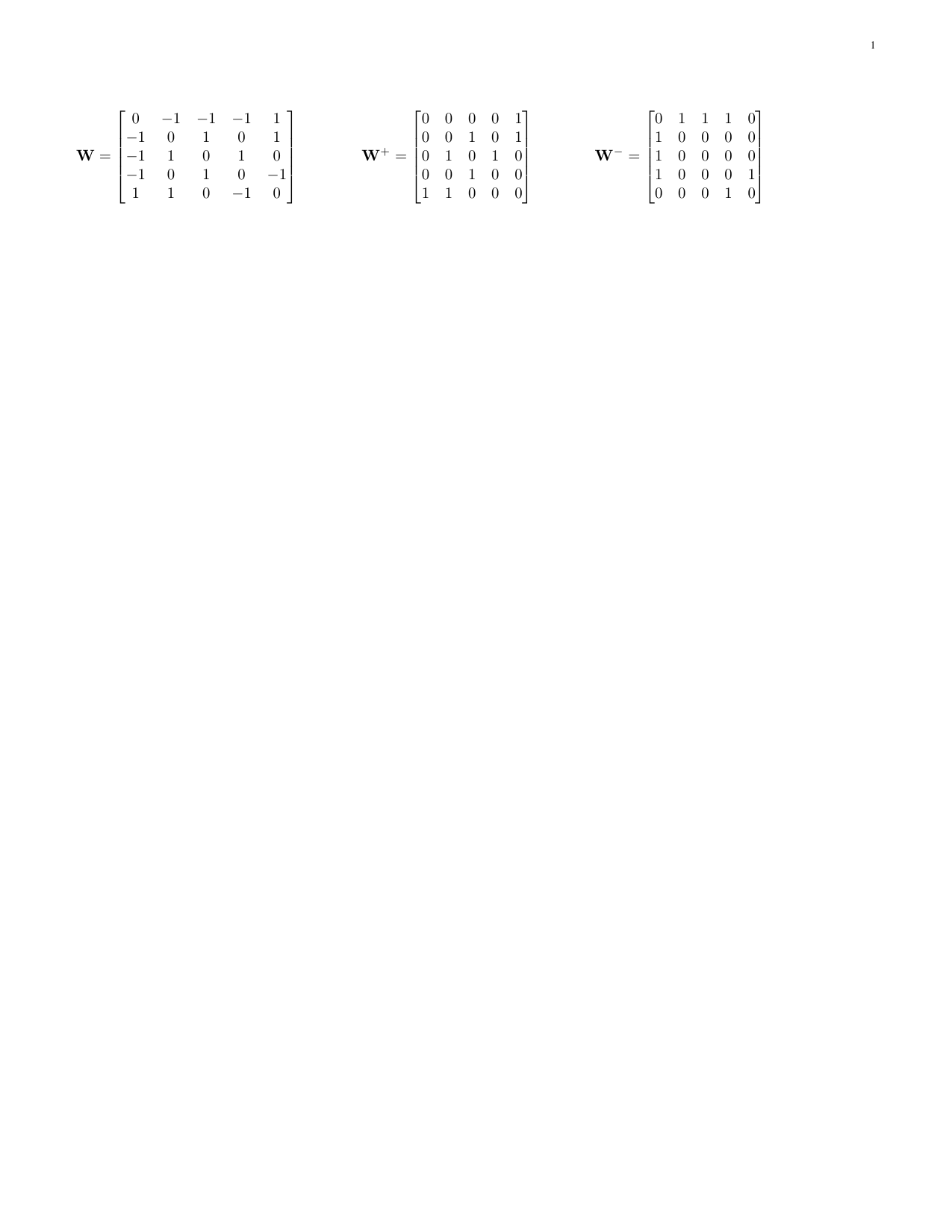}
		\label{f1.6}}

	\caption{An example of decomposing a signed graph into two unsigned graphs $\mathcal{G}^+$ and $\mathcal{G}^-$ in a social network. }
	\label{social_network}
\end{figure*}

If $\mathcal{G}$ is a signed graph, it can be decomposed into two unsigned graphs $\mathcal{G}^+=(\mathcal{V},\mathcal{E}^+,\mathbf{W}^+)$ and $\mathcal{G}^-=(\mathcal{V},\mathcal{E}^-,\mathbf{W}^-)$ \cite{kara18}. Here, $\mathcal{E}^+$ and $\mathcal{E}^-$ denote the edge set of graphs $\mathcal{G}^+$ and $\mathcal{G}^-$, respectively. The entries of $\mathbf{W}^+$ and $\mathbf{W}^-$ characterize the strength of the connections between nodes $i$ and $j$. Specifically, we have ${W}_{ij}^{+}={W}_{ij}$ for ${W}_{ij}>0$ and ${W}_{ij}^{+}=0$ for ${W}_{ij}\leq0$, and ${W}_{ij}^{-}=\left|{W}_{ij}\right|$ for ${W}_{ij}<0$ and ${W}_{ij}^{-}=0$ for ${W}_{ij}\geq0$. An illustration of this decomposition process is shown in Fig \ref{social_network}. Based on the definition of $\mathbf{L}$, two Laplacian matrices $\mathbf{L}^{+}$ and $\mathbf{L}^-$ are constructed from $\mathbf{W}^{+}$ and $\mathbf{W}^-$, respectively.
\subsection{Unsigned Graph Learning from Smooth Signals}
The $K$ graph signals are represented by the vectors $\{\mathbf{x}_i\}_{i=1}^K$, where $\mathbf{x}_i\in\mathbb{R}^N$. These graph signals are collected in a $N\times K$ data matrix $\mathbf{X}=[\mathbf{x}_1,\cdots,\mathbf{x}_{K}]$. An unknown unsigned graph can be learned based on assumptions regarding the relationship between the graph signals and the graph structure. Assuming that the graph signals are smooth with respect to the corresponding graph, the work in \cite{15} proposes a general graph learning framework. The smoothness of $\mathbf{X}$ can be quantified by Laplacian quadratic form. Thus, the method in \cite{15} addresses the following convex optimization problem
\begin{align}
	\label{ex1}
	&\min \limits_{\mathbf{L}\in\mathcal{L}} \quad \mathrm{tr}(\mathbf{X^\top}\mathbf{L}\mathbf{X})+\alpha\left\|\mathbf{L}\right\|_F^2\notag\\
	&\mathrm{s.t.}\quad
	\mathrm{tr}(\mathbf{L})=N,
\end{align}
where $\mathcal{L}:=\{L_{ij}\leq0,i\ne j; \mathbf{L}=\mathbf{L}^\top;\mathbf{L1}=\mathbf{0};\mathbf{L}\succeq{0}\}$ is the set of valid combinatorial Laplacian, and $\alpha$ is a positive regularization parameter. The first term in the objective function quantifies the total variation of $\mathbf{X}$, measuring the smoothness of the graph signals on unsigned graph. The second term acts as a sparsity-promoting regularization on $\mathbf{L}$.  Additionally, the trace constraint guarantees that the trivial solutions are avoided.
\subsection{Signed Graph Learning from Smooth Signals}
Inspired by unsigned graph learning method, an approach to learn signed graph from smooth signals is proposed in \cite{kara18}. The signed graph $\mathcal{G}$ is learned based on the following assumptions: (i) the graph signals exhibit small total variation with respect to $\mathcal{G}^+$; (ii) the graph signals exhibit large total variation with respect to $\mathcal{G}^-$. More specifically, given a data matrix $\mathbf{X}$, the signed graph learning method proposed in \cite{kara18} is formulated as minimization
\begin{align}
	\label{ex2}
	\min\limits_{\mathbf{L}^+,\mathbf{L}^{-}\in\mathcal{L}}\quad
	&\mathrm{tr}(\mathbf{X}^\top\mathbf{L}^{+}\mathbf{X})-\mathrm{tr}(\mathbf{X}^\top\mathbf{L}^{-}\mathbf{X})+\alpha_{1}\left\|\mathbf{L}^{+}\right\|_F^2+\alpha_{2}\left\|\mathbf{L}^{-}\right\|_F^2\notag\\
	\mathrm{s.t.}\quad
	&\mathrm{tr}(\mathbf{L}^{+})=N, \mathrm{tr}(\mathbf{L}^{-})=N,\notag\\ &(\mathbf{L}^+,\mathbf{L}^-)\in\mathcal{U},
\end{align} 
where $\alpha_1 \textgreater 0$ and $\alpha_2 \textgreater 0$ are constant parameters. The last constraint ensures that $\mathbf{L}^+$ and $\mathbf{L}^-$ do not have non-zero entries at the same indices, where set $\mathcal{U}=\{(\mathbf{L}^+,\mathbf{L}^-):L_{ij}^+=0\ \text{if}\ L_{ij}^-\ne0\ \text{and}\ L_{ij}^-=0\ \text{if}\ L_{ij}^+\ne0,\forall {i\ne j}\} $.

\section{Signed Graph Learning with Hidden Nodes }\label{sec3}
In this section, we consider the case where the graph signals are observed only from a subset of nodes during the data collection process. In Subsection \ref{sec3.1}, we analyze the influence of hidden nodes on the signed graph structure inference and provide a detailed description of the resulting problem. In Subsection \ref{sec3.2}, we formulate the signed graph learning problem as a constrained optimization problem. 
\subsection{Influence of Hidden Nodes in Signed Graph}\label{sec3.1}
In Section \ref{sec2}, we introduced the fundamental definitions of signed graph under the assumption that all nodes are observable. However, topology identification becomes complicated when hidden nodes are present. In this subsection, we first discuss some definitions of signed graph while considering hidden nodes, which will be exploited throughout this paper.

Formally, let $\mathbf{X}=[\mathbf{x}_1,\cdots,\mathbf{x}_K]\in\mathbb{R}^{N\times K}$ denote a collection of $K$ graph signals defined on the $N$ nodes of the unknown signed graph $\mathcal{G}$. Given the presence of hidden nodes, only a subset of nodes can be observed.
To accommodate
this setting, we partition the set of nodes $\mathcal{V}$ into two disjoint subsets $\mathcal{B}$ and $\mathcal{H}$. The subset $\mathcal{B}$ is the set of observable nodes and $\mathcal{H}$ is the set of hidden nodes with $\mathcal{H}=\mathcal{V}\setminus\mathcal{B}$, where $\setminus$ represents the relative difference between sets. In particular, we set $\mathcal{B}=\{1,\cdots,B\}$ with cardinality $\left|\mathcal{B}\right|=B$ and $\mathcal{H}=\{B+1,\cdots,N\}$ with cardinality $\left|\mathcal{H}\right|=H=N-B$. By separating the observable and hidden nodes, the complete graph signals are represented as $\mathbf{X}=[\mathbf{X}_B^\top,\mathbf{X}_H^\top]^\top$, where $\mathbf{X}_B\in\mathbb{R}^{B\times K}$ denotes the graph signal values of the observable nodes, corresponding to the first $B$ rows of $\mathbf{X}$, and $\mathbf{X}_H\in\mathbb{R}^{H\times K}$ denotes the graph signal values of the hidden nodes. Furthermore, $\hat{\mathbf{C}}:=\frac{1}{K}\mathbf{X}\mathbf{X}^\top$ stands for the sample covariance matrix of $\mathbf{X}$. To this end, the matrix $\hat{\mathbf{C}}$ and the Laplacian matrix $\mathbf{L}^{s}$ for all $s\in\{+,-\}$ can be expressed by matrix decomposition as 

\begin{equation}\label{ex3}
	\begin{gathered}	
		\hat{\mathbf{C}}=\begin{bmatrix}\hat{\mathbf{C}}_{B} & \hat{\mathbf{C}}_{BH}\\ \hat{\mathbf{C}}_{HB} & \hat{\mathbf{C}}_{H} 
		\end{bmatrix},  
		\mathbf{L}^{s}=\begin{bmatrix} \mathbf{L}_{B}^{s} &        \mathbf{L}_{BH}^{s} \\ \mathbf{L}_{HB}^{s} & \    \mathbf{L}_{H}^{s}
		\end{bmatrix}, \forall s\in\{+,-\}.
	\end{gathered}
\end{equation}
Note that the sample covariance matrix of graph signals residing on the observable nodes is represented by $\hat{\mathbf{C}}_B=\frac{1}{K}\mathbf{X}_B\mathbf{X}_B^\top$. Furthermore, the dependencies among the observable nodes, between the observable and hidden nodes and among the hidden nodes are described by submatrices $\mathbf{L}_B^s, \mathbf{L}_{BH}^s,\mathbf{L}_H^s$, respectively. Throughout this paper, the signed graph $\mathcal{G}$ is decomposed into two unsigned and undirected graphs, $\mathcal{G}^+$ and $\mathcal{G}^-$. Consequently, both matrices $\hat{\mathbf{C}}$ and $\mathbf{L}^s$ are symmetric. Similarly, the submatrices $\hat{\mathbf{C}}_{BH}$ and $\mathbf{L}_{BH}^s$ exhibit the same symmetry, such that $\hat{\mathbf{C}}_{BH}=(\hat{\mathbf{C}}_{HB})^\top$ and $\mathbf{L}_{BH}^s=(\mathbf{L}_{HB}^s)^\top$ for all $s\in\{+,-\}$.

The block structure of the matrices in (\ref{ex3}) motivates the search for an optimal signed graph while considering the presence of hidden nodes. With these foundations in place, we can now formally state the problem of learning  signed graph in the presence of hidden nodes.

\textbf{Problem}: Given a known nodal subset $\mathcal{B}\subset\mathcal{V}$ and a ${B}\times{K}$ observed measurement matrix $\mathbf{X}_{B}$ (where each
row corresponds to an observable node), we aim to estimate the Laplacian matrices $\mathbf{L}_B^+$ and $\mathbf{L}_B^-$ under the assumptions that

\textbf{AS1.} The number of hidden nodes $H$ is significantly smaller than the number of observable nodes $B$ with cardinality $H \ll B$;

\textbf{AS2.} The full graph signals $\mathbf{X}$ have small total variation on the unsigned graph $\mathcal{G}^+$;

\textbf{AS3.} The full graph signals $\mathbf{X}$ have large total variation on the unsigned graph $\mathcal{G}^-$.

The above three assumptions are imposed to make the problem of learning signed graph with hidden nodes more tractable. The first assumption \textbf{AS1} ensures that the availability of information from most nodes. The second assumption \textbf{AS2} and the third assumption \textbf{AS3} establish a relationship between the unknown signed graph structure and the graph signals.

\subsection{Problem Formulation}\label{sec3.2}
We first revisit the measure of smoothness of signals on unsigned graph, i.e., $\frac{1}{K}\mathrm{tr}(\mathbf{X}^\top\mathbf{L}\mathbf{X})$. Given that $\hat{\mathbf{C}}=\frac{1}{K}\mathbf{X}\mathbf{X}^\top$, the total variation of graph signals can be rewritten as $\mathrm{tr}(\hat{\mathbf{C}}\mathbf{L})$. Similarly, the total variation of the graph signal on the signed graph $\mathcal{G}$ can be expressed in the general form $\mathrm{tr}(\hat{\mathbf{C}}\mathbf{L}^s)$ for all $s\in\{+,-\}$, which encompasses the total variation of graph signals on the graph $\mathcal{G}^+$ (i.e., $\mathrm{tr}(\hat{\mathbf{C}}\mathbf{L}^+)$) and on the graph $\mathcal{G}^-$ (i.e., $\mathrm{tr}(\hat{\mathbf{C}}\mathbf{L}^-)$).
It is important to note that due to the existence of hidden nodes, we may not have access to the full sample covariance matrix $\hat{\mathbf{C}}$, but only the observed sample covariance matrix $\hat{\mathbf{C}}_B$. In this sense, the block structure of matrix defined in (\ref{ex3}) is used to rewrite the total variation of graph signals on $\mathcal{G}$, leading to
\begin{align}\label{ex4}
	\mathrm{tr}(\hat{\mathbf{C}}\mathbf{L}^s)
	&=\mathrm{tr}(\hat{\mathbf{C}}_{B}\mathbf{L}_{B}^s)+2\mathrm{tr}\big(\hat{\mathbf{C}}_{BH}(\mathbf{L}_{BH}^s)^\top\big)+\mathrm{tr}(\hat{\mathbf{C}}_H\mathbf{L}_H^s) \notag\\
	&=\mathrm{tr}(\hat{\mathbf{C}}_{B}\mathbf{L}_{B}^s)+2\mathrm{tr}(\mathbf{P}^s)+\mathrm{tr}(\mathbf{R}^s),
\end{align}
where $\mathbf{P}^s :=\hat{\mathbf{C}}_{BH}(\mathbf{L}_{BH}^s)^\top\in\mathbb{R}^{B\times{B}}$, $\mathbf{R}^s :=\hat{\mathbf{C}}_H\mathbf{L}_H^s\in\mathbb{R}^{H\times{H}}, \forall s\in\{+,-\}$. Under the assumption \textbf{AS1}, the matrix $\mathbf{P}^s$ is evidently a low-rank matrix, satisfying $\mathrm{rank}(\mathbf{P}^s)\leq H\ll B$. By defining the matrices $\mathbf{P}^s$ and $\mathbf{R}^s$, the problem that most of the submatrices associated with the hidden nodes are unknown can be solved.

However, the challenge arises from the fact that the matrix $\mathbf{L}_{B}^s$ in (\ref{ex4})
is not a combinatorial Laplacian, but belongs to the set $\bar{\mathcal{L}} :=\{ L_{ij}^s\leq{0}, i\ne j;
\mathbf{L}^s=(\mathbf{L}^s)^\top;\mathbf{L}^{s}\mathbf{1}\ge{\mathbf{0}};\mathbf{L}^s\succeq{0}, \forall{s\in\{+,-\}}\}$. The key difference between the set $\bar{\mathcal{L}}$ and the set of valid combinatorial Laplacian $\mathcal{L}$, is that the condition $\mathbf{L}\mathbf{1}=\mathbf{0}$ is replaced by $\mathbf{L}^{s}\mathbf{1}\ge{\mathbf{0}}$, while others remain unchanged. This adjustment accounts for the negative entries in $\mathbf{L}_{BH}^s$, which arise due to the connections between the nodes in set $\mathcal{B}$ and those in set $\mathcal{H}$. In other words, the sum of the off-diagonal elements of $\mathbf{L}^s$ is smaller than the associated diagonal elements.

To circumvent this issue, we focus on estimating $\tilde{\mathbf{L}}_B^s:=\mathrm{diag}(\mathbf{W}_B^s\mathbf{1})-\mathbf{W}_B^s$ instead of $\mathbf{L}_B^s$, where $\mathbf{W}_B^s$ represents the adjacency matrix that captures the weights of edges between the observable nodes. In this sense, the constraint $\tilde{\mathbf{L}}_B^s\mathbf{1}=\mathbf{0}$ is established, ensuring that $\tilde{\mathbf{L}}_B^s$ is a proper combinatorial Laplacian.
The relationship between $\mathbf{L}_B^s$ and $\tilde{\mathbf{L}}_B^s$ is given by $\tilde{\mathbf{L}}_B^s=\mathbf{L}_B^s-\tilde{\mathbf{D}}_{B}^s$, where $\tilde{\mathbf{D}}_{B}^s$ is a $B\times B$ degree matrix containing the number of edges between observable and hidden nodes. Consequently, the total variation of graph signals, as expressed in (\ref{ex4}), can be replaced by 
\begin{align}\label{ex5}
	\mathrm{tr}(\hat{\mathbf{C}}\mathbf{L}^s)
	&=\mathrm{tr}(\hat{\mathbf{C}}_{B}\tilde{\mathbf{L}}_{B}^s)+\mathrm{tr}(\hat{\mathbf{C}}_B\tilde{\mathbf{D}}_{B}^s)+2\mathrm{tr}(\mathbf{P}^s)+\mathrm{tr}(\mathbf{R}^s) \notag\\
	&=\mathrm{tr}(\hat{\mathbf{C}}_{B}\tilde{\mathbf{L}}_{B}^s)+2\mathrm{tr}(\tilde{\mathbf{P}}^s)+\mathrm{tr}(\mathbf{R}^s),
\end{align}
where $\tilde{\mathbf{P}}^s=\hat{\mathbf{C}}_B\tilde{\mathbf{D}}_{B}^s/2+\mathbf{P}^s$. Exploiting the fact that $\tilde{\mathbf{D}}_{B}^s$ is a low-rank matrix in sparse graph, we can infer that $\tilde{\mathbf{P}}^s$ exhibits the same low-rank structure as $\mathbf{P}^s$. 

Note that the matrix $\tilde{\mathbf{P}}^s$ is not only inseparable from the product of the matrices $\hat{\mathbf{C}}_B$ and $\tilde{\mathbf{D}}_{B}^s$, but also related to the matrix $\mathbf{P}^s$. Beside, the diagonal elements of $\tilde{\mathbf{D}}_{B}^s$ are sparse, leading to the product $\hat{\mathbf{C}}_B\tilde{\mathbf{D}}_{B}^s$ having several zero columns. Following with the assumption \textbf{AS1}, it is evident that $\tilde{\mathbf{P}}^s$ presents a column-sparse structure. 
We incorporate the group Lasso penalty to encourage the desired column-sparse structure of $\tilde{\mathbf{P}}^s$. Consequently, based on the assumptions \textbf{AS1}-\textbf{AS3} in Subsection \ref{sec3.1}, a column-sparsity regularization framework for signed graph learning with hidden nodes is formulated as
\begin{align}\label{ex7}
	\min\limits_{\{\tilde{\mathbf{L}}_{B}^{s},\tilde{\mathbf{P}}^{s},\mathbf{R}^{s}\}_{s\in\{+,-\}}}\quad
	&\mathrm{tr}(\hat{\mathbf{C}}_{B}\tilde{\mathbf{L}}_{B}^{+})+2\mathrm{tr}(\tilde{\mathbf{P}}^{+})+\mathrm{tr}(\mathbf{R}^{+})-\mathrm{tr}(\hat{\mathbf{C}}_B\tilde{\mathbf{L}}_{B}^{-})-2\mathrm{tr}(\tilde{\mathbf{P}}^{-})-\mathrm{tr}(\mathbf{R}^{-})\notag\\
	&+\sum_{s\in\{+,-\}}\alpha_{s}\Vert\tilde{\mathbf{L}}_{B}^{s}\Vert_{F}^2+\sigma_{s}\Vert\tilde{\mathbf{P}}^{s}\Vert_{2,1}\notag\\
	\mathrm{s.t.}\quad
	&\mathrm{tr}(\hat{\mathbf{C}}_{B}\tilde{\mathbf{L}}_{B}^{s})+2\mathrm{tr}(\tilde{\mathbf{P}}^{s})+\mathrm{tr}(\mathbf{R}^{s})\ge0,\notag\\		
	&\mathrm{tr}(\mathbf{R}^{s})\ge 0,\notag\\
	&\mathrm{tr}(\tilde{\mathbf{L}}_{B}^{s})=B,\notag\\
	& \tilde{\mathbf{L}}_{B}^s\in\mathcal{L}\ \text{and}\ \tilde{\mathbf{L}}_{B}^s\in\mathcal{U} ,\forall s\in\{+,-\},
\end{align}
where $\alpha_s$ and $\sigma_s$  are positive regularization parameters that control the trade-off between the regularizer. The Frobenius norm $\Vert\tilde{\mathbf{L}}_B^s\Vert_F^2$ serves as a penalty term to control the density of the learned signed graph. The regularization $\Vert\tilde{\mathbf{P}}^{s}\Vert_{2,1}$ is regarded as the group Lasso penalty. The first constraint in (\ref{ex7}) ensures that the total variations of graph signals on the two unsigned graphs are non-negative. The second constraint captures the fact that the matrix $\mathbf{R}^s$ is a product of two positive semi-definite matrices. The remaining constraints follow the same structure as those presented in optimization problem (\ref{ex2}).
 
\section{Proposed Algorithm}\label{sec4}
The optimization problem in (\ref{ex7}) is not jointly convex with respect to variables $\tilde{\mathbf{L}}_{B}^{s},\tilde{\mathbf{P}}^{s},\mathbf{R}^{s}$, due to the non-convex constraint $ \tilde{\mathbf{L}}_{B}^s\in\mathcal{U} ,\forall s\in\{+,-\}$. To solve the problem, we propose an algorithm based on the block coordinate descent (BCD) \cite{tse} approach. The proposed algorithm decomposes the original problem into three simpler subproblems. Specifically, we use the alternating direction method of multipliers (ADMM) framework to handle the non-convex optimization subproblem for the $\tilde{\mathbf{L}_B^{s}}$-update, as detailed in Subsection \ref{up1}.
In Subsection \ref{up2} and Subsection \ref{up3}, the two convex optimization subproblems for the $\tilde{\mathbf{P}}^{s}$-update and $\mathbf{R}^{s}$-update are solved using CVX \cite{cvx30}.
We initialize $(\tilde{\mathbf{L}}_B^{s^{(m)}},\tilde{\mathbf{P}}^{s^{(m)}}, \mathbf{R}^{s^{(m)}})$ for $m=0$, where $m\in\{0,\cdots,M-1\}$ represents the iteration index. To
simplify the notations, we drop the superscripts $^{(m)}$ from constant variables. The update steps are detailed below.

\subsection{$\tilde{\mathbf{L}}_B^{s}$-update Step}\label{up1}
Assuming $\tilde{\mathbf{P}}^{s}$ and $\mathbf{R}^{s}$ to be fixed, the optimization problem in (\ref{ex7}) can be restated in terms of $\tilde{\mathbf{L}}_{B}^s$ as
\begin{align}\label{ex8}
	\min\limits_{\tilde{\mathbf{L}}_{B}^{+},\tilde{\mathbf{L}}_{B}^{-}}\quad
	&\mathrm{tr}(\hat{\mathbf{C}}_{B}\tilde{\mathbf{L}}_{B}^{+})-\mathrm{tr}(\hat{\mathbf{C}}_B\tilde{\mathbf{L}}_{B}^{-})
	+\sum_{s\in\{+,-\}}\alpha_{s}\Vert\tilde{\mathbf{L}}_{B}^{s}\Vert_{F}^2\notag\\
	\mathrm{s.t.}\quad
	&\mathrm{tr}(\hat{\mathbf{C}}_{B}\tilde{\mathbf{L}}_{B}^{s})+2\mathrm{tr}(\tilde{\mathbf{P}}^{s})+\mathrm{tr}(\mathbf{R}^{s})\ge0,\notag\\		
	&\mathrm{tr}(\tilde{\mathbf{L}}_{B}^{s})=B,\notag\\
	& \tilde{\mathbf{L}}_{B}^s\in\mathcal{L}\ \text{and}\ \tilde{\mathbf{L}}_{B}^s\in\mathcal{U} ,\forall s\in\{+,-\}.
\end{align} 
Since the matrix $\tilde{\mathbf{L}}_{B}^s$ is symmetric, it is convenient to write the subproblem (\ref{ex8}) into vector form. Instead of working with the full square matrix, we focus on the upper triangular part of $\tilde{\mathbf{L}}_{B}^s$. 
Let $\boldsymbol{\ell}^+=\mathrm{upper}(\tilde{\mathbf{L}}_B^+)$, $\boldsymbol{\ell}^-=\mathrm{upper}(\tilde{\mathbf{L}}_{B}^-)$, $\boldsymbol{c}=\mathrm{upper}(\hat{\mathbf{C}}_{B})$, $\boldsymbol{d}=\mathrm{diag}(\hat{\mathbf{C}}_{B})$, $\boldsymbol{p}^+=\mathrm{diag}(\tilde{\mathbf{P}}^+)$, $\boldsymbol{p}^-=\mathrm{diag}(\tilde{\mathbf{P}}^-)$, $\boldsymbol{r}^+=\mathrm{diag}(\mathbf{R}^+)$ and $\boldsymbol{r}^-=\mathrm{diag}(\mathbf{R}^-)$. Define a matrix $\mathbf{S}\in\mathbb{R}^{B\times B(B-1)/2}$ such that $\mathbf{S}\boldsymbol{a}=\mathbf{A1}$, where the vector $\boldsymbol{a}=\mathrm{upper}(\mathbf{A})$ and $\mathbf{A}$ is a symmetric matrix with zero diagonal entries. By introducing these notations, the optimization problem in (\ref{ex8}) can be reformulated as
\begin{align}\label{ex9}
	\min\limits_{\boldsymbol{\ell}^+\leq 0, \boldsymbol{\ell}^-\leq 0}\quad	&\langle 2\boldsymbol{c}-\mathbf{S}^\top \boldsymbol{d},\boldsymbol{\ell}^+\rangle+\alpha_+\langle (2\mathbf{I}+\mathbf{S}^\top\mathbf{S})\boldsymbol{\ell}^+,\boldsymbol{\ell}^+\rangle\notag\\
	&-\langle 2\boldsymbol{c}-\mathbf{S}^\top \boldsymbol{d} ,\boldsymbol{\ell}^-\rangle+\alpha_-\langle (2\mathbf{I}+\mathbf{S}^\top\mathbf{S})\boldsymbol{\ell}^-,\boldsymbol{\ell}^-\rangle\notag\\
	&-\frac{1}{\eta_+}\log_{}{[\langle 2\boldsymbol{c}-\mathbf{S}^\top \boldsymbol{d}, \boldsymbol{\ell}^+\rangle+2(\mathbf{1}^\top\boldsymbol{p}^+) +\mathbf{1}^\top\boldsymbol{r}^+]}\notag\\
	&-\frac{1}{\eta_-}\log_{}[\langle 2\boldsymbol{c}-\mathbf{S}^\top \boldsymbol{d}, \boldsymbol{\ell}^-\rangle	+2(\mathbf{1}^\top\boldsymbol{p}^-) +\mathbf{1}^\top\boldsymbol{r}^-]\notag\\
	\mathrm{s.t.}\quad
	&\mathbf{1}^\top\boldsymbol{\ell}^+=-B/2, \mathbf{1}^\top\boldsymbol{\ell}^-=-B/2,\notag\\
	&\boldsymbol{\ell}^+\bot\boldsymbol{\ell}^-	,
\end{align}
where $\langle \cdot,\cdot\rangle$ and $\bot$ represent the inner product operator and the vertical operator, respectively.
The first and third terms of the objective function in (\ref{ex9}) correspond to the trace terms in (\ref{ex8}), the second and fourth terms correspond to the Frobenius norm. The total variation constraint is incorporated into the objective function as a penalty term, utilizing two positive parameters $\eta_{+}$ and $\eta_{-}$, which correspond to the last two terms of the objective function in (\ref{ex9}). The last constraint $\boldsymbol{\ell}^+\bot\boldsymbol{\ell}^-$ 
corresponds to the set constraint $\tilde{\mathbf{L}}_{B}^s\in\mathcal{U} ,\forall s\in\{+,-\}$ in (\ref{ex8}). The constraint $\boldsymbol{\ell}^+\bot\boldsymbol{\ell}^-$ along with $\boldsymbol{\ell}^+\leq 0,\ \boldsymbol{\ell}^-\leq 0$ is referred to as a complementarity constraint \cite{sch}, contributing to the non-convexity of the problem in (\ref{ex9}). In \cite{wanl}, it is proved that the ADMM algorithm can guarantee convergence when applied to optimization problems with complementarity constraints, provided certain assumptions are met. To apply the ADMM algorithm effectively, we introduce two auxiliary variables, $\mathbf{v}^+=\boldsymbol{\ell}^+$ and $\mathbf{v}^-=\boldsymbol{\ell}^-$. Thus, we obtain the following reformulation of problem (\ref{ex9})
\begin{align}\label{ex10}
	\min\limits_{\mathbf{v}^+,\mathbf{v}^-,\boldsymbol{\ell}^+, \boldsymbol{\ell}^-}\quad
	& f(\boldsymbol{\ell}^+, \boldsymbol{\ell}^-)+\imath_{\mathcal{D}}(\mathbf{v}^+,\mathbf{v}^-)+\imath_{\mathcal{S}}(\boldsymbol{\ell}^+)+\imath_{\mathcal{S}}(\boldsymbol{\ell}^-)	\notag\\
	\mathrm{s.t.}\quad
	&\mathbf{v}^+-\boldsymbol{\ell}^+=\mathbf{0},\mathbf{v}^--\boldsymbol{\ell}^-=\mathbf{0}.
\end{align}	
Here, $f(\boldsymbol{\ell}^+, \boldsymbol{\ell}^-)$ is the objective function in problem (\ref{ex9}), $\imath_{\mathcal{D}}(\cdot)$ and $\imath_{\mathcal{S}}(\cdot)$ are the indicator functions for the constraint set $\mathcal{D}=\{(\mathbf{v}^+,\mathbf{v}^-):\mathbf{v}^+\leq 0,\mathbf{v}^-\leq 0,\mathbf{v}^+\bot\mathbf{v}^- \}$ and the hyperplane $\mathcal{S}=\{\boldsymbol{\ell}^s:\mathbf{1}^\top\boldsymbol{\ell}^s=-B/2, \forall s\in\{+,-\}\}$, respectively. Letting $h(\boldsymbol{\ell}^+,\boldsymbol{\ell}^-)= f(\boldsymbol{\ell}^+, \boldsymbol{\ell}^-)+\imath_{\mathcal{S}}(\boldsymbol{\ell}^+)+\imath_{\mathcal{S}}(\boldsymbol{\ell}^-)$, the problem (\ref{ex10}) can be reformulated in standard ADMM form
\begin{align}\label{admm}
	\min\limits_{\mathbf{v}^+,\mathbf{v}^-,\boldsymbol{\ell}^+,\boldsymbol{\ell}^-}\quad
	& h(\boldsymbol{\ell}^{+},\boldsymbol{\ell}^{-})+\imath_{\mathcal{D}}(\mathbf{v}^{+},\mathbf{v}^{-})	\notag\\
	\mathrm{s.t.}\quad
	&\mathbf{v}^{+}-\boldsymbol{\ell}^{+}=\mathbf{0}, \mathbf{v}^{-}-\boldsymbol{\ell}^{-}=\mathbf{0}.
\end{align}
The corresponding augmented Lagrangian is given as

\begin{align}\label{ex11}
	\mathcal{L}_{\rho}(\mathbf{v}^{+},\mathbf{v}^{-},\boldsymbol{\ell}^{+},\boldsymbol{\ell}^{-},\boldsymbol{\lambda}_{1},\boldsymbol{\lambda}_{2})&=h(\boldsymbol{\ell}^{+},\boldsymbol{\ell}^{-})+\imath_{\mathcal{D}}(\mathbf{v}^{+},\mathbf{v}^{-})+\boldsymbol{\lambda}_{1}^\top(\mathbf{v}^{+}-\boldsymbol{\ell}^{+})\notag\\
	&\quad+\frac{\rho}{2}\Vert\mathbf{v}^{+}-\boldsymbol{\ell}^{+}\Vert_2^2+\boldsymbol{\lambda}_{2}^\top(\mathbf{v}^{-}-\boldsymbol{\ell}^{-})+\frac{\rho}{2}\Vert\mathbf{v}^{-}-\boldsymbol{\ell}^{-}\Vert_2^2,
\end{align}
where $\boldsymbol{\lambda}_1$ and $\boldsymbol{\lambda}_2$ are dual variables, and $\rho\ge 0$ is the ADMM penalty parameter.

The $t$-th ADMM iteration consists of three updates, where $t$ denotes the iteration number. Fixing ${\boldsymbol{\ell}^{+}}^{(t)}, {\boldsymbol{\ell}^{-}}^{(t)},\boldsymbol{\lambda}_1^{(t)}\ \mathrm{and}\ \boldsymbol{\lambda}_2^{(t)}$, the update of $(\mathbf{v}^+,\mathbf{v}^-)$ is
\begin{align}\label{ex12}
	&({\mathbf{v}^{+}}^{(t+1)},{\mathbf{v}^{-}}^{(t+1)})=\min\limits_{\mathbf{v}^+,\mathbf{v}^-}\quad\imath_{\mathcal{D}}(\mathbf{v}^+,\mathbf{v}^-)+\frac{\rho}{2}\big\Vert\mathbf{v}^{+}-{\boldsymbol{\ell}^{+}}^{(t)}+\frac{\boldsymbol{\lambda}_1^{(t)}}{\rho}\big\Vert_2^2+\frac{\rho}{2}\big\Vert\mathbf{v}^{-}-{\boldsymbol{\ell}^{-}}^{(t)}+\frac{\boldsymbol{\lambda}_2^{(t)}}{\rho}\big\Vert_2^2,
\end{align}
which has a closed-form solution
\begin{align}\label{ex13}
	({\mathbf{v}^{+}}^{(t+1)},{\mathbf{v}^{-}}^{(t+1)})=\mathrm{\Pi}_\mathcal{D}\left[\left(\Big({\boldsymbol{\ell}^{+}}^{(t)}-\frac{\boldsymbol{\lambda}_1^{(t)}}{\rho}\Big)^\top, \Big({\boldsymbol{\ell}^{-}}^{(t)}-\frac{\boldsymbol{\lambda}_2^{(t)}}{\rho}\Big)^\top\right)^\top\right],
\end{align}
where $\mathrm{\Pi}_\mathcal{D}[\cdot]$ is the projection operator on the set $\mathcal{D}$.

Fixing $(\boldsymbol{\lambda}_1^{(t)}$, $\boldsymbol{\lambda}_2^{(t)})$ and leveraging the newly updated values from the previous step $({\mathbf{v}^{+}}^{(t+1)}$, ${\mathbf{v}^{-}}^{(t+1)})$, the update for $(\boldsymbol{\ell}^{+}, \boldsymbol{\ell}^{-})$ is given by
\begin{align}\label{ex14}
	({\boldsymbol{\ell}^{+}}^{(t+1)},{\boldsymbol{\ell}^{-}}^{(t+1)})=
	\min\limits_{\boldsymbol{\ell}^+,\boldsymbol{\ell}^-}\quad &f(\boldsymbol{\ell}^+, \boldsymbol{\ell}^-)+\imath_{\mathcal{S}}(\boldsymbol{\ell}^+)+\imath_{\mathcal{S}}(\boldsymbol{\ell}^-)+\frac{\rho}{2}\big\Vert{\mathbf{v}^{+}}^{(t+1)}-\boldsymbol{\ell}^{+}+\frac{\boldsymbol{\lambda}_1^{(t)}}{\rho}\big\Vert_2^2\notag\\
	&+\frac{\rho}{2}\big\Vert{\mathbf{v}^{-}}^{(t+1)}-\boldsymbol{\ell}^{-}+\frac{\boldsymbol{\lambda}_2^{(t)}}{\rho}\big\Vert_2^2.
\end{align}
\begin{algorithm}[!t]
	\caption{ADMM Algorithm for Solving Subproblem (\ref{ex8}) }
	\label{alg:alg1}
	\renewcommand{\algorithmicrequire}{\textbf{Input:}}
	\renewcommand{\algorithmicensure}{\textbf{Output:}}
	\renewcommand{\algorithmicrequire}{\textbf{Input:}}
	\begin{algorithmic}[1]  
		\STATE Initialize ${\boldsymbol{\ell}^{+}}^{(0)}$, ${\boldsymbol{\ell}^{-}}^{(0)}$, $\boldsymbol{\lambda}_1^{(0)}$ and $\boldsymbol{\lambda}_2^{(0)}$ 
		\FOR {$t=0, 1, 2,
			\cdots, T-1$}
		\STATE Update ${\mathbf{v}^{+}}^{(t+1)}$, ${\mathbf{v}^{-}}^{(t+1)}$ according to (\ref{ex13})
		\STATE Update ${\boldsymbol{\ell}^{+}}^{(t+1)}$, ${\boldsymbol{\ell}^{-}}^{(t+1)}$ according to (\ref{ex15})
		\STATE Update $\boldsymbol{\lambda}_1^{(t+1)}$, $\boldsymbol{\lambda}_2^{(t+1)}$ according to (\ref{ex16})
		\ENDFOR
		\STATE  Convert ${\boldsymbol{\ell}^{+}}^{(t+1)}$, ${\boldsymbol{\ell}^{-}}^{(t+1)}$ into matrix form and 
		update $\tilde{\mathbf{L}}_{B}^{{+}^{(t+1)}}$, $\tilde{\mathbf{L}}_{B}^{{-}^{(t+1)}}$
	\end{algorithmic}\label{alg1}
\end{algorithm}
The updates of $\boldsymbol{\ell}^{+},\boldsymbol{\ell}^{-}$ are
\begin{align}\label{ex15}
	{\boldsymbol{\ell}^{+}}^{(t+1)}=\mathrm{\Pi}_\mathcal{S} &\left[{\boldsymbol{\ell}^{+}}^{(t)}-\left(\nabla^2 f({\boldsymbol{\ell}^{+}}^{(t)})+\rho\mathbf{I}\right)^{-1}\left(\nabla f({\boldsymbol{\ell}^{+}}^{(t)})\right.\right.\left.\left.-\rho {\mathbf{v}^{+}}^{(t+1)}+\rho	{\boldsymbol{\ell}^{+}}^{(t)}-\boldsymbol{\lambda}_1^{(t)}\right)\right],\notag\\
	{\boldsymbol{\ell}^{-}}^{(t+1)}=\mathrm{\Pi}_\mathcal{S} &\left[{\boldsymbol{\ell}^{-}}^{(t)}-\left(\nabla^2 f({\boldsymbol{\ell}^{-}}^{(t)})+\rho\mathbf{I}\right)^{-1}\left(\nabla f({\boldsymbol{\ell}^{-}}^{(t)})\right.\right.\left.\left.-\rho {\mathbf{v}^{-}}^{(t+1)}+\rho	{\boldsymbol{\ell}^{-}}^{(t)}-\boldsymbol{\lambda}_2^{(t)}\right)\right],
\end{align}
where $\mathrm{\Pi}_\mathcal{S}[\cdot]$ is the projection operator on the hyperplane $\mathcal{S}$, and $\nabla^2$ and $\nabla$ are represent the Hessian matrix and the gradient operator, respectively. 

Subsequently, the updates of dual variables are
\begin{align}\label{ex16}
	\boldsymbol{\lambda}_1^{(t+1)}=\boldsymbol{\lambda}_1^{(t)}+\rho({\mathbf{v}^{+}}^{(t+1)}-{\boldsymbol{\ell}^{+}}^{(t+1)}),\notag\\
	\boldsymbol{\lambda}_2^{(t+1)}=\boldsymbol{\lambda}_2^{(t)}+\rho({\mathbf{v}^{-}}^{(t+1)}-{\boldsymbol{\ell}^{-}}^{(t+1)}).
\end{align}
The ADMM algorithm for updating $\tilde{\mathbf{L}}_B^{s}$ is placed in Algorithm \ref{alg1}. 

\subsection{$\tilde{\mathbf{P}}^{s}$-update Step}\label{up2}

Given the $\mathbf{R}^s$ and leveraging the estimate $\tilde{\mathbf{L}}_{B}^s$ from the $\tilde{\mathbf{L}}_B^{s}$-update step, we solve the following optimization subproblem with respect to $\tilde{\mathbf{P}}^s$
\begin{align}\label{ex17}
	\min\limits_{\tilde{\mathbf{P}}^+,\tilde{\mathbf{P}}^-}\quad
	&2\mathrm{tr}(\tilde{\mathbf{P}}^{+})-2\mathrm{tr}(\tilde{\mathbf{P}}^{-})+\sum_{s\in\{+,-\}}\sigma_{s}\Vert\tilde{\mathbf{P}}^{s}\Vert_{2,1}\notag\\
	\mathrm{s.t.}\quad
	&\mathrm{tr}(\hat{\mathbf{C}}_{B}\tilde{\mathbf{L}}_{B}^{s})+2\mathrm{tr}(\tilde{\mathbf{P}}^{s})+\mathrm{tr}(\mathbf{R}^{s})\ge0, \forall s\in\{+,-\}.
\end{align}
\subsection{$\mathbf{R}^s$-update Step}\label{up3}
According to the estimate $\tilde{\mathbf{L}}_{B}^s$ and $\tilde{\mathbf{P}}^s$, we minimize the following convex optimization subproblem to update $\mathbf{R}^s$
\begin{align}\label{ex18}
	\min\limits_{\mathbf{R}^+,\mathbf{R}^-}\quad
	&\mathrm{tr}(\mathbf{R}^s)\notag\\
	\mathrm{s.t.}\quad
	&\mathrm{tr}(\hat{\mathbf{C}}_{B}\tilde{\mathbf{L}}_{B}^{s})+2\mathrm{tr}(\tilde{\mathbf{P}}^{s})+\mathrm{tr}(\mathbf{R}^{s})\ge0,\notag\\
	&\mathrm{tr}(\mathbf{R}^s)\ge 0, \forall s\in\{+,-\}.
\end{align}
We summarize the proposed SGL-HNCS method in Algorithm \ref{alg2}.

\section{Convergence and Complexity Analysis} \label{sec5}
In this section, we first provide several supporting lemmas and analyze the convergence of Algorithm \ref{alg1} and Algorithm \ref{alg2}. Then, we analyze the complexity of the proposed SGL-HNCS algorithm.
\begin{algorithm}[!t]
	\caption{The SGL-HNCS (proposed) Algorithm }
	\label{alg:alg2}
	\renewcommand{\algorithmicrequire}{\textbf{Input:}}
	\renewcommand{\algorithmicensure}{\textbf{Output:}}
	\renewcommand{\algorithmicrequire}{\textbf{Input:}}
	\begin{algorithmic}[1]  
		\REQUIRE Original graph signals $\mathbf{X}\in\mathbb{R}^{N\times K}$, number of hidden nodes $H$, number of iterations $M$, parameters $\alpha_s$, $\theta_s$, $\forall s\in\{+,-\}$, $\eta_+$, $\eta_-$, and $\rho$  
		\ENSURE Graph Laplacians $\tilde{\mathbf{L}}_{B}^{+}$, $\tilde{\mathbf{L}}_{B}^{-}$  
		\STATE Initial $\tilde{\mathbf{L}}_{B}^{{+}^{(0)}}$, $\tilde{\mathbf{L}}_{B}^{{-}^{(0)}}$, $\tilde{\mathbf{P}}^{{+}^{(0)}}$, $\tilde{\mathbf{P}}^{{-}^{(0)}}$, $\mathbf{R}^{{+}^{(0)}}$  and $\mathbf{R}^{{-}^{(0)}}$
		\STATE Select the set of hidden nodes $\mathcal{H}=\{1,\cdots,H\}$ by random selection
		\STATE Select the remaining $B=N-H$ rows of original graph  signals as the observed graph signals $\mathbf{X}_B$
		\STATE Compute the sample covariance matrix
		{$\hat{\mathbf{C}}_{B}$}
		\FOR{$m=0,1, 2,\cdots ,M-1$}
		\STATE Update $\tilde{\mathbf{L}}_{B}^{{+}^{(m+1)}}$ and $\tilde{\mathbf{L}}_{B}^{{-}^{(m+1)}}$ via Algorithm \ref{alg1}
		\STATE Update $\tilde{\mathbf{P}}^{+^{(m+1)}}$ and $\tilde{\mathbf{P}}^{-^{(m+1)}}$ by solving (\ref{ex17})
		\STATE Update $\mathbf{R}^{+^{(m+1)}}$ and $\mathbf{R}^{-^{(m+1)}}$ by solving (\ref{ex18})
		\ENDFOR	
		\STATE $\tilde{\mathbf{L}}_B^+=\tilde{\mathbf{L}}_B^{+^{(m+1)}}$, $\tilde{\mathbf{L}}_B^-=\tilde{\mathbf{L}}_B^{-^{(m+1)}}$	
	\end{algorithmic}\label{alg2}
\end{algorithm}
\subsection{Convergence Analysis of Algorithm \ref{alg1}}
To guide the convergence of Algorithm \ref{alg1}, we introduce the following lemmas.

\newtheorem{lemma}{Lemma}
\newproof{pf}{Proof}

\begin{lemma}\label{lem1}
	The objective function $h(\boldsymbol{\ell}^{+},\boldsymbol{\ell}^{-})+\imath_{\mathcal{D}}(\mathbf{v}^{+},\mathbf{v}^{-})$ in (\ref{ex10}) is coercive over the constraint set $\mathcal{F}:=\{(\mathbf{v}^{+},\mathbf{v}^{-},\boldsymbol{\ell}^{+},\boldsymbol{\ell}^{-}): \mathbf{v}^{+}-\boldsymbol{\ell}^{+}=\mathbf{0}, \mathbf{v}^{-}-\boldsymbol{\ell}^{-}=\mathbf{0}\}$.
\end{lemma}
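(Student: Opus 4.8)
The plan is to reduce the claim to a coercivity statement for the smooth part $f$ alone, and then to show that the positive-definite quadratic terms dominate every other contribution. First I would note that on the affine set $\mathcal{F}$ we have $\mathbf{v}^{+}=\boldsymbol{\ell}^{+}$ and $\mathbf{v}^{-}=\boldsymbol{\ell}^{-}$, so the objective $h(\boldsymbol{\ell}^{+},\boldsymbol{\ell}^{-})+\imath_{\mathcal{D}}(\mathbf{v}^{+},\mathbf{v}^{-})$ is finite precisely on the set where the indicators vanish (i.e.\ $\boldsymbol{\ell}^{s}\le\mathbf{0}$, $\boldsymbol{\ell}^{+}\bot\boldsymbol{\ell}^{-}$, $\mathbf{1}^{\top}\boldsymbol{\ell}^{s}=-B/2$) and where the two logarithmic arguments are positive; there it equals $f(\boldsymbol{\ell}^{+},\boldsymbol{\ell}^{-})$. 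Since on $\mathcal{F}$ one has $\Vert(\mathbf{v}^{+},\mathbf{v}^{-},\boldsymbol{\ell}^{+},\boldsymbol{\ell}^{-})\Vert^{2}=2\Vert\boldsymbol{\ell}^{+}\Vert^{2}+2\Vert\boldsymbol{\ell}^{-}\Vert^{2}$, coercivity over $\mathcal{F}$ is equivalent to showing $f(\boldsymbol{\ell}^{+},\boldsymbol{\ell}^{-})\to+\infty$ whenever $\Vert\boldsymbol{\ell}^{+}\Vert^{2}+\Vert\boldsymbol{\ell}^{-}\Vert^{2}\to\infty$ inside this effective domain.

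Next I would bound $f$ from below term by term. The quadratic terms are controlled by the positive definiteness of $2\mathbf{I}+\mathbf{S}^{\top}\mathbf{S}$: since $\mathbf{S}^{\top}\mathbf{S}\succeq\mathbf{0}$, the smallest eigenvalue of $2\mathbf{I}+\mathbf{S}^{\top}\mathbf{S}$ is at least $2$, giving $\alpha_{s}\langle(2\mathbf{I}+\mathbf{S}^{\top}\mathbf{S})\boldsymbol{\ell}^{s},\boldsymbol{\ell}^{s}\rangle\ge 2\alpha_{s}\Vert\boldsymbol{\ell}^{s}\Vert^{2}$. The two linear terms $\pm\langle 2\boldsymbol{c}-\mathbf{S}^{\top}\boldsymbol{d},\boldsymbol{\ell}^{s}\rangle$ are bounded below by $-\Vert 2\boldsymbol{c}-\mathbf{S}^{\top}\boldsymbol{d}\Vert\,\Vert\boldsymbol{\ell}^{s}\Vert$ by Cauchy--Schwarz, hence grow at most linearly. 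For the barrier terms, the argument of each logarithm, $\langle 2\boldsymbol{c}-\mathbf{S}^{\top}\boldsymbol{d},\boldsymbol{\ell}^{s}\rangle+2(\mathbf{1}^{\top}\boldsymbol{p}^{s})+\mathbf{1}^{\top}\boldsymbol{r}^{s}$, is affine in $\boldsymbol{\ell}^{s}$ with $\boldsymbol{p}^{s},\boldsymbol{r}^{s}$ held fixed during this update, so it is bounded above by $C_{1}+C_{2}\Vert\boldsymbol{\ell}^{s}\Vert$ for suitable constants; consequently $-\tfrac{1}{\eta_{s}}\log(\cdot)\ge-\tfrac{1}{\eta_{s}}\log\!\big(C_{1}+C_{2}\Vert\boldsymbol{\ell}^{s}\Vert\big)$, which decays at most logarithmically.

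Combining these bounds yields
\begin{align*}
f(\boldsymbol{\ell}^{+},\boldsymbol{\ell}^{-})\ \ge\ \sum_{s\in\{+,-\}}\Big(2\alpha_{s}\Vert\boldsymbol{\ell}^{s}\Vert^{2}-\Vert 2\boldsymbol{c}-\mathbf{S}^{\top}\boldsymbol{d}\Vert\,\Vert\boldsymbol{\ell}^{s}\Vert-\tfrac{1}{\eta_{s}}\log\!\big(C_{1}+C_{2}\Vert\boldsymbol{\ell}^{s}\Vert\big)\Big),
\end{align*}
and since the quadratic term dominates both the linear and the logarithmic terms, the right-hand side tends to $+\infty$ as $\Vert\boldsymbol{\ell}^{+}\Vert^{2}+\Vert\boldsymbol{\ell}^{-}\Vert^{2}\to\infty$, establishing the claim.

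The step I expect to be the main obstacle is the barrier terms: because they enter with a negative sign, a careless estimate would suggest they could drive $f$ to $-\infty$. The crux is to observe that their arguments are only affine in $\boldsymbol{\ell}^{s}$, so each $-\log$ can decrease no faster than $\log\Vert\boldsymbol{\ell}^{s}\Vert$, which is negligible against the quadratic growth $2\alpha_{s}\Vert\boldsymbol{\ell}^{s}\Vert^{2}$. As a side remark, coercivity can also be read off directly from the feasibility constraints, since $\boldsymbol{\ell}^{s}\le\mathbf{0}$ together with $\mathbf{1}^{\top}\boldsymbol{\ell}^{s}=-B/2$ confines each $\boldsymbol{\ell}^{s}$ to the bounded box $[-B/2,0]^{B(B-1)/2}$; nevertheless I would present the quadratic argument, as it is the standard route for the coercivity hypothesis required by the ADMM convergence result of \cite{wanl} and does not depend on the box being bounded.
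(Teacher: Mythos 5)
Your proof is correct and follows essentially the same route as the paper's: both establish coercivity by showing that the positive-definite quadratic terms $\alpha_{s}\langle(2\mathbf{I}+\mathbf{S}^{\top}\mathbf{S})\boldsymbol{\ell}^{s},\boldsymbol{\ell}^{s}\rangle$ dominate the linear and logarithmic-barrier terms, and both dismiss the indicator functions as irrelevant to the growth analysis. Where you differ is in rigor rather than strategy. The paper handles the linear terms by noting their signs (via positive semi-definiteness of $\hat{\mathbf{C}}_{B}$ and the Laplacians) and then asserts that ``the quadratic terms grow faster''; you instead bound both linear terms below by Cauchy--Schwarz, which is cleaner and does not need the sign argument. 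More importantly, the paper's treatment of the barrier terms appeals to $\eta_{+},\eta_{-}$ being ``sufficiently large,'' which is both unnecessary and slightly misleading; your observation that the log arguments are affine in $\boldsymbol{\ell}^{s}$, hence each $-\log(\cdot)$ can decrease at most logarithmically in $\Vert\boldsymbol{\ell}^{s}\Vert$, is the correct reason and holds for any fixed $\eta_{s}>0$. Finally, your side remark is a genuinely different argument the paper does not make: on $\mathcal{F}$ the effective domain forces $\boldsymbol{\ell}^{s}\le\mathbf{0}$ and $\mathbf{1}^{\top}\boldsymbol{\ell}^{s}=-B/2$, confining each $\boldsymbol{\ell}^{s}$ to the bounded box $[-B/2,0]^{B(B-1)/2}$, so coercivity holds vacuously (the objective is $+\infty$ at every feasible point of large norm). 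That one-line argument is simpler and arguably preferable, though your choice to present the quadratic-dominance proof keeps the result robust to relaxations of the trace constraint and matches the form of coercivity hypothesis invoked from \cite{wanl}.
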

\begin{proof}
	Refer to \ref{app1}.
\end{proof}

\begin{lemma}\label{sub_Lc}
	The solutions to the two problems (\ref{ex12}) and (\ref{ex14}) of the optimization problem (\ref{ex10}) are Lipschitz continuous.
\end{lemma}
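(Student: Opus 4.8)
The plan is to treat the two subproblems separately, since their structure is quite different: the $(\boldsymbol{\ell}^+,\boldsymbol{\ell}^-)$-update (\ref{ex14}) is a strongly convex program, whereas the $(\mathbf{v}^+,\mathbf{v}^-)$-update (\ref{ex12}) reduces via the closed form (\ref{ex13}) to a projection onto the \emph{non-convex} complementarity set $\mathcal{D}$. For each subproblem I would exhibit an explicit Lipschitz constant for the map sending the quantities held fixed during that update (the $\mathbf{v}$'s, $\boldsymbol{\ell}$'s, and dual variables $\boldsymbol{\lambda}_1,\boldsymbol{\lambda}_2$ coming from the other blocks) to the computed minimizer.

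For (\ref{ex14}) I would first note that $f$ in (\ref{ex9}) is separable across $\boldsymbol{\ell}^+$ and $\boldsymbol{\ell}^-$, so the problem splits into two independent programs; I analyze the $+$ block, the $-$ block being identical. The objective is the sum of an affine term, the quadratic $\alpha_+\langle(2\mathbf{I}+\mathbf{S}^\top\mathbf{S})\boldsymbol{\ell}^+,\boldsymbol{\ell}^+\rangle$, the logarithmic barrier $-\frac{1}{\eta_+}\log[\cdot]$ (a $-\log$ of an affine function, hence convex on its domain), the indicator $\imath_{\mathcal{S}}$ of an affine hyperplane, and the proximal term $\frac{\rho}{2}\|\mathbf{v}^{+(t+1)}-\boldsymbol{\ell}^++\boldsymbol{\lambda}_1^{(t)}/\rho\|_2^2$. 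Since $2\mathbf{I}+\mathbf{S}^\top\mathbf{S}\succeq 2\mathbf{I}$, the Hessian is bounded below by $(4\alpha_++\rho)\mathbf{I}\succ\mathbf{0}$, so the objective is $\mu$-strongly convex with $\mu\ge 4\alpha_+$ and the minimizer is unique. The fixed parameters enter the gradient affinely, so I would invoke the standard parametric-optimization argument: writing the first-order optimality conditions at two parameter vectors $\theta_1,\theta_2$, combining strong monotonicity $\langle\nabla_{\boldsymbol{\ell}}F(\boldsymbol{\ell}_1,\theta_1)-\nabla_{\boldsymbol{\ell}}F(\boldsymbol{\ell}_2,\theta_1),\boldsymbol{\ell}_1-\boldsymbol{\ell}_2\rangle\ge\mu\|\boldsymbol{\ell}_1-\boldsymbol{\ell}_2\|_2^2$ with the Lipschitz-in-$\theta$ dependence of the gradient, and concluding $\|\boldsymbol{\ell}^*(\theta_1)-\boldsymbol{\ell}^*(\theta_2)\|_2\le(L/\mu)\|\theta_1-\theta_2\|_2$.

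For (\ref{ex12}) the closed form (\ref{ex13}) shows the minimizer is $\mathrm{\Pi}_{\mathcal{D}}$ applied to $(\boldsymbol{\ell}^{+(t)}-\boldsymbol{\lambda}_1^{(t)}/\rho,\ \boldsymbol{\ell}^{-(t)}-\boldsymbol{\lambda}_2^{(t)}/\rho)$, so it suffices to control the modulus of $\mathrm{\Pi}_{\mathcal{D}}$. I would exploit that the complementarity constraint in $\mathcal{D}$ is separable, so $\mathcal{D}$ factors coordinate-wise into the union of the two non-positive half-axes $\{(t,0):t\le0\}\cup\{(0,s):s\le0\}$, and the projection is computed coordinate-wise as the nearer of the two half-axis projections $(\min(a,0),0)$ and $(0,\min(b,0))$. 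Each candidate is a projection onto a convex ray, hence firmly nonexpansive and globally $1$-Lipschitz; on every polyhedral cell of the input space where one candidate is strictly closer, the selected map therefore coincides with a single $1$-Lipschitz map, and piecing the finitely many cells together yields the desired (piecewise) Lipschitz bound.

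The main obstacle is exactly the non-convexity of $\mathcal{D}$ in the $\mathbf{v}$-update. On the ``tie'' locus where the two half-axis projections are equidistant (for instance $a=b\le0$ within a coordinate), $\mathrm{\Pi}_{\mathcal{D}}$ is set-valued, and a naive selection jumps between $(a,0)$ and $(0,b)$, so the map is not globally single-valued and hence cannot be globally Lipschitz. I would handle this by fixing a deterministic tie-breaking rule, which renders $\mathrm{\Pi}_{\mathcal{D}}$ single-valued and $1$-Lipschitz on the interior of each polyhedral cell; since the downstream convergence analysis only invokes the Lipschitz inequality between the successive iterates actually produced by Algorithm~\ref{alg1}, this uniform cell-wise constant is what is genuinely required. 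Verifying that this selection-based notion of Lipschitz continuity is adequate for the later descent estimates is, I expect, the delicate point, whereas the strongly convex $\boldsymbol{\ell}$-update is routine by comparison.
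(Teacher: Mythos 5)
Your reading of the lemma differs from the one the paper actually proves, and under your reading the statement is false --- which is the root of the difficulty you ran into. The paper's proof does not analyze the ADMM update maps (\ref{ex12}) and (\ref{ex14}) as functions of the iterates held fixed. Instead, it verifies the ``Lipschitz subminimization path'' assumption of the convergence framework in \cite{wanl}: writing (\ref{ex10}) as $\min\, h(\boldsymbol{\ell})+\imath_{\mathcal{D}}(\mathbf{v})$ subject to $\mathbf{A}\mathbf{v}-\mathbf{B}\boldsymbol{\ell}=\mathbf{0}$ with $\mathbf{A}=\mathbf{I}$ and $\mathbf{B}=-\mathbf{I}$, it considers the maps $G(\boldsymbol{\ell})=\arg\min_{\mathbf{v}}\{h(\boldsymbol{\ell})+\imath_{\mathcal{D}}(\mathbf{v}):\mathbf{A}\mathbf{v}=\boldsymbol{\ell}\}$ and $F(\mathbf{v})=\arg\min_{\boldsymbol{\ell}}\{h(\boldsymbol{\ell})+\imath_{\mathcal{D}}(\mathbf{v}):-\mathbf{B}\boldsymbol{\ell}=\mathbf{v}\}$, i.e., minimization subject to the coupling constraint holding \emph{exactly}. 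Because $\mathbf{A}$ and $-\mathbf{B}$ are identity matrices, the feasible set in each case is a singleton, both maps are the identity, and Lipschitz continuity (with constant $\bar{M}\ge 1$) is immediate. This property --- not continuity of the proximal and projection steps --- is what the theorem of \cite{wanl} invoked in Theorem \ref{the} requires, so the lemma as intended is essentially trivial; the wording of the lemma invites your reading, but the paper's appendix makes the intended meaning clear.

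Your treatment of (\ref{ex14}) via strong convexity is correct as a statement about parametric dependence of that prox step, but it proves something the framework does not ask for. The genuine gap is in your treatment of (\ref{ex12}): as you yourself observe, the projection $\mathrm{\Pi}_{\mathcal{D}}$ onto the complementarity set is set-valued on the tie locus, and any single-valued selection has a jump of \emph{non-vanishing} size across it (in one coordinate pair, for $a=b<0$ the two candidates $(a,0)$ and $(0,b)$ lie at distance $\sqrt{2}\,|a|$ apart, which does not shrink as the tie set is approached). Hence no tie-breaking rule can make the map Lipschitz, or even continuous; a bound valid only on the interior of each polyhedral cell is not Lipschitz continuity of the update map, and your closing admission that verifying the adequacy of this cell-wise notion is ``the delicate point'' is exactly where the argument cannot be completed. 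When the claim resisted proof under your interpretation, the interpretation itself was the thing to question: the assumptions in \cite{wanl} concern subminimization paths along the linear coupling constraint, not the nonconvex projection, and that is how the paper sidesteps the discontinuity entirely.
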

\begin{proof}
	Refer to \ref{app2}.
\end{proof}

\begin{lemma}\label{Lip}
	The function $h(\boldsymbol{\ell}^{+},\boldsymbol{\ell}^{-})$ is Lipschitz differentiable.
\end{lemma}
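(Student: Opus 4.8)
The plan is to establish that the smooth summand of $h$ possesses a Lipschitz continuous gradient; the two terms $\imath_{\mathcal{S}}(\boldsymbol{\ell}^{+})$ and $\imath_{\mathcal{S}}(\boldsymbol{\ell}^{-})$ are proper, closed, and convex indicators that merely confine the variables to the affine hyperplane $\mathcal{S}$, so it suffices to show that $f(\boldsymbol{\ell}^{+},\boldsymbol{\ell}^{-})$ from (\ref{ex9}) is Lipschitz differentiable on its effective domain. First I would split $f$ into three pieces: the linear terms $\pm\langle 2\boldsymbol{c}-\mathbf{S}^\top\boldsymbol{d},\boldsymbol{\ell}^{s}\rangle$, the quadratic terms $\alpha_{s}\langle(2\mathbf{I}+\mathbf{S}^\top\mathbf{S})\boldsymbol{\ell}^{s},\boldsymbol{\ell}^{s}\rangle$, and the two logarithmic-barrier terms. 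Since $f$ contains no cross term between $\boldsymbol{\ell}^{+}$ and $\boldsymbol{\ell}^{-}$ (the complementarity coupling lives in $\imath_{\mathcal{D}}$, not in $h$), each block can be treated independently and the Lipschitz constants combined at the end.

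For the linear and quadratic pieces the claim is immediate. The gradient of a linear term is constant, hence Lipschitz with constant $0$, while the quadratic term has the constant Hessian $2\alpha_{s}(2\mathbf{I}+\mathbf{S}^\top\mathbf{S})$, so its gradient is Lipschitz with the explicit constant $2\alpha_{s}\Vert 2\mathbf{I}+\mathbf{S}^\top\mathbf{S}\Vert_{2}$, independent of the iterates.

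The genuine work is in the logarithmic barriers. Writing $g_{s}(\boldsymbol{\ell}^{s}):=\langle 2\boldsymbol{c}-\mathbf{S}^\top\boldsymbol{d},\boldsymbol{\ell}^{s}\rangle+2(\mathbf{1}^\top\boldsymbol{p}^{s})+\mathbf{1}^\top\boldsymbol{r}^{s}$ for the argument of each $\log$, the gradient of $-\tfrac{1}{\eta_{s}}\log g_{s}$ is $-\tfrac{1}{\eta_{s}}\,\tfrac{2\boldsymbol{c}-\mathbf{S}^\top\boldsymbol{d}}{g_{s}}$ and its Hessian is the rank-one matrix $\tfrac{1}{\eta_{s}}\,\tfrac{(2\boldsymbol{c}-\mathbf{S}^\top\boldsymbol{d})(2\boldsymbol{c}-\mathbf{S}^\top\boldsymbol{d})^\top}{g_{s}^{2}}$, whose operator norm equals $\tfrac{1}{\eta_{s}}\,\tfrac{\Vert 2\boldsymbol{c}-\mathbf{S}^\top\boldsymbol{d}\Vert_{2}^{2}}{g_{s}^{2}}$. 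This is bounded precisely when $g_{s}$ stays away from zero, and here lies the main obstacle: on the feasibility boundary the total-variation argument $g_{s}$ may vanish, the barrier gradient blows up, and no \emph{global} Lipschitz constant exists.

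To circumvent this I would restrict attention to the region actually visited by the algorithm. The feasible set imposes $\boldsymbol{\ell}^{s}\leq\mathbf{0}$ together with $\mathbf{1}^\top\boldsymbol{\ell}^{s}=-B/2$, so every coordinate lies in $[-B/2,0]$ and the domain is a compact polytope. Invoking the coercivity established in Lemma~\ref{lem1}, the iterates remain inside a sublevel set of $h$; since $h$ is finite only where $g_{s}>0$, this sublevel set is contained in $\{g_{s}\geq\epsilon\}$ for some $\epsilon>0$. On this compact set the Hessian norm above is uniformly bounded by $\tfrac{1}{\eta_{s}\epsilon^{2}}\Vert 2\boldsymbol{c}-\mathbf{S}^\top\boldsymbol{d}\Vert_{2}^{2}$, which gives a Lipschitz constant for the barrier gradient. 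Summing the three contributions over $s\in\{+,-\}$ produces a single Lipschitz constant $L_{h}$ for $\nabla f$, completing the argument. I expect the delicate point to be the rigorous justification that the relevant sublevel set stays uniformly interior, i.e. that $\epsilon$ can be chosen strictly positive and independent of the iterate index; this is exactly where the coercivity of Lemma~\ref{lem1} and the compactness of the polytope must be combined with care.
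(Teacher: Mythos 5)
Your proof follows essentially the same route as the paper's: both split the smooth part into the linear terms, the quadratic terms, and the two logarithmic barriers, dismiss the indicator functions, and reduce the whole lemma to controlling the barrier gradient. Your rank-one Hessian bound $\tfrac{1}{\eta_s}\Vert 2\boldsymbol{c}-\mathbf{S}^\top\boldsymbol{d}\Vert_2^2/g_s^2$ is exactly equivalent to the paper's first-order estimate, where their constants $\bar M$ and $c_\pm$ play the role of your $\Vert 2\boldsymbol{c}-\mathbf{S}^\top\boldsymbol{d}\Vert_2$ and $\epsilon$. The one place where you go beyond the paper is in trying to \emph{justify} the lower bound $g_s\ge\epsilon>0$: the paper's proof simply asserts that $g(\boldsymbol{\ell}^s)$ ``is a positive and bounded function'' and introduces the constants $c_\pm$ without argument. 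You are right that this bound is the crux of the lemma, and flagging it is a genuine merit of your write-up.

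However, your proposed justification does not go through as stated, for two reasons. First, the compactness claim conflates the two blocks of the splitting: in (\ref{ex10}) the sign constraints $\boldsymbol{\ell}^s\le\mathbf{0}$ and the complementarity are carried by the auxiliary variables $\mathbf{v}^s$ through $\imath_{\mathcal{D}}$, while $h$ contains only $\imath_{\mathcal{S}}$, i.e. the hyperplane $\mathbf{1}^\top\boldsymbol{\ell}^s=-B/2$. The effective domain of $h$ is therefore an unbounded affine set intersected with $\{g_s>0\}$, not a compact polytope; in Algorithm \ref{alg1} the $\boldsymbol{\ell}$-iterates are never forced to be nonpositive, so ``every coordinate lies in $[-B/2,0]$'' is false for the variables on which $h$ is evaluated. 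Second, the sublevel-set step is circular: in the framework of \cite{wanl}, the confinement of the iterates to a sublevel set follows from the monotone decrease of $\mathcal{L}_\rho$, and that descent property is itself proved under the assumption that $h$ is Lipschitz differentiable with $\rho$ chosen large relative to $L_h$ --- the very constant you are constructing. A non-circular repair would have to establish $g_s\ge\epsilon$ on a fixed, algorithm-independent set (for instance by proving Lipschitz differentiability only on $\{g_s\ge\epsilon\}\cap\mathcal{S}$ and checking that the assumptions of \cite{wanl} need nothing more). To be fair, this criticism applies equally to the paper's own proof, which leaves $c_\pm$ unjustified; but as written, your repair does not close the gap either.
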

\begin{proof}
	Refer to \ref{app3}.
\end{proof}

\newtheorem{theorem}{Theorem}
\begin{theorem}\label{the}
	For a sufficiently large penalty parameter $\rho$, the sequence $({\mathbf{v}^{+}}^{(t)},{\mathbf{v}^{-}}^{(t)},{\boldsymbol{\ell}^{+}}^{(t)},{\boldsymbol{\ell}^{-}}^{(t)},	\boldsymbol{\lambda}_1^{(t)}, \\ \boldsymbol{\lambda}_2^{(t)})$ generated by Algorithm \ref{alg1} has limit points, and each limit point is stationary point of the augmented Lagrangian $\mathcal{L}_{\rho}$, i.e., $0\in\partial\mathcal{L}_{\rho}(({\mathbf{v}^{+}}^{*},{\mathbf{v}^{-}}^{*},{\boldsymbol{\ell}^{+}}^{*},{\boldsymbol{\ell}^{-}}^{*},	\boldsymbol{\lambda}_1^{*},	\boldsymbol{\lambda}_2^{*})$.
\end{theorem}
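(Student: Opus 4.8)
The plan is to verify that the present setting satisfies the hypotheses of the general non-convex ADMM convergence framework of \cite{wanl} and then invoke it, using the three preceding lemmas as the key ingredients. The overall strategy rests on three pillars: (i) a sufficient-decrease property showing that $\mathcal{L}_{\rho}$ is monotonically non-increasing along the iterations; (ii) a uniform lower bound on $\mathcal{L}_{\rho}$; and (iii) a passage to the limit in the first-order optimality conditions of the two subproblems. A structural feature I would exploit throughout is that both linear constraints $\mathbf{v}^{s}-\boldsymbol{\ell}^{s}=\mathbf{0}$ have identity coefficient matrices, so the dual variables automatically lie in the range of the relevant gradient map, which is exactly the ``image condition'' required by \cite{wanl}.

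First I would establish the sufficient-decrease inequality by splitting the change in $\mathcal{L}_{\rho}$ over one iteration into the $(\mathbf{v}^{+},\mathbf{v}^{-})$-update, the $(\boldsymbol{\ell}^{+},\boldsymbol{\ell}^{-})$-update, and the dual update. The first two steps are non-increasing by construction, since each exactly minimizes $\mathcal{L}_{\rho}$ in its block and the $(\boldsymbol{\ell}^{+},\boldsymbol{\ell}^{-})$-update is strongly convex with modulus at least $\rho$ owing to the quadratic penalty in (\ref{ex14}). The dual update increases $\mathcal{L}_{\rho}$ by $\tfrac{1}{\rho}\big(\|\boldsymbol{\lambda}_1^{(t+1)}-\boldsymbol{\lambda}_1^{(t)}\|_2^2+\|\boldsymbol{\lambda}_2^{(t+1)}-\boldsymbol{\lambda}_2^{(t)}\|_2^2\big)$. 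The crucial point is to dominate this increase: combining the optimality condition of the $(\boldsymbol{\ell}^{+},\boldsymbol{\ell}^{-})$-update with the dual rule (\ref{ex16}) yields $\boldsymbol{\lambda}_i^{(t+1)}\in\partial h$ evaluated at the new iterate, and Lemma \ref{Lip} then gives $\|\boldsymbol{\lambda}_i^{(t+1)}-\boldsymbol{\lambda}_i^{(t)}\|_2\le L_h\|{\boldsymbol{\ell}^{s}}^{(t+1)}-{\boldsymbol{\ell}^{s}}^{(t)}\|_2$ with $L_h$ the Lipschitz constant of $\nabla h$. Choosing $\rho$ larger than a fixed multiple of $L_h$ makes the dual increase strictly dominated by the primal decrease, producing $\mathcal{L}_{\rho}^{(t+1)}\le\mathcal{L}_{\rho}^{(t)}-c\sum_{s}\big(\|{\mathbf{v}^{s}}^{(t+1)}-{\mathbf{v}^{s}}^{(t)}\|_2^2+\|{\boldsymbol{\ell}^{s}}^{(t+1)}-{\boldsymbol{\ell}^{s}}^{(t)}\|_2^2\big)$ for some $c>0$.

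Next I would show $\mathcal{L}_{\rho}$ is bounded below along the sequence. Using the same dual relation to absorb the linear Lagrangian terms and completing the square, $\mathcal{L}_{\rho}^{(t)}$ is bounded below by the restriction of the objective in (\ref{ex10}) to the feasible set $\mathcal{F}$, up to a remainder controlled by $L_h$; Lemma \ref{lem1} (coercivity over $\mathcal{F}$) then makes this lower bound finite. Together with the monotone decrease, the scalar sequence $\mathcal{L}_{\rho}^{(t)}$ converges, and telescoping the sufficient-decrease inequality forces every successive difference $\|{\mathbf{v}^{s}}^{(t+1)}-{\mathbf{v}^{s}}^{(t)}\|_2$ and $\|{\boldsymbol{\ell}^{s}}^{(t+1)}-{\boldsymbol{\ell}^{s}}^{(t)}\|_2$ to tend to zero; via (\ref{ex16}) and the Lipschitz bound this also drives the primal residuals ${\mathbf{v}^{s}}^{(t)}-{\boldsymbol{\ell}^{s}}^{(t)}$ and the dual increments to zero. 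Coercivity additionally confines the whole iterate sequence to a bounded set, so by Bolzano--Weierstrass it admits at least one limit point.

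Finally I would verify that every limit point is stationary. Restricting to a convergent subsequence, I pass to the limit in the subgradient optimality conditions of (\ref{ex12}) and (\ref{ex14}): the vanishing successive differences annihilate the perturbation terms, and the closedness of the limiting subdifferentials of $\imath_{\mathcal{D}}$, $\imath_{\mathcal{S}}$ and $h$---together with the regularity of the subproblem solution maps guaranteed by Lemma \ref{sub_Lc}---lets me conclude that the partial inclusions characterizing $0\in\partial\mathcal{L}_{\rho}$ hold simultaneously at the limit point. The main obstacle is the sufficient-decrease step: because the complementarity constraint makes $\imath_{\mathcal{D}}$ and the overall problem non-convex, standard convex ADMM arguments do not apply, and I must carefully use the identity constraint matrices to bound the dual increments by the primal ones and then quantify precisely how large $\rho$ must be relative to $L_h$ for the net per-iteration decrease to remain strict.
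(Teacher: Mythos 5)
Your proposal follows essentially the same route as the paper: both reduce the theorem to the nonconvex-ADMM convergence framework of \cite{wanl}, with Lemma \ref{lem1} (coercivity over $\mathcal{F}$), Lemma \ref{sub_Lc} (Lipschitz continuity of the subproblem solution maps) and Lemma \ref{Lip} (Lipschitz differentiability of $h$) supplying exactly the assumptions that framework requires. The only difference is one of detail, not of approach: the paper simply cites \cite{wanl} once its assumptions are verified, whereas you additionally reconstruct that framework's internal mechanics (sufficient decrease with $\rho$ large relative to $L_h$, lower boundedness, vanishing successive differences, and passage to the limit in the optimality conditions), which is consistent with and subsumed by the cited argument.
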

\begin{proof}
	The convergence result stated in Theorem \ref{the} relies on the framework established in \cite{wanl}. To be precise, the assumptions \textit{A2} and \textit{A4} in \cite{wanl} are obviously satisfied in our case (see \cite{wanl} for a detailed proof). Based on Lemma \ref{lem1}-\ref{sub_Lc}, we demonstrate that all the assumptions in \cite{wanl} hold in our context. Hence, we complete the proof.
\end{proof}

\subsection{Convergence Analysis of Algorithm \ref{alg2}}
To establish the convergence property of Algorithm \ref{alg2}, we first introduce some definitions. Let $\phi(\mathbf{z})$ denotes the objective function in (\ref{ex7}), with $\mathbf{z}:=[\mathbf{z}_1^\top, \mathbf{z}_2^\top, \mathbf{z}_3^\top]^\top$ and $\mathbf{z}_1^\top:=(\mathrm{vec}(\tilde{\mathbf{L}}_{O}^{+}),\mathrm{vec}(\tilde{\mathbf{L}}_{O}^{-}))$, $\mathbf{z}_2^\top:=(\mathrm{vec}(\tilde{\mathbf{P}}^+),\mathrm{vec}(\tilde{\mathbf{P}}^-))$, $\mathbf{z}_3^\top:=(\mathrm{vec}(\mathbf{R}^+),\mathrm{vec}(\mathbf{R}^-))$.
Let $\mathcal{Z}^*$ represents the set of stationary points of $\phi(\mathbf{z})$ and  $\mathbf{z}^{(m)}=[(\mathbf{z}_1^{(m)})^\top,(\mathbf{z}_2^{(m)})^\top,(\mathbf{z}_3^{(m)})^\top ]^\top$ denote the solution generated by running three optimization subproblems $m$ times alternately.
Then, the following supporting lemma is introduced for our algorithm analysis.

\begin{lemma}\label{lem5}
	The function $\phi(\mathbf{z})$ is regular at every point in $\mathcal{Z}^*$.
\end{lemma}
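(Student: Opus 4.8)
The plan is to exploit the additive splitting of $\phi$ into a smooth part and a block-separable convex part, and then invoke the standard fact that such a sum is regular in the sense required by the block coordinate descent framework of \cite{tse}. Recall that a lower semicontinuous function is regular at a point when its one-sided directional derivatives exist and coincide with the corresponding generalized (Clarke) directional derivatives in every direction; in the coordinatewise setting of \cite{tse}, this is equivalent to the statement that $\phi'(\mathbf{z};\mathbf{d})\ge 0$ holds for every direction $\mathbf{d}$ whenever it holds for every direction confined to a single block $\mathbf{z}_i$.

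First I would decompose the objective in (\ref{ex7}) as $\phi(\mathbf{z})=\phi_0(\mathbf{z})+\phi_1(\mathbf{z}_2)$, where $\phi_0$ collects all the trace terms together with the Frobenius-norm penalties $\alpha_s\Vert\tilde{\mathbf{L}}_B^s\Vert_F^2$, and $\phi_1(\mathbf{z}_2)=\sum_{s\in\{+,-\}}\sigma_s\Vert\tilde{\mathbf{P}}^s\Vert_{2,1}$ is the group-Lasso term. The decisive observations are that (i) every trace term is linear and each $\alpha_s\Vert\tilde{\mathbf{L}}_B^s\Vert_F^2$ is a smooth quadratic, so $\phi_0$ is continuously differentiable; and (ii) the only nonsmooth term, $\phi_1$, depends on the single block $\mathbf{z}_2=(\mathrm{vec}(\tilde{\mathbf{P}}^+),\mathrm{vec}(\tilde{\mathbf{P}}^-))$ and is a nonnegative combination of norms, hence convex.

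The core of the argument is then a short regularity calculus. A continuously differentiable function is regular at every point, since its directional derivative coincides with the inner product against the gradient in every direction; thus $\phi_0$ is regular everywhere. A finite convex function is likewise regular at every point of its domain, so $\phi_1$ is regular everywhere. Because one summand is smooth, the sum rule for Clarke subdifferentials holds with equality and without any constraint qualification, yielding both $\partial\phi(\mathbf{z})=\nabla\phi_0(\mathbf{z})+\partial\phi_1(\mathbf{z}_2)$ and the regularity of $\phi$. Since the nonsmooth part is confined to the $\mathbf{z}_2$ block, the structure matches exactly the separable form $f_0+\sum_i f_i(\mathbf{z}_i)$ required by \cite{tse}, so the coordinatewise version of regularity follows as well. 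In particular $\phi$ is regular at every $\mathbf{z}\in\mathcal{Z}^*$.

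The step I expect to be the main obstacle is pinning down the precise notion of regularity invoked and confirming that the group-Lasso term does not destroy it, i.e.\ verifying that $\Vert\cdot\Vert_{2,1}$ is convex and hence regular even at points where some column group vanishes, which is precisely where it fails to be differentiable. Everything else is routine once the smooth/nonsmooth split is made, because the smoothness of $\phi_0$ removes any need for a constraint qualification in the sum rule. A secondary care point is to keep $\phi$ defined as the \emph{unconstrained} objective of (\ref{ex7}); the feasibility constraints, notably the non-convex complementarity set $\mathcal{U}$, are handled within the individual block updates rather than folded into $\phi$, so they do not enter the regularity argument for $\phi$ itself.
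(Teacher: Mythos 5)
Your proposal is correct and follows essentially the same route as the paper: you use the identical smooth/nonsmooth splitting (the paper's $\phi_1$/$\phi_2$ is your $\phi_0$/$\phi_1$) and the same decisive observation that the only nonsmooth term, the group-Lasso penalty, is confined to the single block $\mathbf{z}_2$, so block-separability of the nonsmooth part yields regularity. Your write-up merely makes explicit the standard facts (regularity of $C^1$ functions, regularity of finite convex functions, and the exact Clarke sum rule when one summand is smooth) that the paper compresses into a citation of \cite{hong}.
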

\begin{proof}
	Refer to \ref{app4}.
\end{proof}

The following theorem formally states that the Algorithm \ref{alg:alg2} converges to a stationary point.

\begin{theorem}\label{the2}
	With the previous definition in place, the sequence $\mathbf{z}^{(m)}=[(\mathbf{z}_1^{(m)})^\top,\mathbf{z}_2^{(m)})^\top,\mathbf{z}_3^{(m)})^\top ]^\top$ generated by Algorithm \ref{alg2} converges to a stationary point in $\mathcal{Z}^*$ as $m\rightarrow\infty$. 	 
\end{theorem}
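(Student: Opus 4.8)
The plan is to cast the three-block update of Algorithm \ref{alg2} as a block coordinate descent (BCD) scheme and invoke the convergence framework of \cite{tse}, using the regularity of $\phi$ established in Lemma \ref{lem5} to promote block-wise stationarity at a limit point to full stationarity $\mathbf{0}\in\partial\phi(\mathbf{z}^*)$. The three blocks are $\mathbf{z}_1$ (the pair $\tilde{\mathbf{L}}_B^{+},\tilde{\mathbf{L}}_B^{-}$), $\mathbf{z}_2$ (the pair $\tilde{\mathbf{P}}^{+},\tilde{\mathbf{P}}^{-}$), and $\mathbf{z}_3$ (the pair $\mathbf{R}^{+},\mathbf{R}^{-}$), updated cyclically.

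First I would write $\phi(\mathbf{z})$ in the coupled-plus-separable form required by Tseng's analysis: a smooth coupling term collecting the trace and Frobenius-norm contributions of (\ref{ex7}), together with block-separable lower-semicontinuous pieces given by the group-Lasso penalties $\Vert\tilde{\mathbf{P}}^{s}\Vert_{2,1}$ and the indicator functions of the feasible sets (the Laplacian set $\mathcal{L}$, the complementarity set $\mathcal{U}$, and the trace and non-negativity constraints). I would then verify that $\phi$ is bounded below and that the iterates stay in a compact set: the equalities $\mathrm{tr}(\tilde{\mathbf{L}}_B^{s})=B$ together with $\tilde{\mathbf{L}}_B^{s}\in\mathcal{L}$ confine $\mathbf{z}_1$, while the non-negativity constraints and the penalty terms control $\mathbf{z}_2$ and $\mathbf{z}_3$. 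Compactness guarantees that $\{\mathbf{z}^{(m)}\}$ admits at least one limit point $\mathbf{z}^*$.

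Next I would establish monotone decrease of the objective along the iterations. The $\tilde{\mathbf{P}}^{s}$- and $\mathbf{R}^{s}$-updates in (\ref{ex17}) and (\ref{ex18}) are convex and solved exactly, hence do not increase $\phi$; the $\tilde{\mathbf{L}}_B^{s}$-update is driven to a stationary point of its block subproblem by Algorithm \ref{alg1}, whose convergence is guaranteed by Theorem \ref{the}. Therefore $\{\phi(\mathbf{z}^{(m)})\}$ is non-increasing and, being bounded below, converges. Passing to a convergent subsequence $\mathbf{z}^{(m_k)}\to\mathbf{z}^*$ and using continuity of the smooth part with closedness of the non-smooth parts, I would show that at $\mathbf{z}^*$ each block satisfies its first-order optimality condition: blocks $\mathbf{z}_2,\mathbf{z}_3$ by convex optimality, block $\mathbf{z}_1$ by the stationarity of the ADMM limit from Theorem \ref{the}. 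Finally I would invoke Lemma \ref{lem5}: since $\phi$ is regular at every point of $\mathcal{Z}^*$, these coordinate-wise stationarity conditions combine into $\mathbf{0}\in\partial\phi(\mathbf{z}^*)$, so $\mathbf{z}^*\in\mathcal{Z}^*$.

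The hard part will be reconciling the inexact, non-convex first block with the classical BCD theory, which is stated for \emph{exact} block minimization. Because the complementarity constraint $\tilde{\mathbf{L}}_B^{s}\in\mathcal{U}$ renders the $\tilde{\mathbf{L}}_B^{s}$-subproblem non-convex, Algorithm \ref{alg1} returns a stationary point rather than a global minimizer, so Tseng's minimization hypothesis cannot be quoted verbatim, and even the monotone-decrease step for this block requires care since the augmented-Lagrangian iteration does not trivially decrease $\phi$ itself. I expect to bridge this gap by warm-starting Algorithm \ref{alg1} at the current iterate and showing that its output supplies a sufficient-decrease inequality for the first block, while its stationarity (Theorem \ref{the}) furnishes exactly the block-directional optimality conditions that the regularity of $\phi$ at limit points (Lemma \ref{lem5}) then assembles into full stationarity. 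Confirming these two properties for the non-convex block is the crux of the argument.
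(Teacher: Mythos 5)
Your proposal follows the same architecture as the paper's proof: both cast Algorithm \ref{alg2} as three-block BCD, verify Tseng's conditions \cite{tse} (continuity of $\phi$, compactness of the level set, block-separability of the non-smooth group-Lasso terms), and use the regularity of $\phi$ from Lemma \ref{lem5} to pass from block-wise optimality at a limit point to full stationarity. The one substantive divergence is the treatment of the non-convex first block. You correctly identify this as the crux: Tseng's theory assumes exact block minimization, whereas the complementarity constraint $\tilde{\mathbf{L}}_B^{s}\in\mathcal{U}$ makes the $\mathbf{z}_1$-subproblem non-convex and Algorithm \ref{alg1} only delivers a stationary point. You propose to bridge this yourself via warm-starting and a sufficient-decrease inequality. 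The paper instead dispatches exactly this issue by citing \cite{fu}, which asserts that BCD convergence is preserved when orthogonality constraints of this type appear in one block, so no bespoke sufficient-decrease argument is constructed there. Your manual route is plausible — under the framework of \cite{wanl} the augmented Lagrangian decreases monotonically and coincides with the block objective at feasible points, so a feasible warm start yields non-increase, while the stationarity of the ADMM limit (Theorem \ref{the}) supplies the block optimality condition that Lemma \ref{lem5} then assembles into $\mathbf{0}\in\partial\phi(\mathbf{z}^*)$ — but as written it remains a plan for the step you flag as outstanding. In short: your approach is the paper's approach, except that the gap you honestly leave open is the gap the paper closes by citation rather than by proof; completing your sufficient-decrease argument would in fact yield a more self-contained proof than the one in the paper.
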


\begin{proof}
	To demonstrate the validity of Theorem \ref{the2}, we leverage the results in \cite{tse}. However, it should be noted that the original objective function contains non-convex orthogonal constraint $\boldsymbol{\ell}^+\bot\boldsymbol{\ell}^-$, which only appear in  $\tilde{\mathbf{L}}_B^{s}$-update step. A pioneering work in \cite{fu} proved that the convergence of the general BCD algorithm can still be guaranteed when orthogonality constraints are considered. Therefore, we only need to establish that the proposed algorithm satisfies the original conditions identified in \cite{tse}.
	
	The function $\phi(\mathbf{z})$ is continuous in $\tilde{\mathbf{L}}_{B}^{+}$, $\tilde{\mathbf{L}}_{B}^{-}$ $\tilde{\mathbf{P}}^+$, $\tilde{\mathbf{P}}^-$, $\mathbf{R}^+$ and $\mathbf{R}^-$, ensuring that the level set $\mathcal{Z}^{0}=\{\mathbf{z}: \phi(\mathbf{z})\leq \phi(\mathbf{z}^{(0)})\}$ is closed. In addition, the domain of function $\phi(\mathbf{z})$ is defined as $\textbf{dom} \ \phi=\{\mathbf{z} \mid \phi(\mathbf{z})\textless +\infty\}$, which is bounded. The closedness and boundedness of the level set $\mathcal{Z}^{0}$ make it compact.
	Based on the Lemma \ref{lem5} and the above verifications, we can conclude that the conditions specified in \cite{tse} are satisfied. By invoking the convergence results in \cite{tse} and \cite{fu}, we conclude that the solution of our algorithm converges to a stationary point.
\end{proof}

\subsection{Complexity Analysis}
To analyze the complexity of Algorithm \ref{alg2}, we first provide the analysis of the ADMM-based algorithm. When updating $(\boldsymbol{\ell}^{+},\boldsymbol{\ell}^{-})$, the most time-consuming operations are the calculation of $\mathbf{S}^\top\mathbf{S}$ and $\left(\nabla^2 f({\boldsymbol{\ell}^{+}},{\boldsymbol{\ell}^{-}})+\rho\mathbf{I}\right)^{-1}\nabla f({\boldsymbol{\ell}^{+}}, {\boldsymbol{\ell}^{-}})$, which have $\mathcal{O}(p^2B+p^3)$ complexity, where $p=B(B-1)/2$. When updating $({\mathbf{v}^{+}},{\mathbf{v}^{-}})$ and 	$(\boldsymbol{\lambda}_1,\boldsymbol{\lambda}_2)$, all operations are element-wise, resulting in a complexity of  $\mathcal{O}(p)$. Thus, the overall complexity of the ADMM-based algorithm is $\mathcal{O}\left(T(p^2B+p^3+p)\right)$, where $T$ is the required number of iterations for the ADMM-based algorithm to converge. Then, the complexity of updating $(\tilde{\mathbf{P}}^+,\tilde{\mathbf{P}}^-)$ and $(\mathbf{R}^+,\mathbf{R}^-)$ using CVX is $\mathcal{O}(B^2+H^2)$. Since $H\ll B$, the contribution of $\mathcal{O}(H^2)$ is negligible. Therefore, the overall complexity of Algorithm \ref{alg2} is $\mathcal{O}\left(M(Tp^2B+Tp^3+Tp)+B^2\right)$, where $M$ represents the number of outer iterations in Algorithm 2.

\section{Numerical Experiments}\label{sec6}
This section presents experimental results for comprehensive evaluations of our proposed graph learning method. Subsection \ref{sec6.1} outlines the experimental settings. Subsection \ref{sec6.2} provides a comparative analysis of our proposed method against several classical methods using synthetic data. Subsection \ref{sec6.3} presents experimental results on real-world data.

\subsection{Experimental Settings}\label{sec6.1}
In this subsection, we introduce the experimental settings, which include the evaluation metrics used and the baseline methods selected for comparison.

\subsubsection{Evaluation Metrics}
To assess the performance of our signed graph learning method, a total of five evaluation metrics are employed. The first metric is the multiclass version of relative error (\text{RelErr}), calculated as the average of $\text{RelErr}^+$ and $\text{RelErr}^-$. 
If $\mathbf{L}^{{+}^{*}}$ is the Laplacian of graph $\mathcal{G}^+$ and $\hat{\mathbf{L}}^+$ is the estimated version, the $\text{RelErr}^+$ is given by $\text{RelErr}^+=\Vert\hat{\mathbf{L}}^+-\mathbf{L}^{{+}^{*}}\Vert_2^2/\Vert\mathbf{L}^{{+}^{*}}\Vert_F^2$. Similarly, the $\text{RelErr}^-$ of graph $\mathcal{G}^-$ is obtained by $\text{RelErr}^-=\Vert\hat{\mathbf{L}}^--\mathbf{L}^{{-}^{*}}\Vert_2^2/\Vert\mathbf{L}^{{-}^{*}}\Vert_F^2$. Multiclass version of $\text{F-score}$ is used as the second metric. 
The $\text{F-score}$ is the average of $\text{F-score}^{+}$ and $\text{F-score}^{-}$, where 
$\text{F-score}^{+}$ and $\text{F-score}^{-}$ are metrics that compare the positive and negative edges of the learned graph to the positive and negative edges of the groundtruth, respectively. The $\text{F-score}$ takes values between 0 and 1, with higher values indicating better performance in capturing the signed graph structure. Additionally, \text{Precision} and \text{Recall} are used to evaluate how effectively the true edge structure of the signed graph is captured. The final evaluation metric, Normalized Mutual Information (\text{NMI}), measures the mutual dependence between the learned edge set and the groundtruth graph.

\subsubsection{Baseline Methods for Comparison}
Several relevant graph learning methods are explored for comparison with our proposed method, i.e., SGL-HNCS. The SGL-HNCS method represents the column-sparsity regularization approach proposed in formulation (\ref{ex7}). The SGL-HNLR method is a variant of the method SGL-HNCS, which replaces the term $\Vert\tilde{\mathbf{P}}^{s}\Vert_{2,1}$ in (\ref{ex7}) with $\Vert\tilde{\mathbf{P}}^{s}\Vert_{*}$ to only account for the low-rank structure of $\tilde{\mathbf{P}}^{s}$, similar to the low-rank scheme GSm-LR described in \cite{Bu27}. The $\text{scSGL}$ method described in \cite{kara18} is a signed graph learning technique that does not account for hidden nodes. Conversely, the $\text{GL}$ method proposed in \cite{15} is an unsigned graph learning technique that ignores the presence of hidden nodes. The GSm-GL method proposed in \cite{Bu27} addresses the same problem as the method $\text{sGL}$ but explicitly considers the presence of hidden nodes.

\subsection{Results on Synthetic Data}\label{sec6.2}

To demonstrate the performance of the proposed method, an artificial dataset is first created. Given a signed graph $\mathcal{G}$ with $N=30$ nodes, we create a type of synthetic graph signals $\mathbf{X}$ based on the assumptions \textbf{AS2} and \textbf{AS3}. The data construction process involves two steps, following the method as described in \cite{15}. In the first step, we generate a random unsigned graph $\mathcal{G}^{'}$ from the $\text{Erd\H{o}s-R{\'e}nyi}$ (ER) model \cite{erd} with an edge connection probability of $p=0.3$. The signed graph $\mathcal{G}$ is obtained by selecting half of the edges from $\mathcal{G}^{'}$ and setting them as negative edges with weights of $-1$, while the remaining edges are set as positive edges with weights of $+1$. In the second step, we generate $K$ graph signals based on the graph Laplacians of the constructed signed graph $\mathcal{G}$. According to the definition of a signed graph, the Laplacian matrices $\mathbf{L}^+$ and $\mathbf{L}^-$ correspond to two unsigned graphs $\mathcal{G}^+$ and $\mathcal{G}^-$, respectively. Let $\mathbf{L}^+=\mathbf{U}\mathbf{\Lambda}\mathbf{U}^\top$ and $\mathbf{L}^-=\mathbf{V}\mathbf{\Sigma}\mathbf{V}^\top$ denote the eigendecomposition of the two unsigned graph Laplacians. The graph signals are generated as $\mathbf{X}=(\mathbf{U}h_1(\mathbf{\Lambda})\mathbf{U}^\top+\mathbf{V}h_2(\mathbf{\Sigma})\mathbf{V}^\top)\mathbf{X}_0+\boldsymbol{\epsilon}$, which are the linear combination of the eigenvectors of $\mathbf{L}^+$ and $\mathbf{L}^-$. The $h_1(\mathbf{\Lambda})$ is a low-pass graph filter defined as $h_1(\mathbf{\Lambda})=\mathbf{\Lambda}^\dagger/\Vert\mathbf{\Lambda}^\dagger\Vert_F$, and $h_2(\mathbf{\Sigma})$ is a high-pass graph filter defined as $h_2(\mathbf{\Sigma})=\mathbf{\Sigma}/\Vert\mathbf{\Sigma}\Vert_F$, where the superscript $^\dagger$ is a pseudo-inverse operator. The columns of $\mathbf{X}_0$ are independent realizations of a multivariate Gaussian distribution $\mathbf{X}_0\sim \mathcal{N}(0,\mathbf{I})$. The $\boldsymbol{\epsilon}$ represents additive white noise following a multivariate Gaussian distribution $\boldsymbol{\epsilon}\sim \mathcal{N}(0,\sigma^2\mathbf{I})$.

Next, several experiments are conducted to investigate the performance of the proposed method and each other baseline methods on synthetic data. Different experimental condition settings are considered, including variations in the number of hidden nodes and the number of graph signals.
\begin{figure*}[!t]
	\centering
	\subfloat[]{\includegraphics[width=0.48\columnwidth]{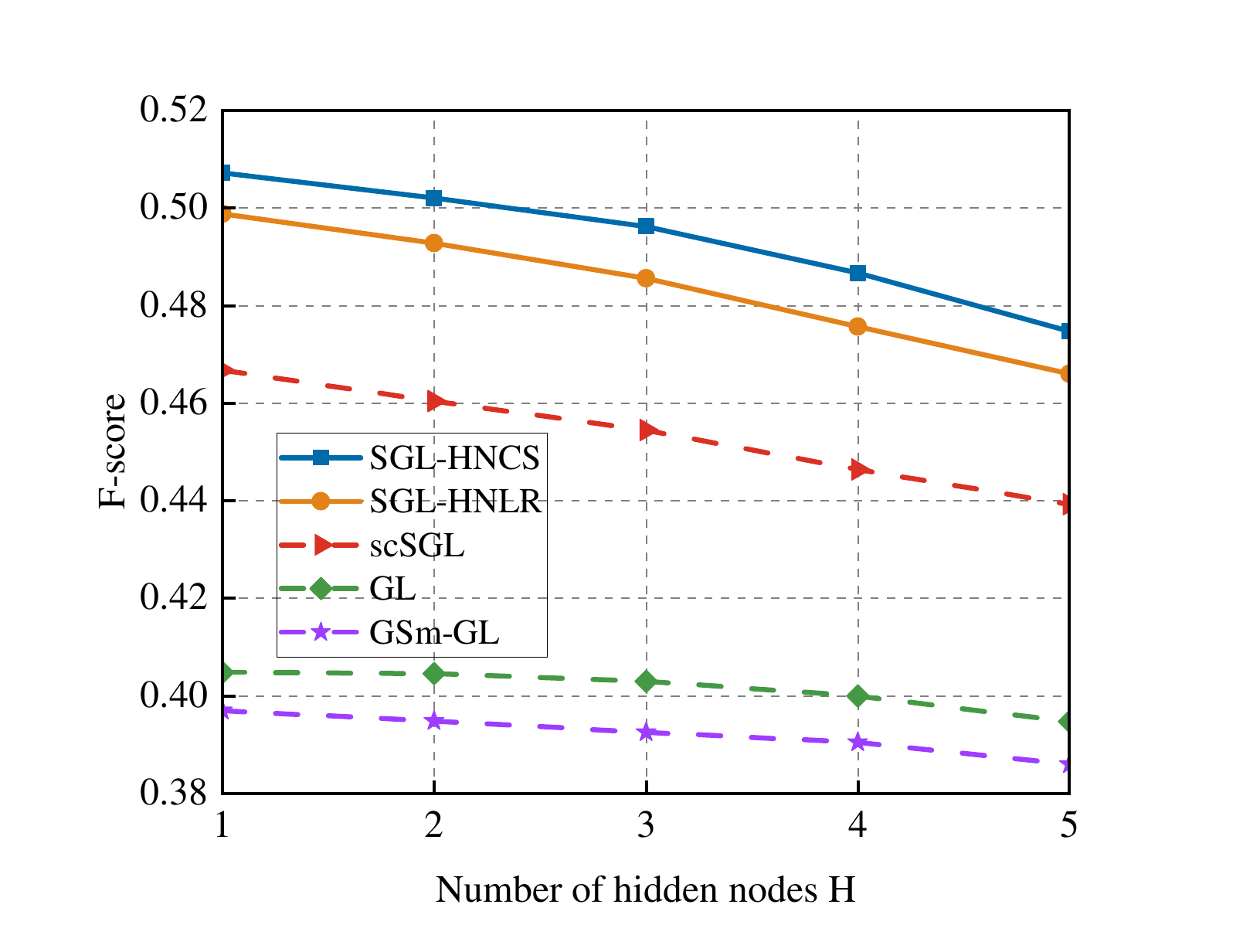}
		\label{f2.1}}
	\subfloat[]{\includegraphics[width=0.48\columnwidth]{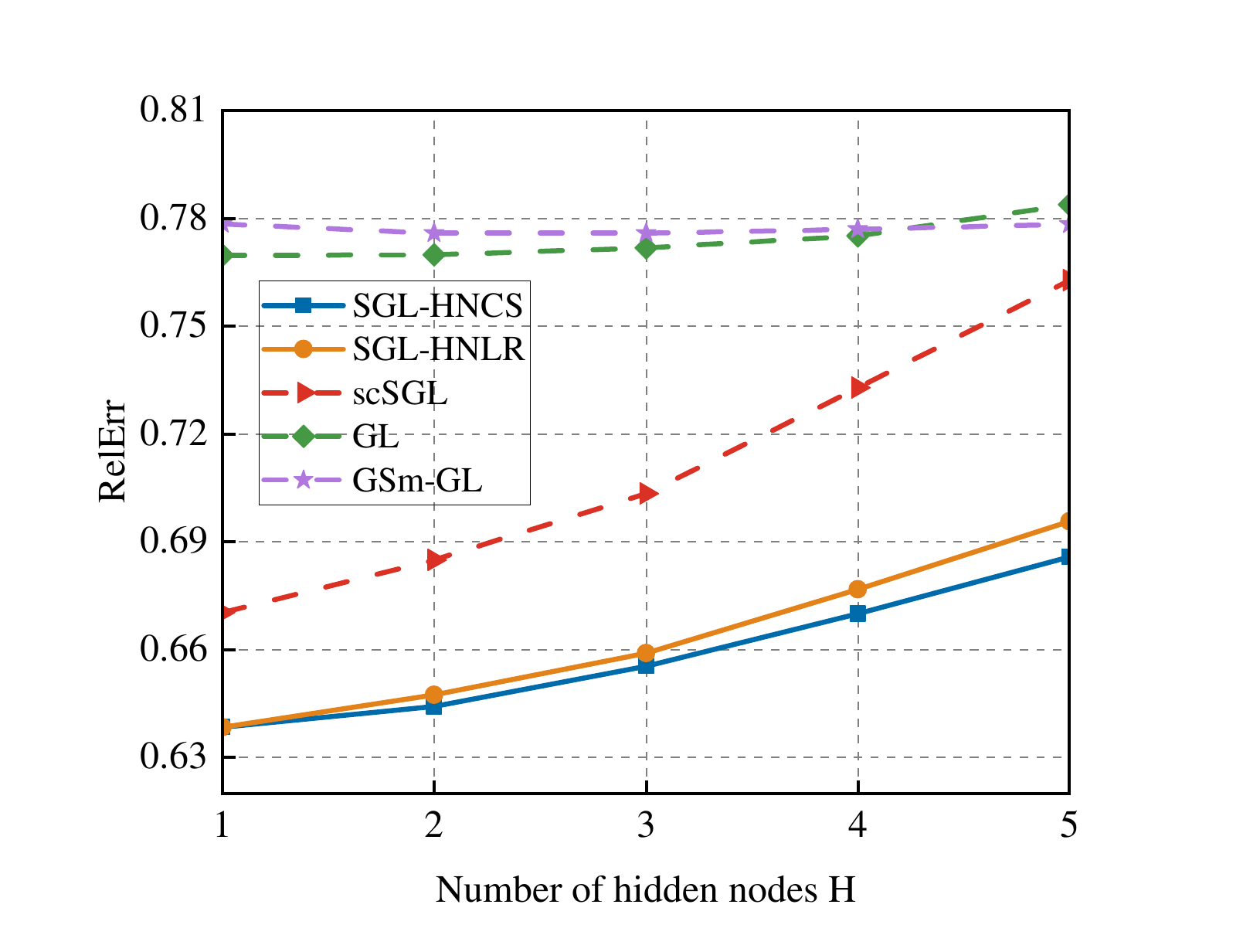}
		\label{f2.2}}
	\caption{Numerical validation of the proposed method under ER graph model with fixed $M$=50. (a) \text{F-score} of the recovered graphs for several methods as the number of hidden nodes increases. (b) \text{RelErr} of recovered graphs as the number of hidden nodes increases for different methods. }
	\label{figr2}
\end{figure*}

\subsubsection{Number of Hidden Nodes}
In the first synthetic experiment, we present how the performance of different methods changes as the number of hidden nodes increases. Specifically, $H$ varies between 1 and 5, and the hidden nodes are randomly selected from the entire set of nodes in the signed graph.

The results of Laplacian matrix estimation, measured by $\text{F-score}$ and $\text{RelErr}$ for different values of $H$ (with a fixed number of graph signals $K=50$), are shown in Fig. \ref{figr2}. It can be seen that for most methods, including $\text{scSGL}$, SGL-HNLR and SGL-HNCS, the $\text{F-score}$ decreases while the $\text{RelErr}$ increases with growing $H$. This is expected, since signed graph learning problem becomes more tractable with an increasing number of hidden nodes. Notably, the method SGL-HNCS and SGL-HNLR , which account for the presence of hidden nodes, outperform the method $\text{scSGL}$. This result indicates the significance of considering hidden nodes in signed graph recovery. Additionally, comparing SGL-HNLR with SGL-HNCS, they exhibit subtle performance differences. This behavior suggests that choosing different regularization constants ($\Vert\cdot\Vert_{*}$ or $\Vert\cdot\Vert_{2,1}$) does not significantly affect the accuracy of the estimated graph. 
Since $\text{GL}$ and GSm-GL learn a graph individually, their performance is not affected by changes in the abscissa variables. $\text{GL}$ shows the worst performance because this method fails to account for hidden nodes. 
	
	\begin{figure}[!t]
		\centering
		{\includegraphics[width=0.49 \columnwidth]{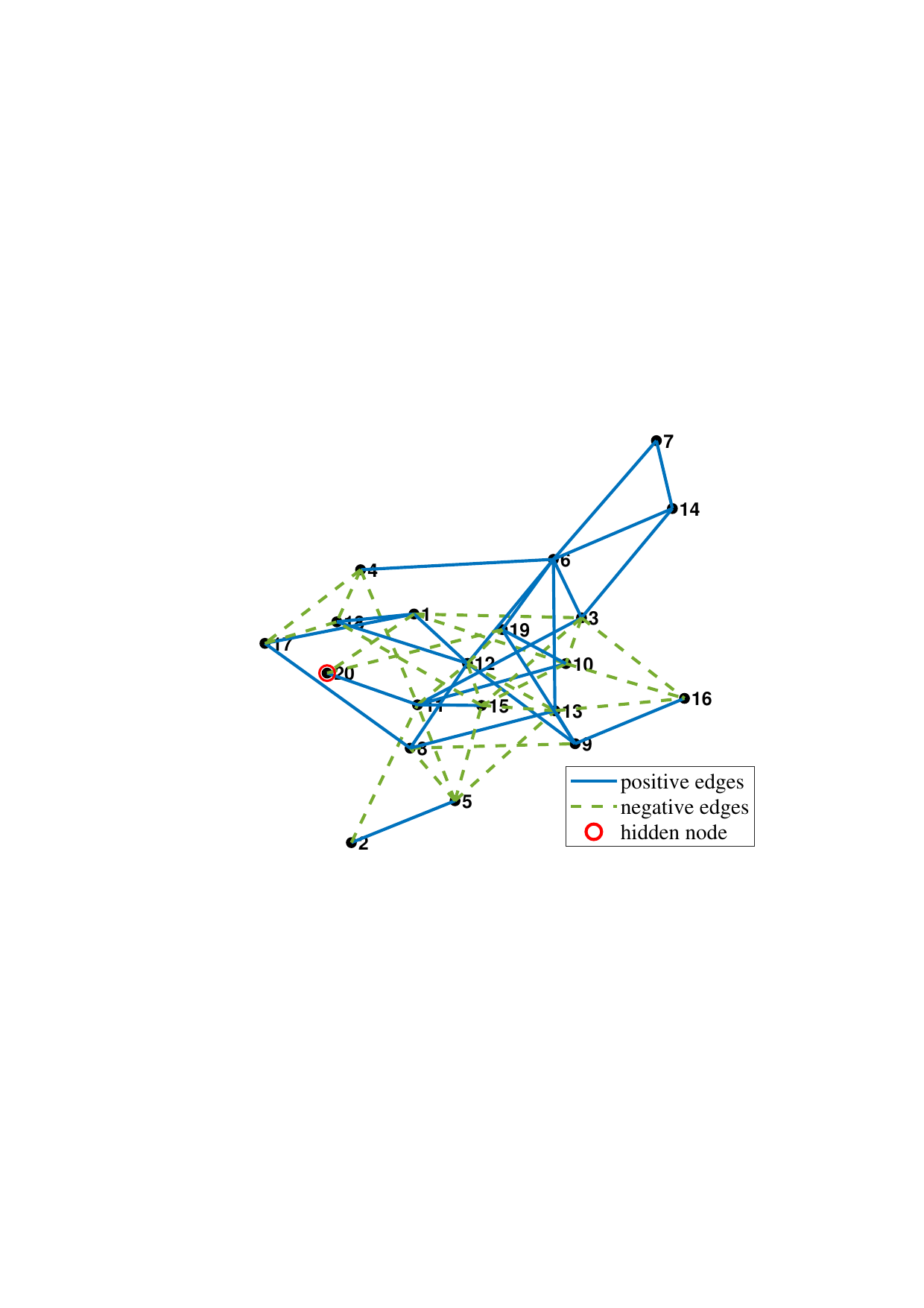}}
		\caption{The designed signed graph based on ER model. }
		\label{graph}
	\end{figure}
	\begin{figure*}[!t]
		\centering
		\subfloat[$\text{Groundtruth}\quad\tilde{\mathbf{L}}_\mathit{B}^{+}$]{\includegraphics[width=0.25\columnwidth]{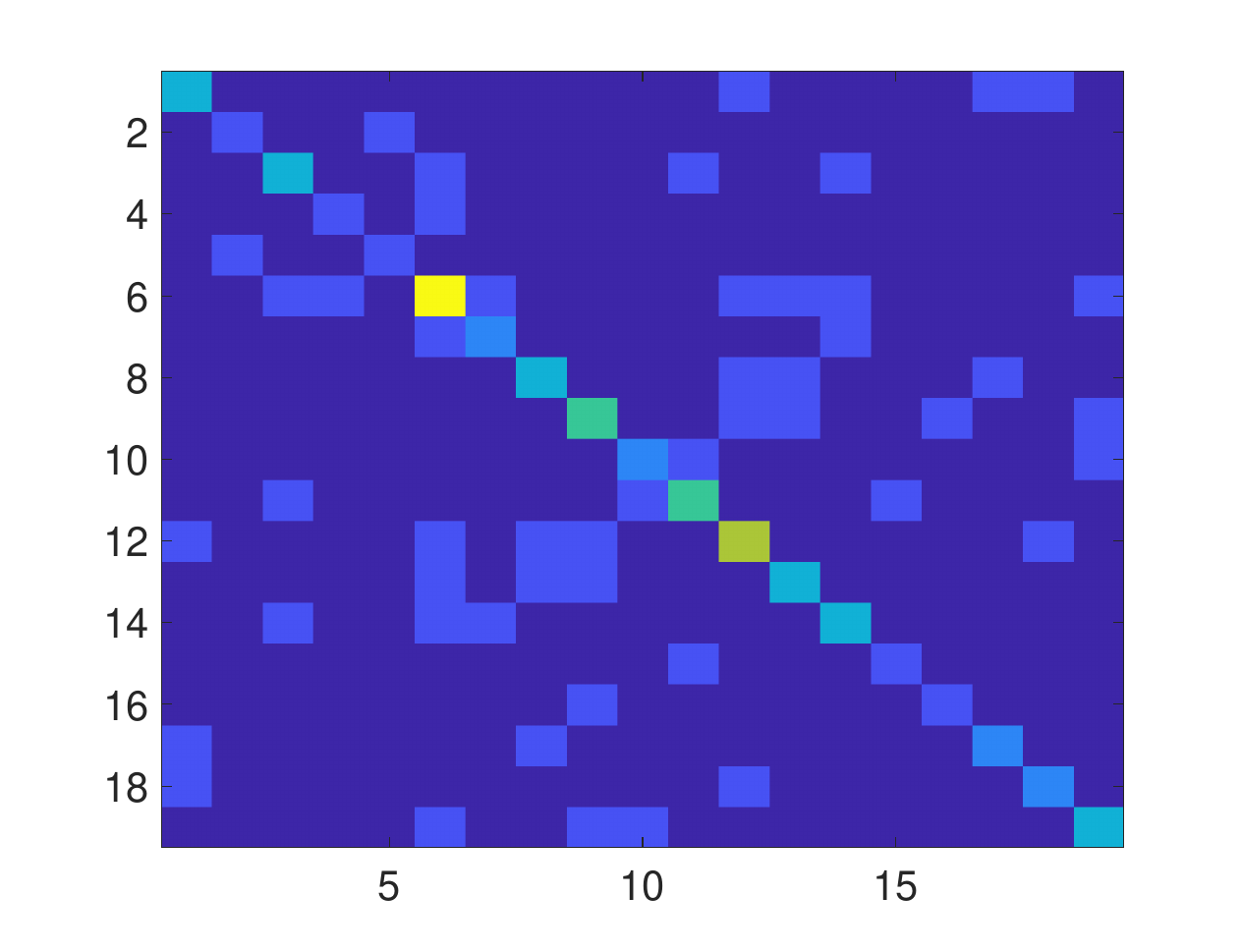}
			\label{f3.2}}
		\subfloat[$\text{SGL-HNLR}\quad\hat{\mathbf{L}}_\mathit{B}^{+}$]{\includegraphics[width=0.25\columnwidth]{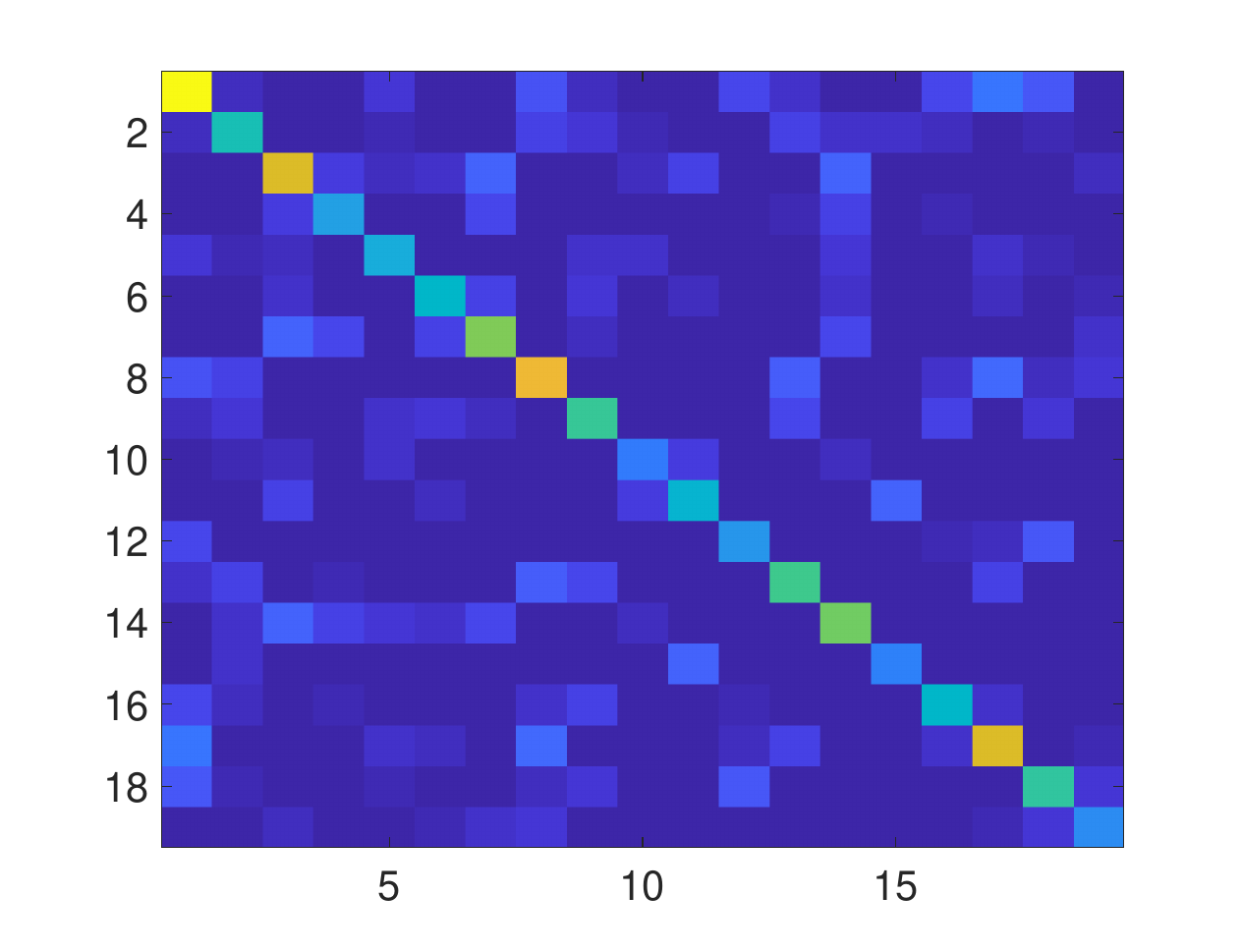}
			\label{f3.3}}
		\subfloat[$\text{SGL-HNCS}\quad\hat{\mathbf{L}}_\mathit{B}^{+}$]{\includegraphics[width=0.25\columnwidth]{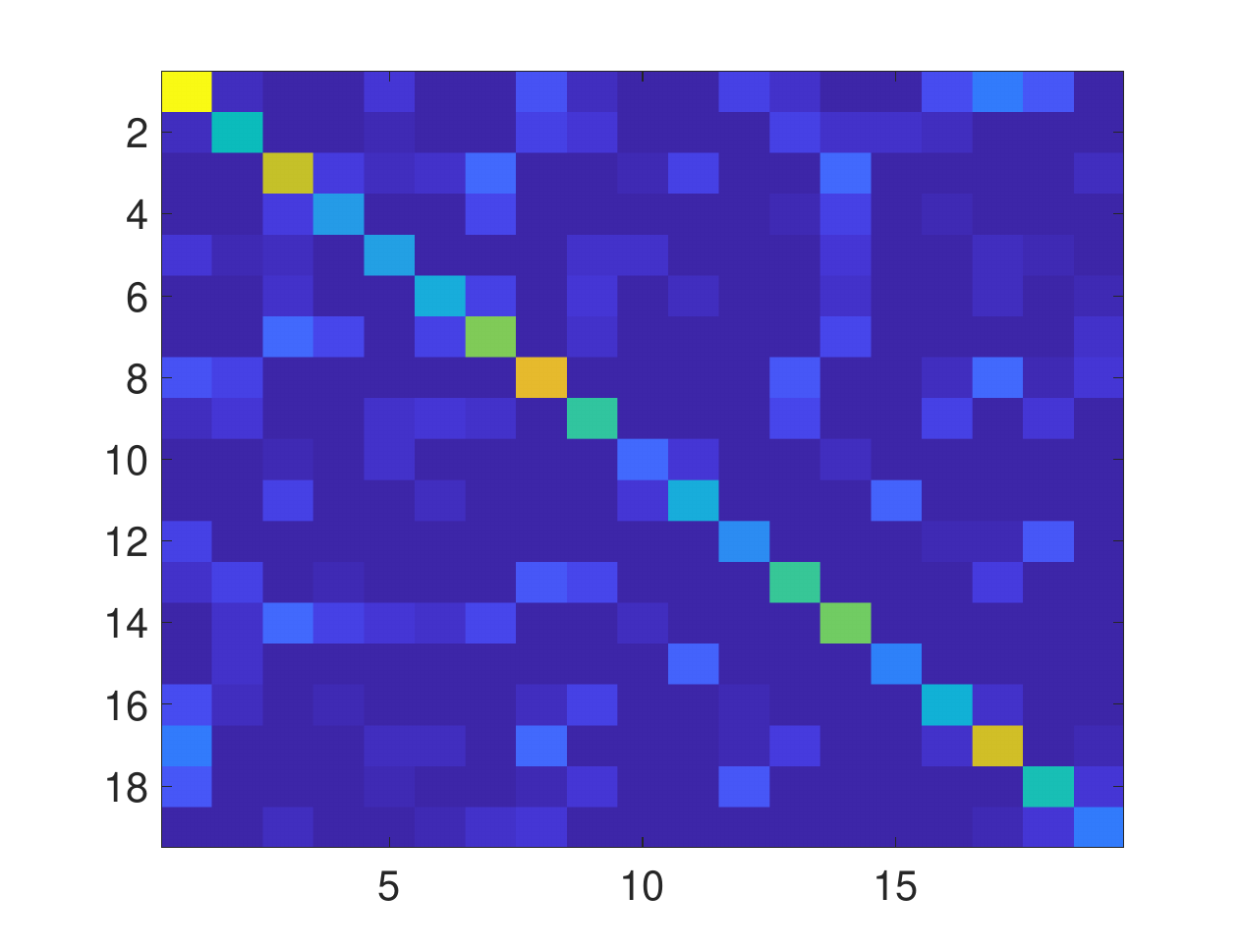}
			\label{f3.4}}
		\subfloat[$\text{scSGL}\quad\hat{\mathbf{L}}_\mathit{B}^{+}$]{\includegraphics[width=0.25\columnwidth]{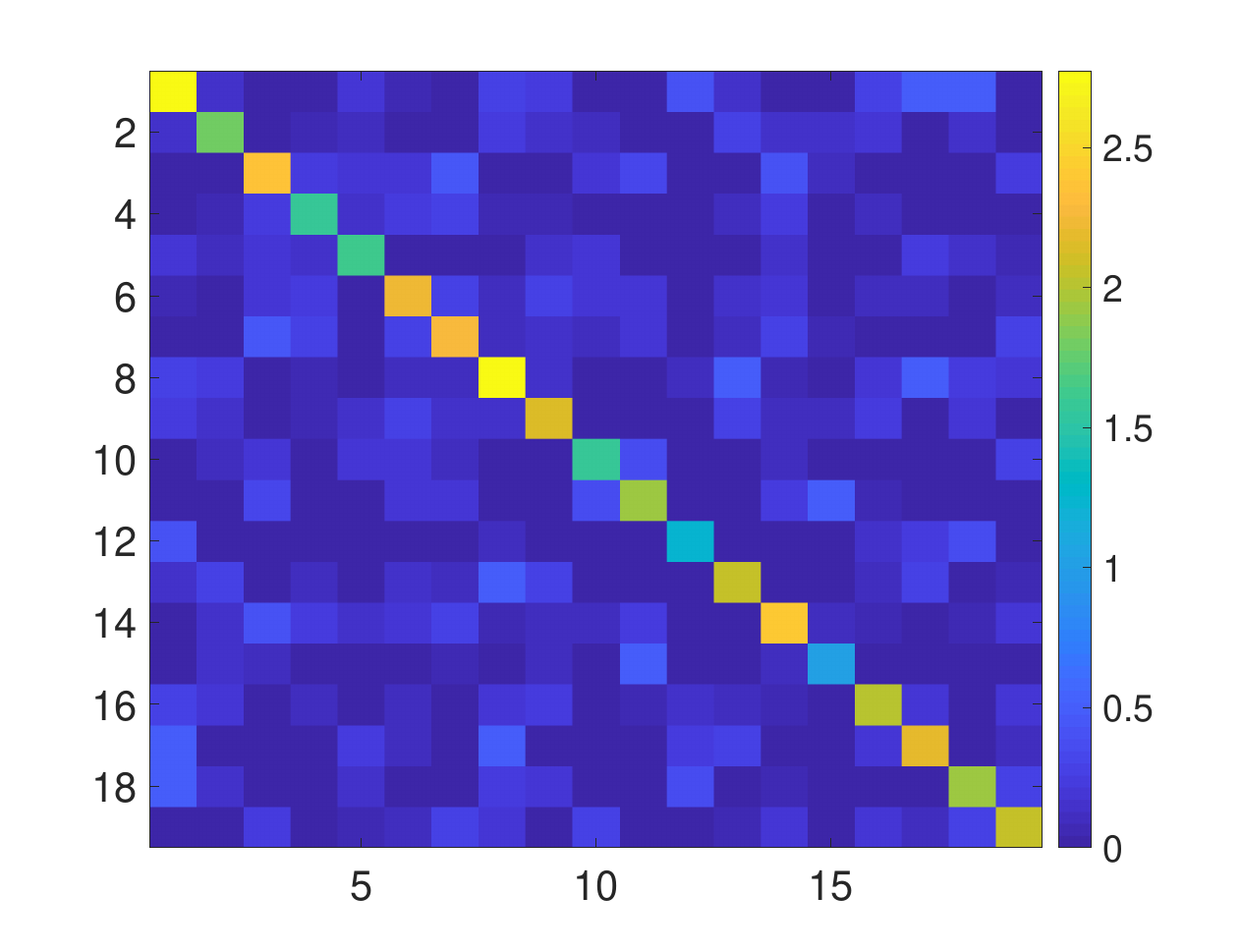}
			\label{f3.5}}\hfil
		\subfloat[$\text{Groundtruth}\quad\tilde{\mathbf{L}}_\mathit{B}^{-}$]{\includegraphics[width=0.25\columnwidth]{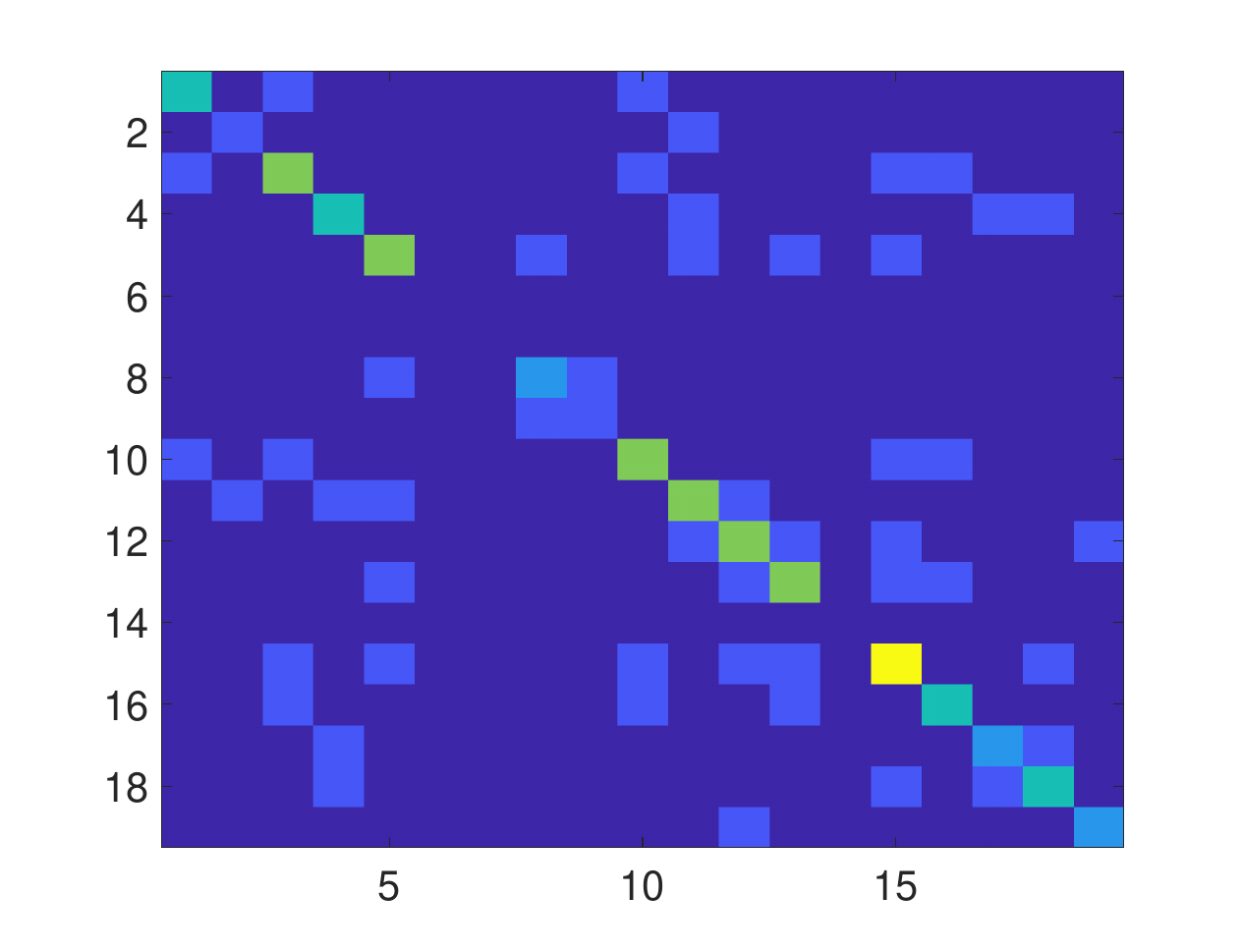}
			\label{f3.6}}
		\subfloat[$\text{SGL-HNLR}\quad\hat{\mathbf{L}}_\mathit{B}^{-}$]{\includegraphics[width=0.25\columnwidth]{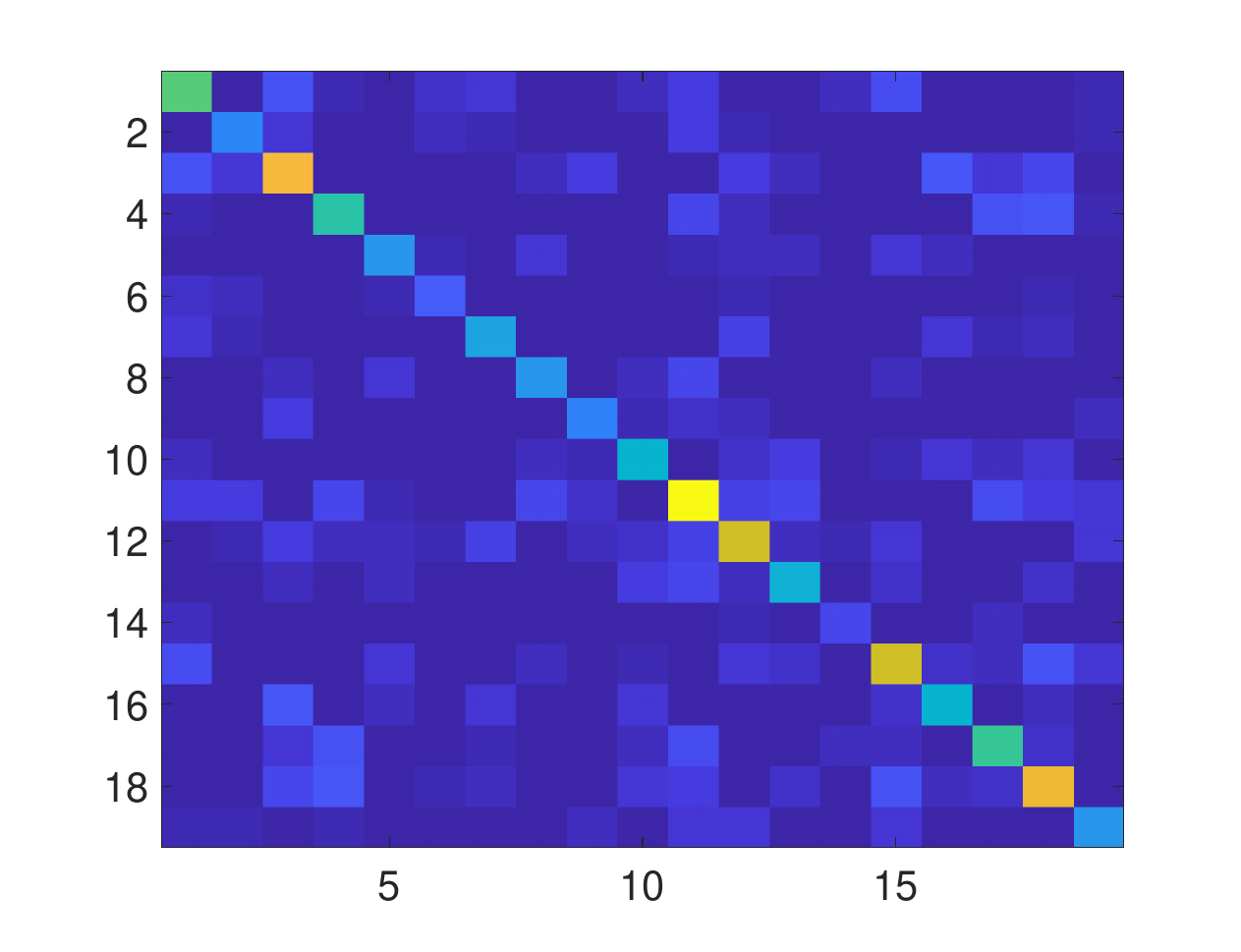}
			\label{f3.7}}
		\subfloat[$\text{SGL-HNCS}\quad\hat{\mathbf{L}}_\mathit{B}^{-}$]{\includegraphics[width=0.25\columnwidth]{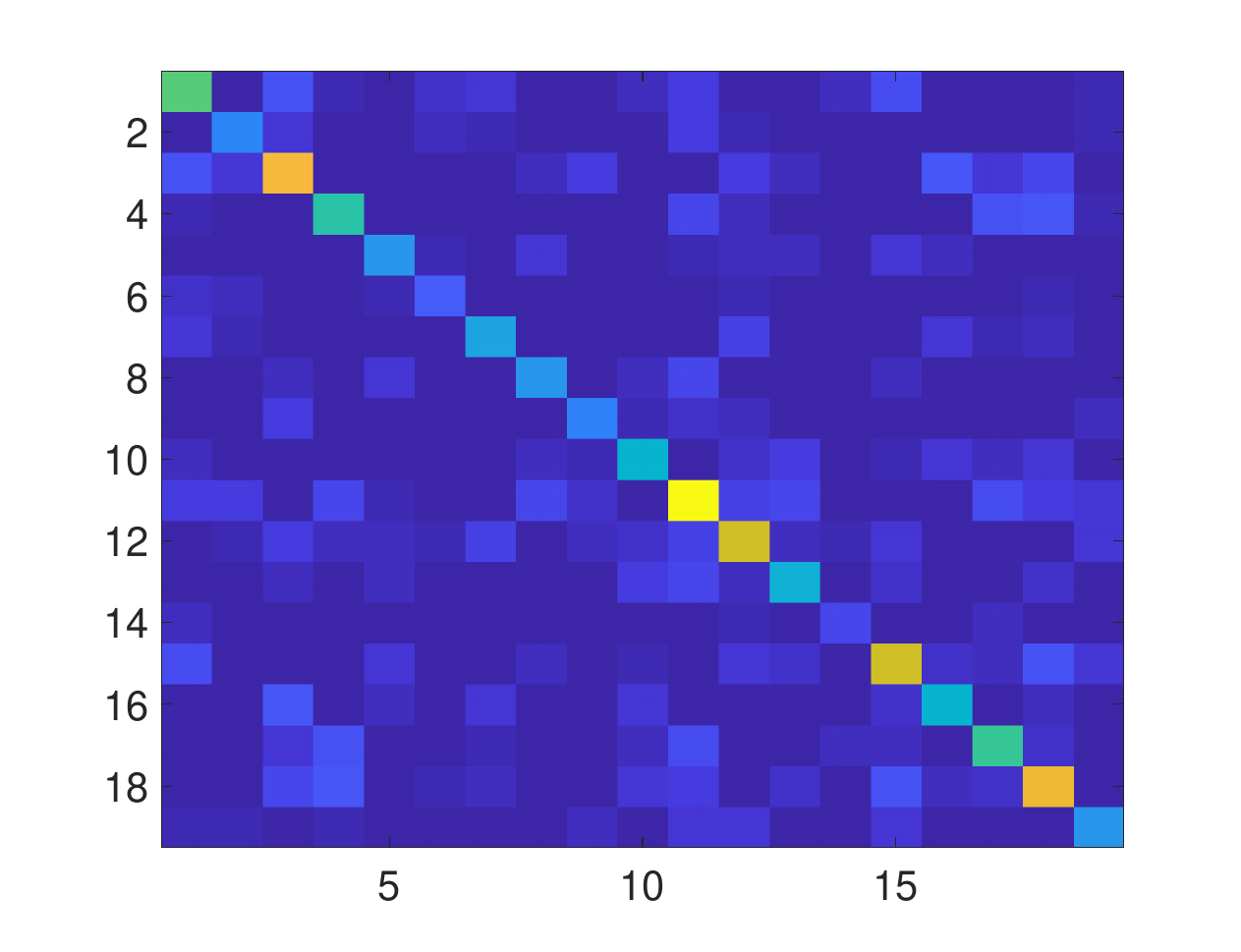}
			\label{f3.8}}
		\subfloat[$\text{scSGL}\quad\hat{\mathbf{L}}_\mathit{B}^{-}$]{\includegraphics[width=0.25\columnwidth]{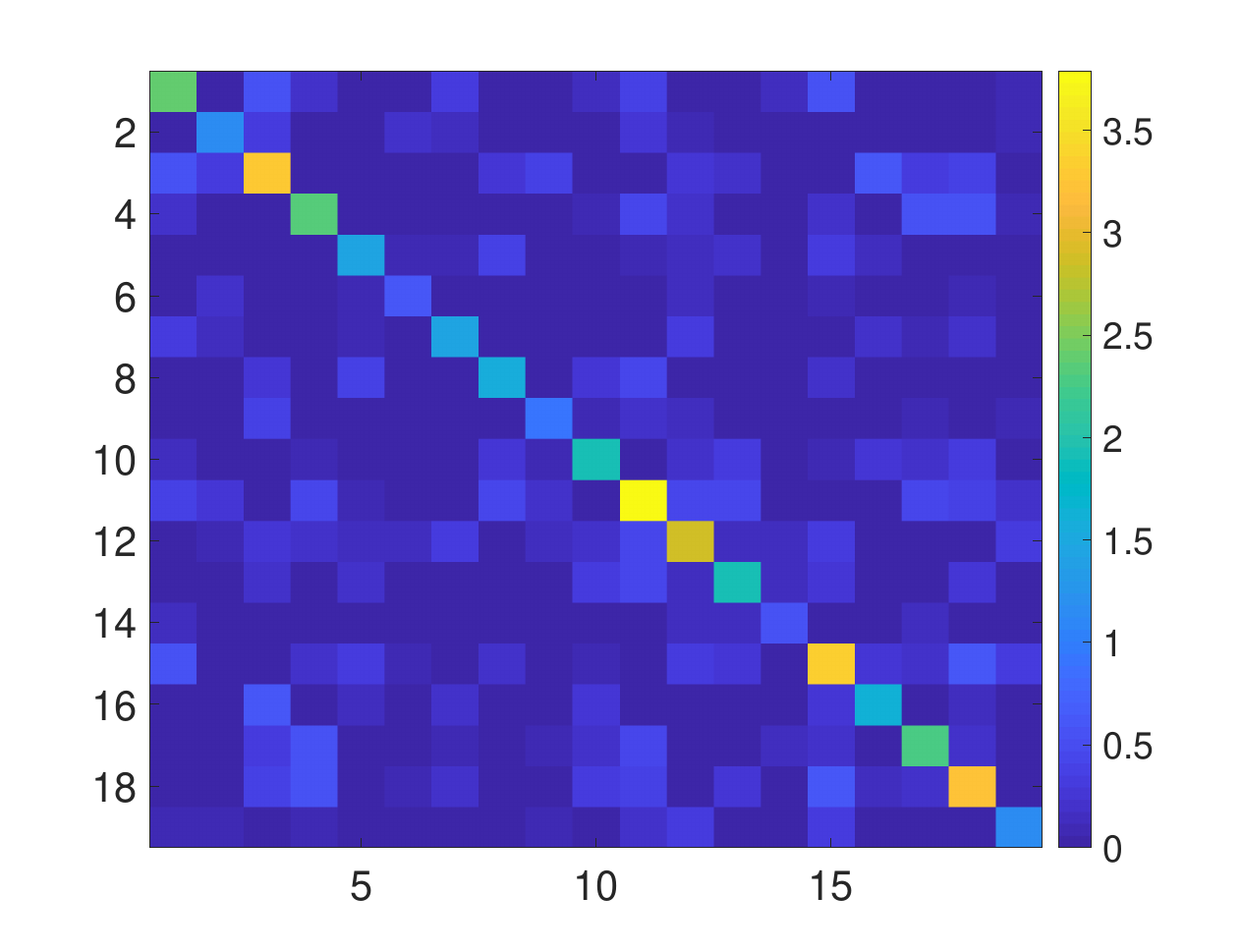}
			\label{f3.9}}
		\caption{The comparison of groundtruth and learned Laplacian matrices. The columns from the left to the right are the groundtruth Laplacians, the Laplacians learned by $\text{SGL-HNLR}$, the Laplacians learned by SGL-HNCS and the Laplacians learned by $\text{scSGL}$. The rows from the top to the bottom are the Laplacians for $\mathcal{G}^+$ and $\mathcal{G}^-$, respectively.  }
		\label{figr3}
	\end{figure*}
	
	\begin{figure*}[!t]
		\centering
		\subfloat[]{\includegraphics[width=0.48\columnwidth]{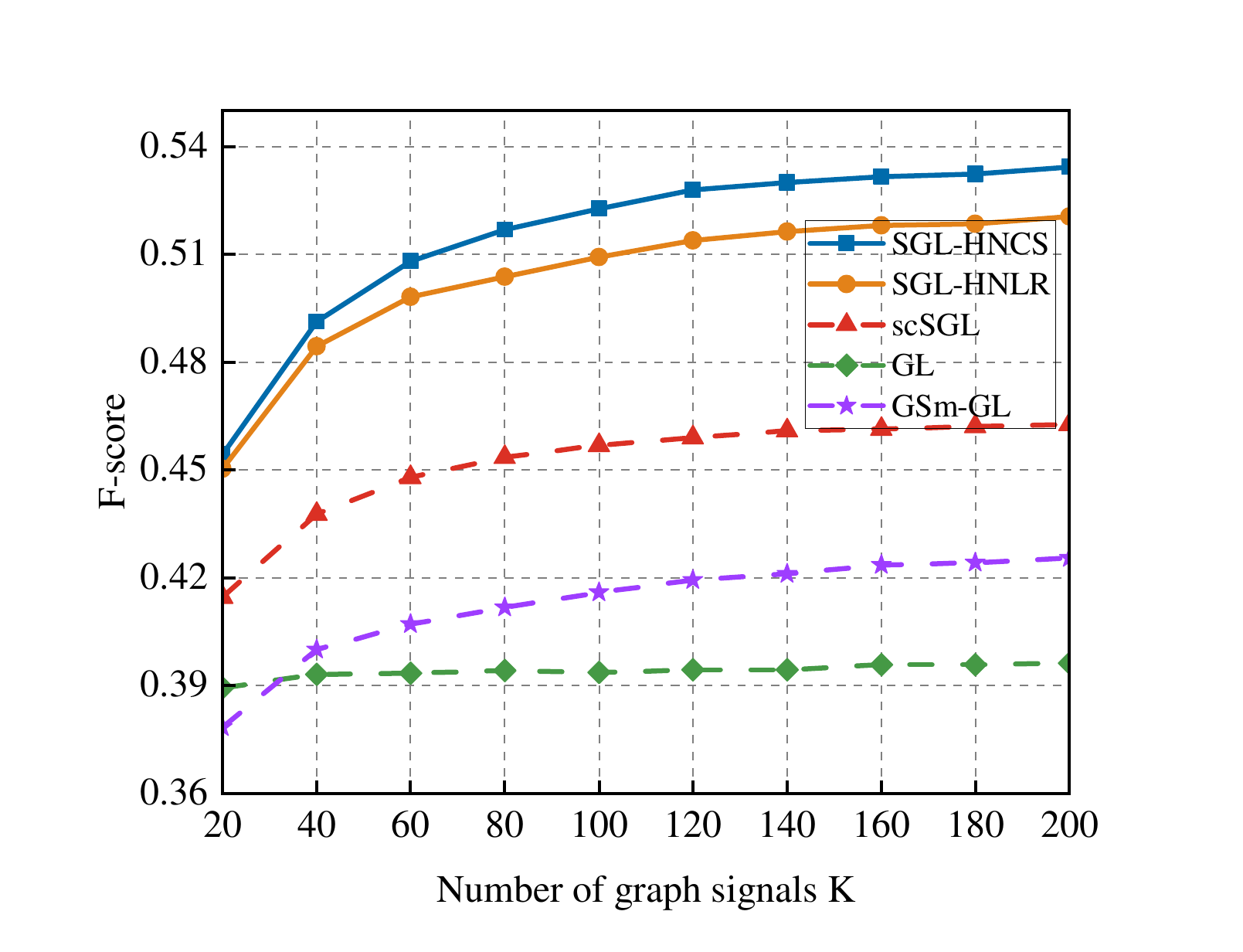}
			\label{f4.1}}
		\hfil
		\subfloat[]{\includegraphics[width=0.48\columnwidth]{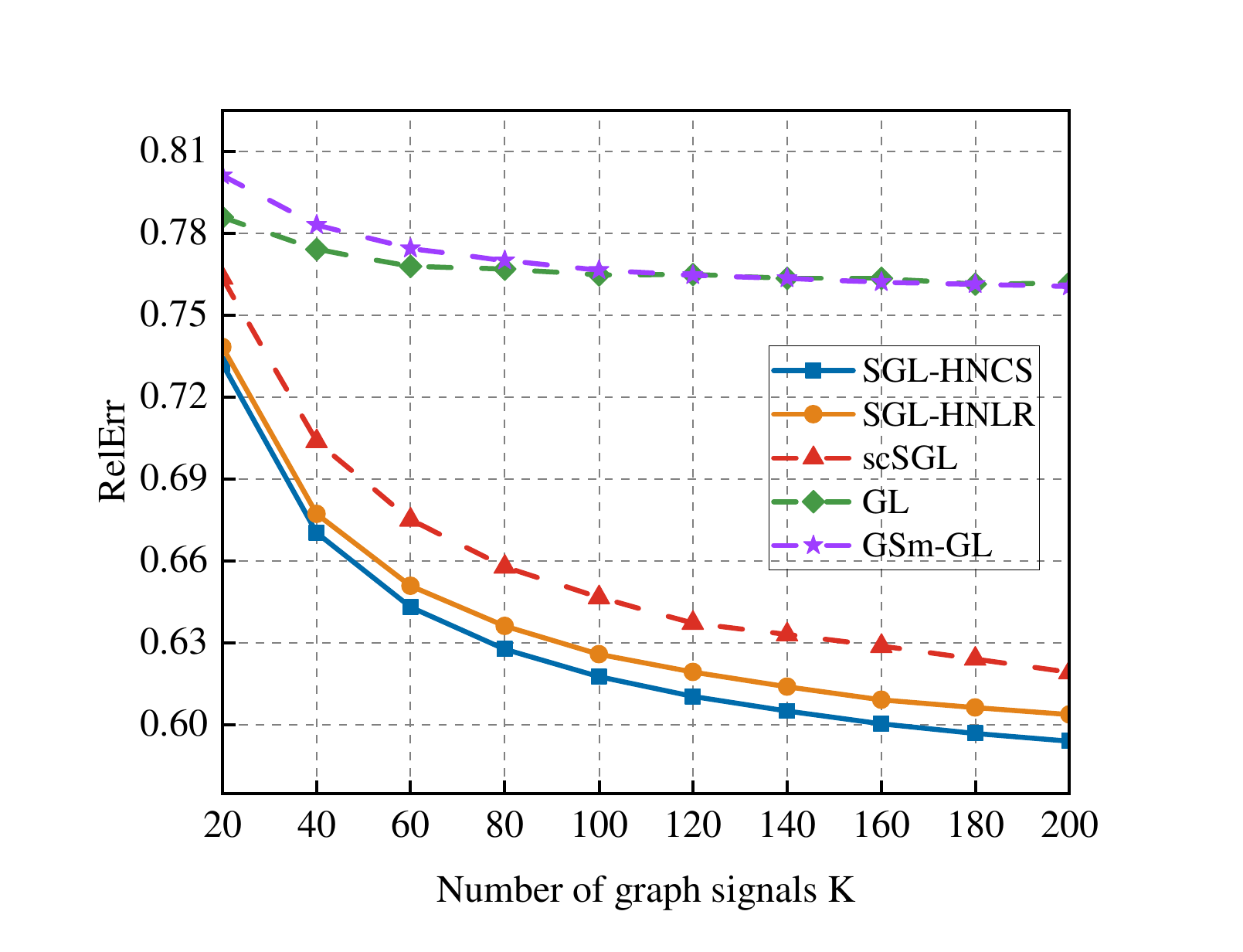}
			\label{f4.2}}
		\caption{Numerical validation of the proposed method under ER graph model with fixed $H=2$. (a) \text{F-score} of the recovered graphs for several methods as the number of graph signals increases. (b) \text{RelErr} of the recovered graphs as the number of graph signals increases for different methods. }
		\label{figr4}
	\end{figure*}
	\begin{figure*}[!t]
		\centering
		\subfloat[]{\includegraphics[width=0.48\columnwidth]{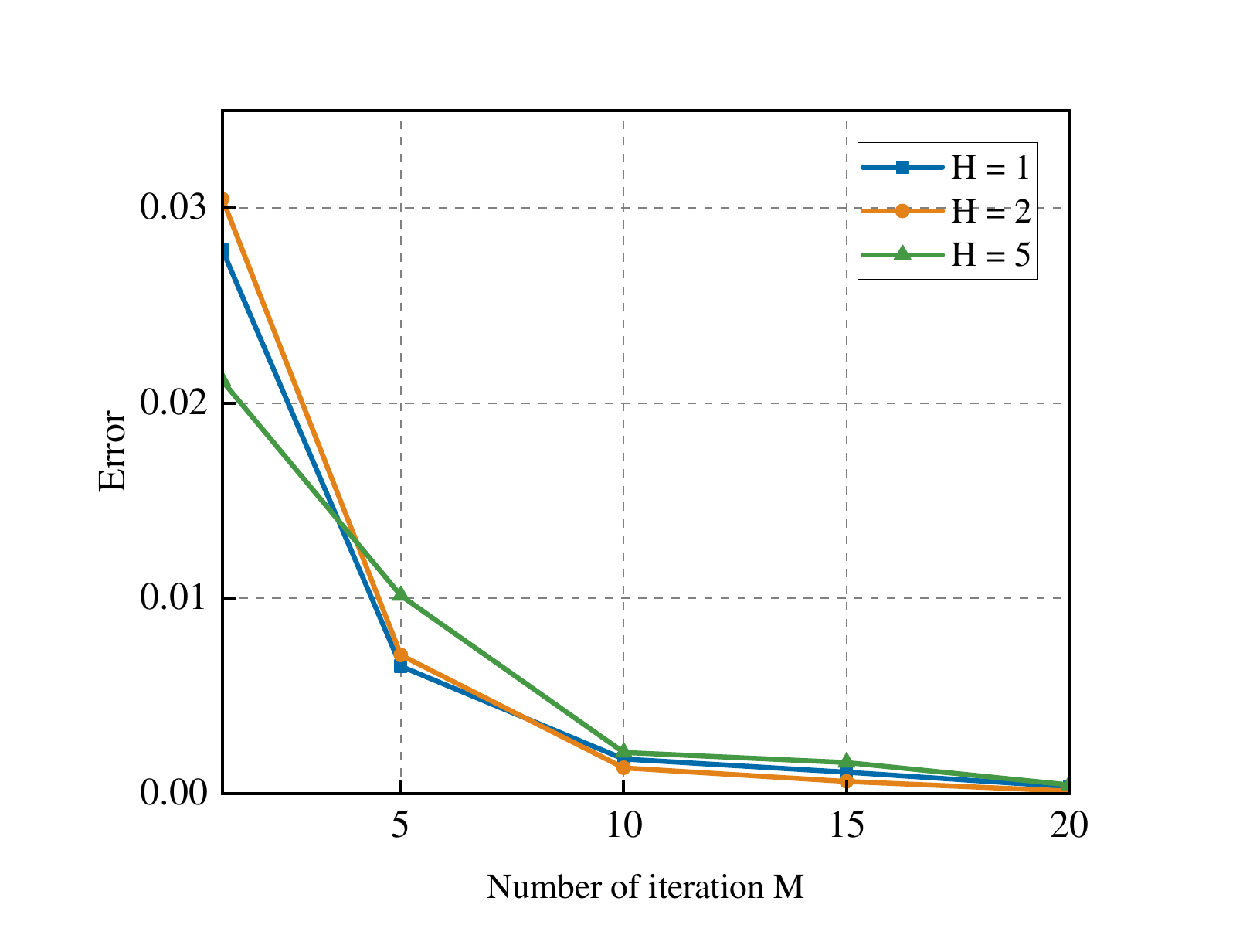}
			\label{conver1}}
		\hfil
		\subfloat[]{\includegraphics[width=0.48\columnwidth]{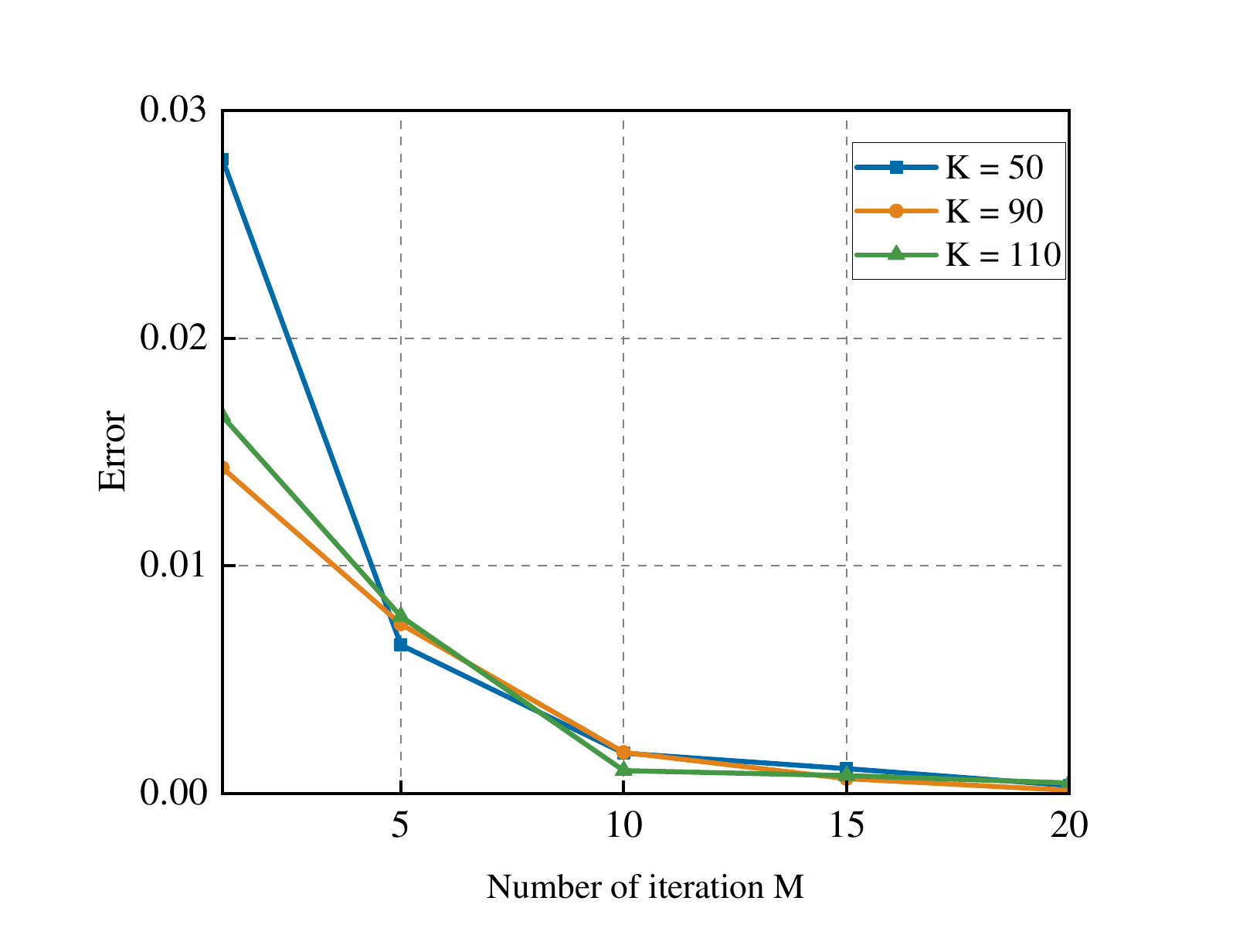}
			\label{conver2}}
		\caption{Results of convergence. (a) the Error metric for different numbers of hidden nodes $H$ as iteration $M$ increases. (b) the Error metric for different numbers of graph signals $K$ as iteration $M$ increases. }
		\label{conver}
	\end{figure*}
In particular, a signed graph with one hidden node is plotted in Fig. \ref{graph}. The visual comparisons are shown in Fig. \ref{figr3}, where the first column corresponds to the groundtruth Laplacian matrices, the second, third and fourth columns present the estimates obtained with $\text{SGL-HNLR}$, $\text{SGL-HNCS}$ and $\text{scSGL}$, respectively. The first row represents the graph Laplacians for $\mathcal{G}^+$, and the second row for $\mathcal{G}^-$. Obviously, the Laplacian matrices learned by $\text{scSGL}$ deviate more from the groundtruth than those learned by our proposed method.
	
\subsubsection{{Number of Graph Signals}}
{The impact of varying the number of graph signals on the performance of different methods is assessed. Specifically, $K$ graph signals are generated with $K$ ranging from 20 to 200, while keeping the number of hidden nodes fixed at $H = 2$.
	
Fig. \ref{figr4} shows the performance comparisons across different values of $K$. The $\text{F-score}$ and $\text{RelErr}$ are plotted in Fig. \ref{f4.1} and Fig. \ref{f4.2}, respectively. As expected, all methods demonstrate improved performance with an increasing number of graph signals. In particular, the existing graph learning methods $\text{GL}$ and \text{GSm-GL} present worse performance than the signed graph learning methods, primarily because they do not consider the positivity and negativity of edge weights. Consistent with the first experiment, $\text{scSGL}$ performs worse than $\text{SGL-HNLR}$ and $\text{SGL-HNCS}$.
\begin{table*}[!t]
		\caption{ Performance of the methods $\text{SGL-HNLR}$, $\text{SGL-HNCS}$, \text{scSGL}, \text{GL}  and $\text{GSm-GL}$ for learning signed graph with hidden nodes on the Wiki and Epinions datasets, respectively. }
		\label{analysis}
		\centering
		\setlength
		\tabcolsep{12pt}
		\vspace{8pt}
		\renewcommand
		\arraystretch{1.4}
		\begin{tabular}{ c| c | c| c| c |c }
			\hline\hline                          
			\textbf{Datasets} &\textbf{Method} & \textbf{F-score} & \textbf{Precision} & \textbf{Recall}& \textbf{NMI}\\
			\hline\hline
			\multirow{6}{*}{}&\text{SGL-HNCS} & 0.5234 & 0.8940	& 0.4905 & 0.4811\\\cline{2-6}
			\multirow{6}{*}{\text{Wiki }}&\text{SGL-HNLR} & 0.5217 & 0.8866 & 0.4849 &	0.4768\\\cline{2-6}
			\multirow{6}{*}{}&\text{scSGL} \cite{kara18} & 0.2284 &	0.5872& 0.5264 &  0.2038\\\cline{2-6}
			\multirow{6}{*}{}&\text{GL \cite{15}} & 0.0902 &	0.7094 & 0.0482 & 0.0633\\\cline{2-6}
			\multirow{6}{*}{}&\text{GSm-GL \cite{Bu27}} & 0.0879 &	0.7109 & 0.0469 & 0.0618\\\cline{2-6}
			\Xhline{1px}
			
			\multirow{6}{*}{Epinions}&\text{SGL-HNCS} & 0.4808 & 0.8745	& 0.4836 & 0.4353\\\cline{2-6}
			\multirow{6}{*}{}
			&\text{SGL-HNLR} & 0.4792 & 0.8945 & 0.4721&	0.4328\\\cline{2-6}
			\multirow{6}{*}{}&\text{scSGL} \cite{kara18} & 0.1989 &	0.3571 & 0.6742 &  0.1618\\\cline{2-6}
			\multirow{6}{*}{}&\text{GL} \cite{15} & 0.1264 &	0.4887 & 0.1402 & 0.0561\\\cline{2-6}
			\multirow{6}{*}{}&\text{GSm-GL} \cite{Bu27} & 0.1111 &	0.5464 & 0.0992 & 0.0537\\\cline{2-6}
			\hline
		\end{tabular}
		\label{tab3}
\end{table*}

\begin{figure*}[!t]
		\centering
		\centering
		\subfloat[]{\includegraphics[width=0.48\columnwidth]{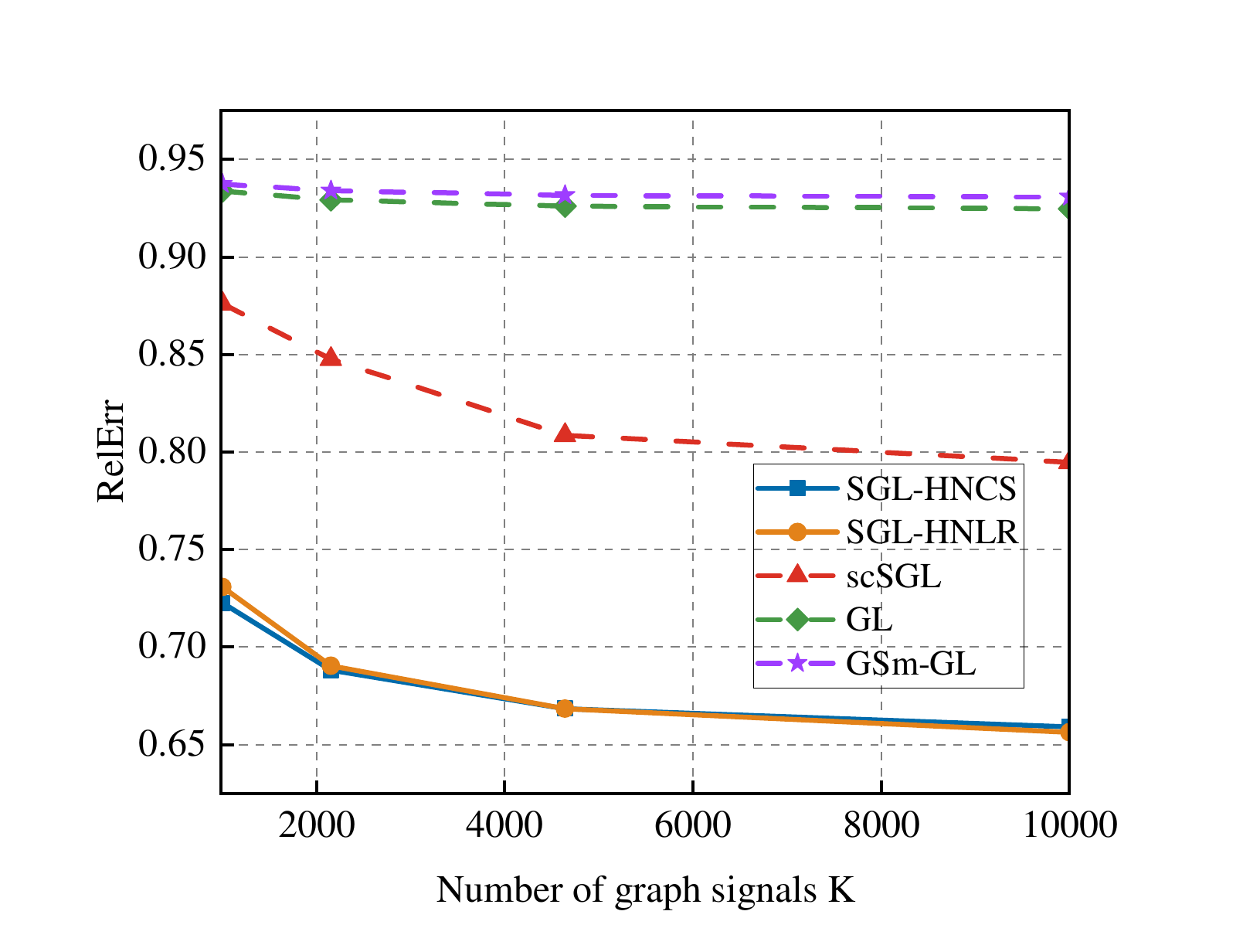}
			\label{f5.1}}
		\hfil
		\subfloat[]{\includegraphics[width=0.48\columnwidth]{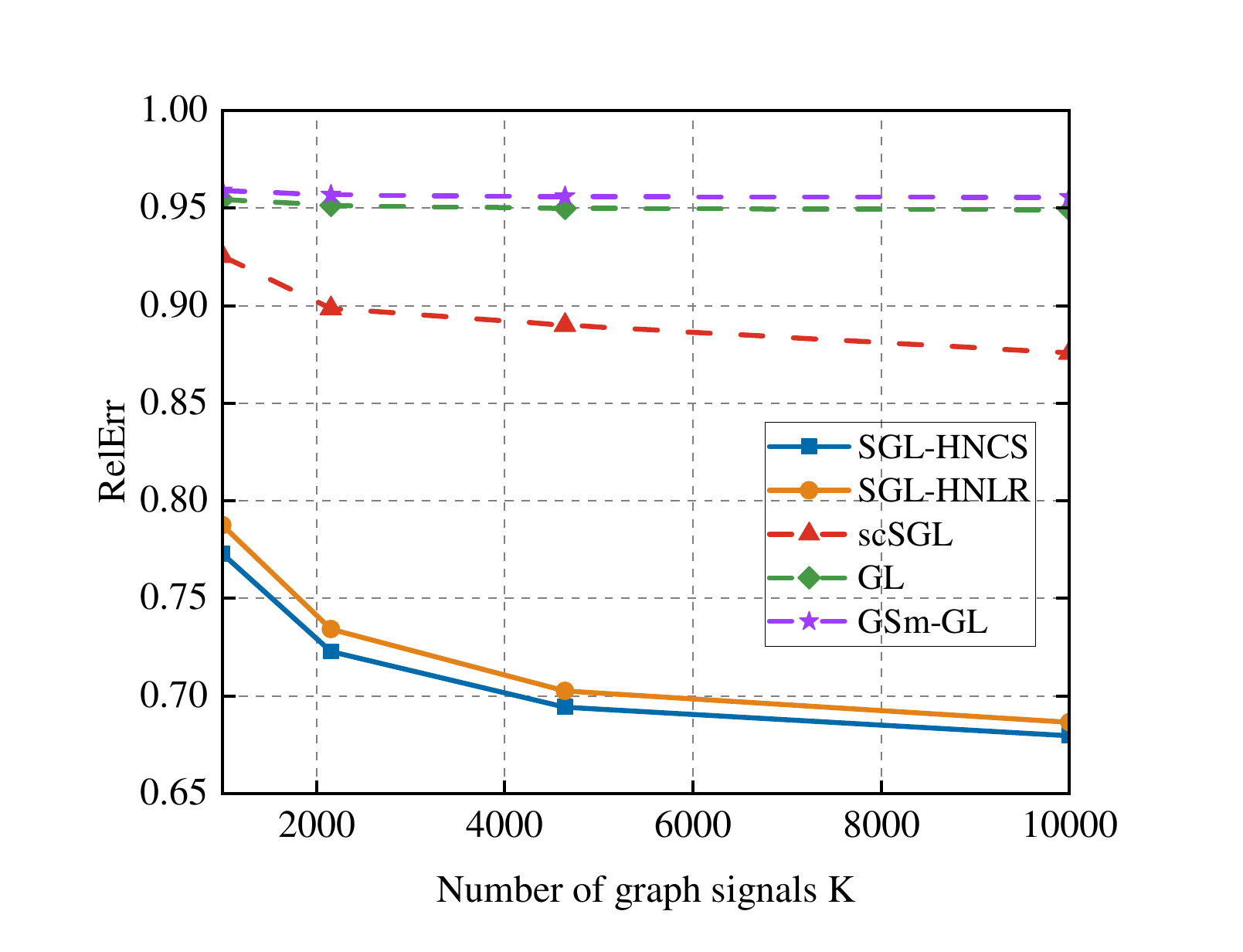}
			\label{f5.2}}
		\caption{RelErr performance of the methods in learning signed graph from two real-world datasets.  The left panel shows the performance for Wiki dataset and the right one shows the performance for Epinions dataset.}
		\label{f5}
\end{figure*} 
	
\subsubsection{Convergence}
Finally, the convergence results of the proposed algorithm with respect to the variables, i.e., the number of hidden nodes $H$ and the number of graph signals $K$, are shown in Fig. \ref{conver1} and Fig. \ref{conver2}. In each figure, the y-axis denotes the relative error of the objective function between two consecutive iterations, i.e., 
$\text{Error} = \left|F_{val}^{(M)}-F_{val}^{(M-1)}\right| \big/ \left|F_{val}^{(M-1)}\right|$, where $F_{val}$ represents the objective function value of (\ref{ex7}). Notably, the results in Fig. \ref{conver} are obtained from a single run of the algorithm. As depicted in Fig. \ref{conver}, the \text{Error} metric decreases sharply within 10 iterations, after which it stabilizes with further iterations. This indicates that the proposed $\text{SGL-HNCS}$ algorithm converges sufficiently.

\subsection{Results on Real-world Data}\label{sec6.3}
This subsection presents the experimental results of our method on two real-world datasets. We use the Wiki dataset \cite{les31} and the Epinions dataset \cite{mas32} to learn the signed graph with hidden nodes. The Wiki dataset is obtained from the Wikipedia site, which describes votes ``for'' (positive) and ``against'' (negative) the other in elections. The Epinions dataset is a ``who trusts whom'' online social network generated from the Epinions site, where one user can either ``trust'' (positive) or ``distrust'' (negative) another. Both datasets contain 401 nodes, but only 396 users are assumed to be observed for inferring the best-represented signed graph from incomplete signals. Thus, the goal is to infer the interactions between these 396 users for the Wiki and Epinions datasets.

The results are shown in Fig. \ref{f5}, where the RelErr decreases with growing the number of samples. This figure demonstrates the efficiency of the proposed method in estimating the graph Laplacian matrix, especially at higher sampling rates. Table \ref{tab3} presents the best $\text{F-score}$, $\text{Precision}$, $\text{Recall}$ and $\text{NMI}$ scores obtained by different methods. Overall, quantitative comparisons indicate that $\text{SGL-HNCS}$ and $\text{SGL-HNLR}$ achieve better performance than $\text{scSGL}$, $\text{GL}$ and $\text{GSm-GL}$. The results once again confirm that the graph learned by our proposed method are more consistent with the groundtruth graph.
\section{Conclusion}\label{sec7}
In this paper, we introduced a novel SGL-HNCS method for learning signed graph in the presence of hidden nodes.  
While previous research has conducted on learning unsigned graphs with hidden nodes, our study represented a pioneering extension to the field of signed graphs. Specifically, we addressed the influence of hidden nodes through a matrix block decomposition approach, predicated on the assumption of signal smoothness in signed graphs. Additionally, we incorporated properties of the underlying graph structure, such as column-sparsity, to formulate the problem of learning signed graph with hidden nodes as an optimization problem. 
Furthermore, we proposed an algorithm to solve this optimization problem using the block coordinate descent (BCD) approach, and the proof of convergence for the proposed algorithm was provided. 
We conducted extensive simulations on synthetic data and real-world data, which demonstrated the effectiveness of the proposed method.
	
Several extensions of this formulation are planned for future work. First, the current formulation assumes that the signed graph is static and does not account for potential smooth changes in the graph structure over time. Second, the assumption regarding hidden nodes in our method is that these nodes are never observed. However, in some applications, the challenge lies in dealing with missing data, where nodes variables may not always be observed. Third, future research could also focus on extending our framework to learn balanced signed graphs with hidden nodes. This involves integrating balance constraints into the framework to improve its applicability in real-world scenarios.

\section*{Acknowledgments}
	This work is supported by the National Natural Science Foundation of China under Grant 62471147.
	\appendix
	\section{Proof of Lemma 1}
	\label{app1}
	\begin{proof}
		To prove Lemma \ref{lem1}, it is necessary to analyze how each term of the objective function in (\ref{ex10}) changes with respect to the input variables in the set $\mathcal{F}$. 
		
		For function $f(\boldsymbol{\ell}^+, \boldsymbol{\ell}^-)$, the two quadratic terms $\alpha_+\langle (2\mathbf{I}+\mathbf{S}^\top\mathbf{S})\boldsymbol{\ell}^+,\boldsymbol{\ell}^+\rangle$ and $\alpha_-\langle (2\mathbf{I}+\mathbf{S}^\top\mathbf{S})\boldsymbol{\ell}^-,\boldsymbol{\ell}^-\rangle$ increase as the norms of $\boldsymbol{\ell}^+$ and $\boldsymbol{\ell}^-$ increase. This is because $(2\mathbf{I}+\mathbf{S}^\top\mathbf{S})$ is positive definite, and two parameters $\alpha_+$, $\alpha_-$ are positive.
		As described in Subsection \ref{up1}, the terms $\langle 2\boldsymbol{c}-\mathbf{S}^\top \boldsymbol{d},\boldsymbol{\ell}^+\rangle$ and $-\langle 2\boldsymbol{c}-\mathbf{S}^\top \boldsymbol{d},\boldsymbol{\ell}^-\rangle$ correspond to the terms $\mathrm{tr}(\hat{\mathbf{C}}_{B}\tilde{\mathbf{L}}_{B}^{+})$ and $-\mathrm{tr}(\hat{\mathbf{C}}_B\tilde{\mathbf{L}}_{B}^{-})$ in (\ref{ex8}). Specifically, as $\hat{\mathbf{C}}_{B}$, $\tilde{\mathbf{L}}_{B}^{+}$ and $\tilde{\mathbf{L}}_{B}^{-}$ are positive semi-definite matrices, it follows that $\langle 2\boldsymbol{c}-\mathbf{S}^\top \boldsymbol{d},\boldsymbol{\ell}^+\rangle\ge 0 $ and $-\langle 2\boldsymbol{c}-\mathbf{S}^\top \boldsymbol{d},\boldsymbol{\ell}^-\rangle\leq 0$. Obviously, the quadratic terms grow faster than the linear terms $\langle 2\boldsymbol{c}-\mathbf{S}^\top \boldsymbol{d},\boldsymbol{\ell}^+\rangle$ and $-\langle 2\boldsymbol{c}-\mathbf{S}^\top \boldsymbol{d},\boldsymbol{\ell}^-\rangle$. As the parameters $\eta{+}$ and $\eta{-}$ are sufficiently large and positive, the growth rate of the logarithmic barrier terms $-\frac{1}{\eta_+}\log[\langle 2\boldsymbol{c}-\mathbf{S}^\top \boldsymbol{d}, \boldsymbol{\ell}^+\rangle+2(\mathbf{1}^\top\boldsymbol{p}^+) +\mathbf{1}^\top\boldsymbol{r}^+]$ and $-\frac{1}{\eta_-}\log[\langle 2\boldsymbol{c}-\mathbf{S}^\top \boldsymbol{d}, \boldsymbol{\ell}^-\rangle+2(\mathbf{1}^\top\boldsymbol{p}^-) +\mathbf{1}^\top\boldsymbol{r}^-]$ becomes slow as variables $(\boldsymbol{\ell}^+,\boldsymbol{\ell}^-)\rightarrow\infty$. Therefore, the logarithmic barrier terms have little impact on the overall behavior of $f(\boldsymbol{\ell}^+, \boldsymbol{\ell}^-)$. In summary, the growth of function $f(\boldsymbol{\ell}^+, \boldsymbol{\ell}^-)$ is dominated by the quadratic terms and we have $f(\boldsymbol{\ell}^+, \boldsymbol{\ell}^-)\rightarrow\infty$ as $(\boldsymbol{\ell}^+,\boldsymbol{\ell}^-)\rightarrow\infty$. The indicator functions $\imath_{\mathcal{S}}(\boldsymbol{\ell}^+)$ and $\imath_{\mathcal{S}}(\boldsymbol{\ell}^-)$ restrict 
		$\boldsymbol{\ell}^+$ and $\boldsymbol{\ell}^-$ to lie within the set 
		$\mathcal{S}$, but they do not influence the coercivity analysis. Therefore, the objective function 
		$h(\boldsymbol{\ell}^+,\boldsymbol{\ell}^-)$ is coercive, ensuring that
		$h(\boldsymbol{\ell}^+,\boldsymbol{\ell}^-)\rightarrow\infty$ as $(\boldsymbol{\ell}^+,\boldsymbol{\ell}^-)\rightarrow\infty$.
		
		Since $\imath_{\mathcal{D}}(\mathbf{v}^+,\mathbf{v}^-)$ only takes finite values when the constraints are satisfied, it does not affect the coercivity within the feasible set $\mathcal{F}$ and we only need to consider the coercivity of $h(\boldsymbol{\ell}^+,\boldsymbol{\ell}^-)$. Consequently, we conclude that $h(\boldsymbol{\ell}^+,\boldsymbol{\ell}^-)+\imath_{\mathcal{D}}(\mathbf{v}^+,\mathbf{v}^-)\rightarrow\infty$ when $\Vert (\mathbf{v}^+,\mathbf{v}^-,\boldsymbol{\ell}^+,\boldsymbol{\ell}^-) \Vert\rightarrow\infty$ and $(\mathbf{v}^+,\mathbf{v}^-,\boldsymbol{\ell}^+,\boldsymbol{\ell}^-)\in\mathcal{F}$. Therefore, Lemma \ref{lem1} holds.
	\end{proof}

	\section{Proof of Lemma 2}\label{app2}
	\begin{proof}
		For convenience, let $\boldsymbol{\ell}=\dbinom{\boldsymbol{\ell}^+}{\boldsymbol{\ell}^-}$ and $\mathbf{v}=\dbinom{\mathbf{v}^+}{\mathbf{v}^-}$, the optimization problem (\ref{ex10}) becomes 
		\begin{align}\label{admm}
			\min\limits_{\mathbf{v},\boldsymbol{\ell}}\quad
			& h(\boldsymbol{\ell})+\imath_{\mathcal{D}}(\mathbf{v})	\notag\\
			\mathrm{s.t.}\quad
			&\mathbf{Av}-\mathbf{B}\boldsymbol{\ell}=\mathbf{0},
		\end{align}		
		where $\mathbf{A}=\mathbf{I}$ and $\mathbf{B}=\mathbf{-I}$ are two coefficient matrices of variable $\mathbf{v}$ and $\boldsymbol{\ell}$, respectively. 
		
		For fixed $\boldsymbol{\ell}$, $\mathbf{Av}=\boldsymbol{\ell}$ has a unique solution due to $\mathbf{A}$ is a full rank matrix. Thus, ${\arg \min}_\mathbf{v}\{h(\boldsymbol{\ell})+\imath_{\mathcal{D}}(\mathbf{v}):\mathbf{Av}=\boldsymbol{\ell}\}$ has a unique minimizer. $\mathit{G}:=\text{Im}(\mathbf{B})\rightarrow\mathbb{R}^q$ defined by $\mathit{G}(\boldsymbol{\ell})\triangleq\operatorname {\arg \min}_\mathbf{v}\{h(\boldsymbol{\ell})+\imath_{\mathcal{D}}(\mathbf{v}):\mathbf{Av}=\boldsymbol{\ell}\}$ is a mapping. For $\forall t_1,t_2\in\mathbb{N}$, it holds that $\Vert \mathit{G}(\boldsymbol{\ell}^{(t_1)})-\mathit{G}(\boldsymbol{\ell}^{(t_2)}) \Vert=\Vert\mathbf{v}^{(t_1)}-\mathbf{v}^{(t_2)}\Vert\leq\bar M \Vert\mathbf{A}\mathbf{v}^{(t_1)}-\mathbf{A}\mathbf{v}^{(t_2)}\Vert$, where $\bar M$ is a constant with $\bar M\ge 1$. Based on the above analysis, $G$ is a Lipschitz continuous mapping. Similarly, for fixed $\mathbf{v}$, ${\arg \min}_{\boldsymbol{\ell}}\{h(\boldsymbol{\ell})+\imath_{\mathcal{D}}(\mathbf{v}):-\mathbf{B}\boldsymbol{\ell}=\mathbf{v}\}$ has a unique minimizer.  $\mathit{F}:=\text{Im}(\mathbf{A})\rightarrow\mathbb{R}^q$ defined by $\mathit{F(\mathbf{v})}\triangleq\operatorname {\arg \min}_{\boldsymbol{\ell}}\{h(\boldsymbol{\ell})+\imath_{\mathcal{D}}(\mathbf{v}):-\mathbf{B}\boldsymbol{\ell}=\mathbf{v}\}$ is a Lipschitz continuous mapping with the fact that $\Vert \mathit{F}(\mathbf{v}^{(t_1)})-\mathit{F}(\mathbf{v}^{(t_2)}) \Vert=\Vert\boldsymbol{\ell}^{(t_1)}-\boldsymbol{\ell}^{(t_2)}\Vert\leq\bar M\Vert\mathbf{-B}\boldsymbol{\ell}^{(t_1)}-(-\mathbf{B}\boldsymbol{\ell}^{(t_2)})\Vert$ for $\forall t_1, t_2\in\mathbb{N}$. Therefore, Lemma \ref{sub_Lc} holds.
	\end{proof}
	\section{Proof of Lemma 3}\label{app3}
	\begin{proof}
		 The indicator functions $\imath_{\mathcal{S}}(\boldsymbol{\ell}^+)$ and $\imath_{\mathcal{S}}(\boldsymbol{\ell}^-)$ are defined as $\imath_{\mathcal{S}}(\boldsymbol{\ell}^+)=\left\{\begin{aligned}
			0 & , & \text{if}\ \boldsymbol{\ell}^+\in\mathcal{S} \\	\infty & , & \text{otherwise}
		\end{aligned}\right.$ and $\imath_{\mathcal{S}}(\boldsymbol{\ell}^-)=\left\{\begin{aligned}
			0 & , & \text{if}\ \boldsymbol{\ell}^-\in\mathcal{S} \\
			\infty & , & \text{otherwise}
		\end{aligned}\right.$. To verify that the function $h(\boldsymbol{\ell}^+,\boldsymbol{\ell}^-)$ is Lipschitz differentiable, we focus on the cases where $\imath_{\mathcal{S}}(\boldsymbol{\ell}^+)$ and $\imath_{\mathcal{S}}(\boldsymbol{\ell}^-)$ are equal to zero. This is because the case where the indicator function takes the value of infinity is irrelevant to the proof. The proof consists of two main steps: first, we establish the differentiability of $h(\boldsymbol{\ell}^+,\boldsymbol{\ell}^-)$, and second, we demonstrate the Lipschitz continuity of its gradient. 

		For the function $h(\boldsymbol{\ell}^{+},\boldsymbol{\ell}^{-})$, the partial derivative with respect to  $\boldsymbol{\ell}^{+}$ and $\boldsymbol{\ell}^{-}$ are given by
		\begin{align}
			&\frac{\partial h(\boldsymbol{\ell}^+,\boldsymbol{\ell}^-)}{\partial \boldsymbol{\ell}^+}=(2\boldsymbol{c}-\mathbf{S}^\top \boldsymbol{d})+2\alpha_+(2\mathbf{I}+\mathbf{S}^\top\mathbf{S})\boldsymbol{\ell}^{+}-\frac{1}{\eta_+}\frac{2\boldsymbol{c}-\mathbf{S}^\top \boldsymbol{d}}{\langle 2\boldsymbol{c}-\mathbf{S}^\top \boldsymbol{d}, \boldsymbol{\ell}^+\rangle	+2(\mathbf{1}^\top\boldsymbol{p}^+) +\mathbf{1}^\top\boldsymbol{r}^+},\\
			&\frac{\partial h(\boldsymbol{\ell}^+,\boldsymbol{\ell}^-)}{\partial \boldsymbol{\ell}^-}=-(2\boldsymbol{c}-\mathbf{S}^\top \boldsymbol{d})+2\alpha_-(2\mathbf{I}+\mathbf{S}^\top\mathbf{S})\boldsymbol{\ell}^{-}-\frac{1}{\eta_-}\frac{2\boldsymbol{c}-\mathbf{S}^\top \boldsymbol{d}}{\langle 2\boldsymbol{c}-\mathbf{S}^\top \boldsymbol{d}, \boldsymbol{\ell}^-\rangle	+2(\mathbf{1}^\top\boldsymbol{p}^-) +\mathbf{1}^\top\boldsymbol{r}^-}.
		\end{align}		
Therefore, the function $h(\boldsymbol{\ell}^+,\boldsymbol{\ell}^-)$ is differentiable.

Next, we verify the gradient Lipschitz continuity of each term in the function $h(\boldsymbol{\ell}^+,\boldsymbol{\ell}^-)$. It is evident that both the linear term $\langle 2\boldsymbol{c}-\mathbf{S}^\top \boldsymbol{d},\boldsymbol{\ell}^+\rangle-\langle 2\boldsymbol{c}-\mathbf{S}^\top \boldsymbol{d},\boldsymbol{\ell}^-\rangle$ and the quadratic term $\alpha_+\langle (2\mathbf{I}+\mathbf{S}^\top\mathbf{S})\boldsymbol{\ell}^+,\boldsymbol{\ell}^+\rangle+\alpha_-\langle (2\mathbf{I}+\mathbf{S}^\top\mathbf{S})\boldsymbol{\ell}^-,\boldsymbol{\ell}^-\rangle$ in the function $h(\boldsymbol{\ell}^+,\boldsymbol{\ell}^-)$ are gradient Lipschitz continuous. Let $h_{log}(\boldsymbol{\ell}^+,\boldsymbol{\ell}^-)=-\frac{1}{\eta_+}\log_{}{[g(\boldsymbol{\ell}^+)]}-\frac{1}{\eta_-}\log_{}{[g(\boldsymbol{\ell}^-)]}$, where $g(\boldsymbol{\ell}^s)=\langle 2\boldsymbol{c}-\mathbf{S}^\top \boldsymbol{d}, \boldsymbol{\ell}^s\rangle	+2(\mathbf{1}^\top\boldsymbol{p}^s) +\mathbf{1}^\top\boldsymbol{r}^s,\forall s\in\{+,-\} $. For any $\boldsymbol{(\ell}_1^+,\boldsymbol{\ell}_1^-)$ and $(\boldsymbol{\ell}_2^+,\boldsymbol{\ell}_2^-)$, we have
\begin{align}
	    &\Big\Vert\Big(\frac{\partial[ h_{log}(\boldsymbol{\ell}_1^+,\boldsymbol{\ell}_1^-)]}{\partial\boldsymbol{\ell}_1^+},\frac{\partial[ h_{log}(\boldsymbol{\ell}_1^+,\boldsymbol{\ell}_1^-)]}{\partial\boldsymbol{\ell}_1^-}\Big)-\Big(\frac{\partial[ h_{log}(\boldsymbol{\ell}_2^+,\boldsymbol{\ell}_2^-)]}{\partial\boldsymbol{\ell}_2^+},\frac{\partial[ h_{log}(\boldsymbol{\ell}_2^+,\boldsymbol{\ell}_2^-)]}{\partial\boldsymbol{\ell}_2^-}\Big)\Big\Vert \notag\\ 
	    =&\Big\Vert\Big(\frac{1}{\eta_{+}}\frac{(\mathbf{S}^\top \boldsymbol{d}-2\boldsymbol{c})}{g(\boldsymbol{\ell}_1^+)},\frac{1}{\eta_{-}}\frac{(\mathbf{S}^\top \boldsymbol{d}-2\boldsymbol{c})}{g(\boldsymbol{\ell}_1^-)}\Big)-\Big(\frac{1}{\eta_{+}}\frac{(\mathbf{S}^\top \boldsymbol{d}-2\boldsymbol{c})}{g(\boldsymbol{\ell}_2^+)},\frac{1}{\eta_{-}}\frac{(\mathbf{S}^\top \boldsymbol{d}-2\boldsymbol{c})}{g(\boldsymbol{\ell}_2^-)}\Big)\Big\Vert\notag\\    
		=&\Big\Vert(2\boldsymbol{c}-\mathbf{S}^\top \boldsymbol{d})\Big\Vert\cdot \Big\Vert \Big(\frac{1}{\eta_+}\big(\frac{1}{g(\boldsymbol{\ell}_1^+)}-\frac{1} {g(\boldsymbol{\ell}_2^+)}\big),  \frac{1}{\eta_-}\big(\frac{1}{g(\boldsymbol{\ell}_1^-)}-\frac{1} {g(\boldsymbol{\ell}_2^-)}\big) \Big)\Big\Vert  \notag\\
		\leq & \bar{M} \Big\Vert\Big(\frac{1}{\eta_+}\frac{\langle 2\boldsymbol{c}-\mathbf{S}^\top \boldsymbol{d} ,\boldsymbol{\ell}_1^+-\boldsymbol{\ell}_2^+\rangle}{g(\boldsymbol{\ell}_1^+)g(\boldsymbol{\ell}_2^+)},\frac{1}{\eta_-}\frac{\langle 2\boldsymbol{c}-\mathbf{S}^\top \boldsymbol{d} ,\boldsymbol{\ell}_1^--\boldsymbol{\ell}_2^-\rangle}{g(\boldsymbol{\ell}_1^-)g(\boldsymbol{\ell}_2^-)}\Big)\Big\Vert\notag\\
		\leq & {\bar{M}^2}\big\Vert  \Big(\frac{1}{\eta_{+}(c_{+})^2}(\boldsymbol{\ell}_1^+-\boldsymbol{\ell}_2^+),\frac{1}{\eta_{-}{(c_{-})^2}} (\boldsymbol{\ell}_1^--\boldsymbol{\ell}_2^-)\Big)\big\Vert,	 		
\end{align}		
where $\bar{M}, c_{+}$ and $c_{-}$ are non-negative constants. The first inequality holds because $2\boldsymbol{c}-\mathbf{S}^\top \boldsymbol{d}$ is a finite vector, and we assume that $\Vert (2\boldsymbol{c}-\mathbf{S}^\top \boldsymbol{d})\Vert\leq\bar{M}$. The second inequality follows from the fact that $g(\boldsymbol{\ell}^s),\forall s\in\{+,-\}$, is a positive and bounded function. Therefore, there exists a Lipschitz constant $M'=\text{max}\Big(\frac{\bar{M}^2}{\eta_{+}(c_{+})^2},\frac{\bar{M}^2}{\eta_{-}(c_{-})^2}\Big)$ such that the log term in the function $h(\boldsymbol{\ell}^+,\boldsymbol{\ell}^-)$ is gradient Lipschitz continuous. 
	
	To this end, the function $h(\boldsymbol{\ell}^+,\boldsymbol{\ell}^-)$ is Lipschitz differentiable with constant $L_h=\text{max}\big(2\alpha_{+}\lambda_{max},\\2\alpha_{-}\lambda_{max}\big)+M'$ where $\lambda_{max}$ is the largest eigenvalue of $2\mathbf{I}+\mathbf{S}^\top\mathbf{S}$, and Lemma \ref{Lip} holds.		
\end{proof} 
\section{Proof of Lemma 4}\label{app4}

\begin{proof}
	To prove Lemma \ref{lem5}, we only need to ensure that the non-smooth parts of the function $\phi(\mathbf{z})$ are separable across the different blocks of variables, as suggested in \cite{hong}. Decomposing $\phi(\mathbf{z})$ as $\phi(\mathbf{z})=\phi_1(\mathbf{z})+\phi_2(\mathbf{z})$, with functions $\phi_1(\mathbf{z})$ and $\phi_2(\mathbf{z})$ being defined as 
	\begin{align}
		\phi_1(\mathbf{z})=&\phi_1(\tilde{\mathbf{L}}_{B}^{+},\tilde{\mathbf{L}}_{B}^{-},\tilde{\mathbf{P}}^{+},\tilde{\mathbf{P}}^{-},\mathbf{R}^{+},\mathbf{R}^{-})\notag\\
		=&\mathrm{tr}(\hat{\mathbf{C}}_{B}\tilde{\mathbf{L}}_{B}^{+})+2\mathrm{tr}(\tilde{\mathbf{P}}^{+})+\mathrm{tr}(\mathbf{R}^{+})-\mathrm{tr}(\hat{\mathbf{C}}_B\tilde{\mathbf{L}}_{B}^{-})\notag\\
		&-2\mathrm{tr}(\tilde{\mathbf{P}}^{-})-\mathrm{tr}(\mathbf{R}^{-})+\alpha_{+}\Vert\tilde{\mathbf{L}}_{B}^{+}\Vert_{F}^2+\alpha_{-}\Vert\tilde{\mathbf{L}}_{B}^{-}\Vert_{F}^2,\\
		\phi_2(\mathbf{z})=&\phi_2(\tilde{\mathbf{P}}^{+},\tilde{\mathbf{P}}^{-})=\sigma_{+}\Vert\tilde{\mathbf{P}}^{+}\Vert_{2,1}+\sigma_{-}\Vert\tilde{\mathbf{P}}^{-}\Vert_{2,1},
	\end{align}
	where $\phi_1(\mathbf{z})$ is a smooth function, while $\phi_2(\mathbf{z})$ is a non-smooth function. However, the non-smooth terms $\Vert\tilde{\mathbf{P}}^{+}\Vert_{2,1}$ and $\Vert\tilde{\mathbf{P}}^{-}\Vert_{2,1}$ in $\phi_2(\mathbf{z})$ only
	involve variables of the second block $\mathbf{z}_2$, it follows that $\phi(\mathbf{z})$ is regular at all feasible points in $\mathcal{Z}^*$. Therefore, Lemma \ref{lem5} holds.
\end{proof}

\bibliographystyle{unsrt}
\bibliography{reference}

\begin{thebibliography}{10}

\bibitem{cam1}
W.~Campbell, C.~Dagli, and C.~Weinstein.
\newblock Social network analysis with content and graphs.
\newblock {\em Lincoln Lab. J.}, 20(1):61--81, 2013.

\bibitem{sand2}
E.~D. Kolaczyk.
\newblock {\em Statistical analysis of network data: Methods and models}.
\newblock Springer, New York, 2009.

\bibitem{lie3}
R.~Li{\'e}geois, A.~Santos, V.~Matta, D.~Van De~Ville, and A.~H. Sayed.
\newblock Revisiting correlation-based functional connectivity and its
  relationship with structural connectivity.
\newblock {\em Netw. Neurosci.}, 4(4):1235--1251, 2020.

\bibitem{An4}
A.~Namaki, A.~Shirazi, R.~Raei, and G.~Jafari.
\newblock Network analysis of a financial market based on genuine correlation
  and threshold method.
\newblock {\em Physica A.}, 390(21):3835--3841, 2011.

\bibitem{Shuman5}
D.~I. Shuman, S.~K. Narang, P.~Frossard, A.~Ortega, and P.~Vandergheynst.
\newblock The emerging field of signal processing on graphs: {E}xtending
  high-dimensional data analysis to networks and other irregular domains.
\newblock {\em IEEE Signal Process. Mag.}, 30(3):83--98, 2013.

\bibitem{sand6}
A.~Sandryhaila and J.~M.~F. Moura.
\newblock Discrete signal processing on graphs.
\newblock {\em IEEE Trans. Signal Process.}, 61(7):1644--1656, 2013.

\bibitem{Marques7}
A.~G. Marques, N.~Kiyavash, J.~M.~F. Moura, D.~Van De~Ville, and R.~Willett.
\newblock Graph signal processing: Foundations and emerging directions [from
  the guest editors].
\newblock {\em IEEE Signal Process. Mag.}, 37(6):11--13, 2020.

\bibitem{leus2023}
G.~Leus, A.~G. Marques, J.~M.~F Moura, A.~Ortega, and D.~I. Shuman.
\newblock Graph signal processing: History, development, impact, and outlook.
\newblock {\em IEEE Signal Process. Mag.}, 40(4):49--60, 2023.

\bibitem{yan2022}
Y.~Yan, E.~E. Kuruoglu, and M.~A. Altinkaya.
\newblock Adaptive sign algorithm for graph signal processing.
\newblock {\em Signal Process.}, 200:108662, 2022.

\bibitem{liu2024}
W.~Liu, H.~Feng, F.~Ji, and B.~Hu.
\newblock Online signed sampling of bandlimited graph signals.
\newblock {\em IEEE Trans. Signal Inf. Process. Netw.}, 10:131--146, 2024.

\bibitem{gir2021}
J.~H. Giraldo, S.~Javed, M.~Sultana, S.~K. Jung, and T.~Bouwmans.
\newblock The emerging field of graph signal processing for moving object
  segmentation.
\newblock In {\em Proc. Int. Workshop Frontiers Comput. Vis.(IW-FCV)}, pages
  31--45, 2021.

\bibitem{dinesh}
C.~Dinesh, G.~Cheung, and I.~V. Baji{\'c}.
\newblock Point cloud denoising via feature graph {L}aplacian regularization.
\newblock {\em IEEE Trans. Image Process.}, 29:4143--4158, 2020.

\bibitem{song}
X.~Song, L.~Chai, and J.~Zhang.
\newblock Graph signal processing approach to {QSAR}/{QSPR} model learning of
  compounds.
\newblock {\em IEEE Trans. Pattern Anal. Mach. Intell.}, 44(4):1963--1973,
  2020.

\bibitem{jin}
J.~Jin, J.~Zhang, J.~Tang, S.~Liang, and Z.~Qu.
\newblock Spatio-temporal data mining with information integrity protection:
  Graph signal based air quality prediction.
\newblock In {\em Proc. IEEE Int. Conf. Acoust., Speech Signal Process.
  (ICASSP)}, pages 5190--5194, 2024.

\bibitem{li2025}
J.~Li, T.~Wan, and W.~Qiu.
\newblock Time-varying sea surface temperature reconstruction leveraging low
  rank and joint smoothness constraints.
\newblock {\em J. Electron. Inf. Techn.}, 47(3):1--9, 2025.

\bibitem{Pavez8}
E.~Pavez and A.~Ortega.
\newblock Generalized {L}aplacian precision matrix estimation for graph signal
  processing.
\newblock In {\em Proc. IEEE Int. Conf. Acoust., Speech Signal Process.
  (ICASSP)}, pages 6350--6354, 2016.

\bibitem{gian9}
G.~B. Giannakis, Y.~Shen, and G.~V. Karanikolas.
\newblock Topology identification and learning over graphs: Accounting for
  nonlinearities and dynamics.
\newblock {\em Proc. IEEE.}, 106(5):787--807, 2018.

\bibitem{Dong10}
X.~Dong, D.~Thanou, M.~Rabbat, and P.~Frossard.
\newblock Learning graphs from data: {A} signal representation perspective.
\newblock {\em IEEE Signal Process. Mag.}, 36(3):44--63, 2019.

\bibitem{Mate11}
G.~Mateos, S.~Segarra, A.~G. Marques, and A.~Ribeiro.
\newblock Connecting the dots: {I}dentifying network structure via graph signal
  processing.
\newblock {\em IEEE Signal Process. Mag.}, 36(3):16--43, 2019.

\bibitem{song2022}
Z.~Song, X.~Yang, Z.~Xu, and I.~King.
\newblock Graph-based semi-supervised learning: A comprehensive review.
\newblock {\em IEEE Trans. Neural Netw. Learn. Syst.}, 34(11):8174--8194, 2022.

\bibitem{egi12}
H.~E. Egilmez, E.~Pavez, and A.~Ortega.
\newblock Graph learning from data under {L}aplacian and structural
  constraints.
\newblock {\em IEEE J. Sel. Topics Signal Process.}, 11(6):825--841, 2017.

\bibitem{kum13}
S.~Kumar, J.~Ying, J.~V. de~Miranda~Cardoso, and D.~P. Palomar.
\newblock Structured graph learning via {L}aplacian spectral constraints.
\newblock {\em Adv. Neural. Inf. Process. Syst.}, 32:11647--11658, 2019.

\bibitem{jav}
A.~Javaheri, A.~Amini, F.~Marvasti, and D.~P. Palomar.
\newblock Learning spatio-temporal graphical models from incomplete
  observations.
\newblock {\em IEEE Trans. Signal Process.}, 72:1361--1374, 2024.

\bibitem{Vk14}
V.~Kalofolias.
\newblock How to learn a graph from smooth signals.
\newblock In {\em Proc. Int. Conf. Artif. Intel. Statist. J. Mach. Learn. Res.
  (PMLR)}, pages 920--929, 2016.

\bibitem{15}
X.~Dong, D.~Thanou, P.~Frossard, and P.~Vandergheynst.
\newblock Learning {L}aplacian matrix in smooth graph signal representations.
\newblock {\em IEEE Trans. Signal Process.}, 64(23):6160--6173, 2016.

\bibitem{baghe}
S.~Bagheri, G.~Cheung, T.~Eadie, and A.~Ortega.
\newblock Joint signal interpolation/time-varying graph estimation via
  smoothness and low-rank priors.
\newblock In {\em Proc. IEEE Int. Conf. Acoust., Speech Signal Process.
  (ICASSP)}, pages 9646--9650, 2024.

\bibitem{seg16}
S.~Segarra, A.~G. Marques, G.~Mateos, and A.~Ribeiro.
\newblock Network topology inference from spectral templates.
\newblock {\em IEEE Trans. Signal Inf. Process. Netw.}, 3(3):467--483, 2017.

\bibitem{dit}
T.~Dittrich and G.~Matz.
\newblock Signal processing on signed graphs: Fundamentals and potentials.
\newblock {\em IEEE Signal Process. Mag.}, 37(6):86--98, 2020.

\bibitem{matz2020}
G.~Matz and T.~Dittrich.
\newblock Learning signed graphs from data.
\newblock In {\em Proc. IEEE Int. Conf. Acoust., Speech Signal Process.
  (ICASSP)}, pages 5570--5574, 2020.

\bibitem{gir}
N.~Girdhar and K.~K. Bharadwaj.
\newblock Signed social networks: {A} survey.
\newblock In {\em Int. Conf. Adv. Comput. Data Sci. (ICACDS)}, pages 326--335,
  2017.

\bibitem{kara18}
A.~Karaaslanli, S.~Saha, S.~Aviyente, and T.~Maiti.
\newblock sc{SGL}: {K}ernelized signed graph learning for single-cell gene
  regulatory network inference.
\newblock {\em Bioinformatics}, 38(11):3011--3019, 2022.

\bibitem{Chan19}
V.~Chandrasekaran, P.~A. Parrilo, and A.~S. Willsky.
\newblock Latent variable graphical model selection via convex optimization.
\newblock In {\em Annu. Allerton Conf. Commun., Control, Comput.(Allerton)},
  pages 1610--1613, 2010.

\bibitem{Chang20}
A.~Chang, T.~Yao, and G.~I. Allen.
\newblock Graphical models and dynamic latent factors for modeling functional
  brain connectivity.
\newblock In {\em Proc. IEEE Data Sci. Wrksp. (DSW)}, pages 57--63, 2019.

\bibitem{Yang21}
X.~Yang, M.~Sheng, Y.~Yuan, and T.~Q.~S. Quek.
\newblock Network topology inference from heterogeneous incomplete graph
  signals.
\newblock {\em IEEE Trans. Signal Process.}, 69:314--327, 2021.

\bibitem{Ana22}
A.~Anandkumar, D.~Hsu, A.~Javanmard, and S.~Kakade.
\newblock Learning linear bayesian networks with latent variables.
\newblock In {\em Proc. Int. Conf. Mach. Learn. (ICML)}, pages 249--257, 2013.

\bibitem{Mei23}
J.~Mei and M.~F. Moura.
\newblock Silvar: Single index latent variable models.
\newblock {\em IEEE Trans. Signal Process.}, 66(11):2790--2803, 2018.

\bibitem{Bu26}
A.~Buciulea, S.~Rey, C.~Cabrera, and A.~G. Marques.
\newblock Network reconstruction from graph-stationary signals with hidden
  variables.
\newblock In {\em Asilomar Conf. Signals, Syst., Comput. (ACSSC)}, pages
  56--60, 2019.

\bibitem{Bu27}
A.~Buciulea, S.~Rey, and A.~G. Marques.
\newblock Learning graphs from smooth and graph-stationary signals with hidden
  variables.
\newblock {\em IEEE Trans. Signal Inf. Process. Netw.}, 8:273--287, 2022.

\bibitem{giraldo2022}
J.~H. Giraldo, A.~Mahmood, B.~Garcia-Garcia, D.~Thanou, and T.~Bouwmans.
\newblock Reconstruction of time-varying graph signals via {S}obolev
  smoothness.
\newblock {\em IEEE Trans. Signal Inf. Process. Netw.}, 8:201--214, 2022.

\bibitem{Ag24}
A.~G. Marques, S.~Segarra, G.~Leus, and A.~Ribeiro.
\newblock Stationary graph processes and spectral estimation.
\newblock {\em IEEE Trans. Signal Process.}, 65(22):5911--5926, 2017.

\bibitem{Per25}
N.~Perraudin and P.~Vandergheynst.
\newblock Stationary signal processing on graphs.
\newblock {\em IEEE Trans. Signal Process.}, 65(13):3462--3477, 2017.

\bibitem{Rey28}
S.~Rey, A.~Buciulea, M.~Navarro, S.~Segarra, and A.~G. Marques.
\newblock Joint inference of multiple graphs with hidden variables from
  stationary graph signals.
\newblock In {\em Proc. IEEE Int. Conf. Acoust., Speech Signal Process.
  (ICASSP)}, pages 5817--5821, 2022.

\bibitem{ye29}
R.~Ye, X.~Q. Jiang, H.~Feng, J.~Wang, R.~Qiu, and X.~Hou.
\newblock Time-varying graph learning from smooth and stationary graph signals
  with hidden nodes.
\newblock {\em EURASIP J. Adv. Signal Process.}, 2024(1):33, 2024.

\bibitem{tse}
P.~Tseng.
\newblock Convergence of a block coordinate descent method for
  nondifferentiable minimization.
\newblock {\em J. Optim. Theory Appl.}, 103(9):475--494, 2001.

\bibitem{cvx30}
M.~Grant and S.~Boyd.
\newblock {CVX}: Matlab software for disciplined convex programming, version
  2.1 beta.
\newblock \url{http://cvxr.com/cvx}, 2013.

\bibitem{sch}
H.~Scheel and S.~Scholtes.
\newblock Mathematical programs with complementarity constraints: Stationarity,
  optimality, and sensitivity.
\newblock {\em Math. Oper. Res.}, 25(1):1--22, 2000.

\bibitem{wanl}
Y.~Wang, W.~Yin, and J.~Zeng.
\newblock Global convergence of {ADMM} in nonconvex nonsmooth optimization.
\newblock {\em J. Sci. Comput.}, 78:29--63, 2019.

\bibitem{fu}
X.~Fu, K.~Huang, M.~Hong, N.~D. Sidiropoulos, and A.~M.~C. So.
\newblock Scalable and flexible multiview {{MAX-VAR}} canonical correlation
  analysis.
\newblock {\em IEEE Trans. Signal Process.}, 65(16):4150--4165, 2017.

\bibitem{erd}
P.~Erd6s and A.~R{\'e}nyi.
\newblock On the evolution of random graphs.
\newblock {\em Publ. Math. Inst. Hungar. Acad. Sci.}, 5(1):17--61, 1960.

\bibitem{les31}
J.~Leskovec, D.~Huttenlocher, and J.~Kleinberg.
\newblock Governance in social media: A case study of the {W}ikipedia promotion
  process.
\newblock In {\em Proc. Int. AAAI Conf. Web Soc. Media. (ICWSM)}, pages
  98--105, 2010.

\bibitem{mas32}
P.~Massa and P.~Avesani.
\newblock Controversial users demand local trust metrics: An experimental study
  on epinions. com community.
\newblock In {\em Proc. Int. AAAI Conf. Web Soc. Media. (ICWSM)}, pages
  121--126, 2005.

\bibitem{hong}
M.~Hong, M.~Razaviyayn, Z.~Luo, and J.~Pang.
\newblock A unified algorithmic framework for block-structured optimization
  involving big data: With applications in machine learning and signal
  processing.
\newblock {\em IEEE Signal Process. Mag.}, 33(1):57--77, 2015.

\end{thebibliography}
	
\end{document}